\newtheorem{theorem}{\textit{Theorem}}
\numberwithin{theorem}{section}
\newtheorem{lemma}{\textit{Lemma}}
\newtheorem{corollary}{\textit{Corollary}}
\numberwithin{corollary}{section}
\theoremstyle{definition}
\newtheorem{definition}{\textit{Definition}}
\numberwithin{definition}{section}
\newtheorem{example}{\textit{Example}}
\numberwithin{example}{section}
\newtheorem{remark}{\textit{Remark}}
\newcommand{\nn}{n}
\newcommand{\ns}{m}
\newcommand{\nk}{K}
\newcommand{\poet}{{\varepsilon}_{\mathrm{T}}}
\newcommand{\poe}{{\varepsilon}_{\mathrm{P}}}
\DeclareMathOperator*{\argmin}{arg\,min}
\begin{document}
%
% paper title
% Titles are generally capitalized except for words such as a, an, and, as,
% at, but, by, for, in, nor, of, on, or, the, to and up, which are usually
% not capitalized unless they are the first or last word of the title.
% Linebreaks \\ can be used within to get better formatting as desired.
% Do not put math or special symbols in the title.
\title{Learning Graphs from Linear Measurements:
Fundamental Trade-offs and Applications
}
% \title{Learning High-degree Graphs from Noisy Linear Measurements
% }
%
%
% author names and IEEE memberships
% note positions of commas and nonbreaking spaces ( ~ ) LaTeX will not break
% a structure at a ~ so this keeps an author's name from being broken across
% two lines.
% use \thanks{} to gain access to the first footnote area
% a separate \thanks must be used for each paragraph as LaTeX2e's \thanks
% was not built to handle multiple paragraphs
%

\author{Tongxin~Li, Lucien~Werner and Steven H. Low
\thanks{Li, Werner and Low are with the Computing + Mathematical Sciences Department, California Institute of Technology, Pasadena, CA 91125 USA (e-mails: \{tongxin, lwerner, slow\}@caltech.edu)}}

\maketitle

% As a general rule, do not put math, special symbols or citations
% in the abstract or keywords.
\begin{abstract}
We consider a specific graph learning task: reconstructing a symmetric matrix that represents an underlying graph using linear measurements. We present a sparsity characterization for distributions of random graphs (that are allowed to contain \textit{high-degree} nodes), based on which we study fundamental trade-offs between the number of measurements, the complexity of the graph class, and the probability of error. We first derive a necessary condition on the number of measurements. Then, by considering a three-stage recovery scheme, we give a sufficient condition for recovery. Furthermore, assuming the measurements are Gaussian IID, we prove upper and lower bounds on the (worst-case) sample complexity for both noisy and noiseless recovery. In the special cases of the uniform distribution on trees with $\nn$ nodes and the Erd\H{o}s-R\'{e}nyi $(\nn,p)$ class, the fundamental trade-offs are tight up to multiplicative factors with noiseless measurements. 
% Applying the Kirchhoff's matrix tree theorem, our results are extended to the scenario when part of the topology information is known a priori. 
In addition, for practical applications, we design and implement a polynomial-time (in $\nn$) algorithm based on the three-stage recovery scheme. Experiments show that the heuristic algorithm outperforms basis pursuit on star graphs. We apply the heuristic algorithm to learn admittance matrices in electric grids. Simulations for several canonical graph classes and IEEE power system test cases demonstrate the effectiveness and robustness of the proposed algorithm for parameter reconstruction.
\end{abstract}

% Note that keywords are not normally used for peerreview papers.
{\bf Keywords:}
Graph signal processing, sample complexity, network parameter reconstruction, information theory, sparse recovery

% For peer review papers, you can put extra information on the cover
% page as needed:
% \ifCLASSOPTIONpeerreview
% \begin{center} \bfseries EDICS Category: 3-BBND \end{center}
% \fi
%
% For peerreview papers, this IEEEtran command inserts a page break and
% creates the second title. It will be ignored for other modes.
\maketitle

\section{Introduction}
\label{sec:int}
\subsection{Background}
\label{sec:int.bac}
S{ymmetric}
matrices are ubiquitous in graphical models with examples such as the $(0,1)$ adjacency matrix and the (generalized) Laplacian of an undirected graph. A major challenge in graph learning is inferring graph parameters embedded in those graph-based matrices from historical data or real-time measurements. In contrast to traditional statistical inference methods~\cite{chow1968approximating,chow1973consistency,tan2010learning}, model-based graph learning, such as physically-motivated models and graph signal processing (GSP)~\cite{dong2019learning}, takes advantage of additional data structures offered freely by nature. Among different measurement models for graph learning, linear models have been used and analyzed widely for different tasks, \textit{e.g.,} linear structural
equation models (SEMs)~\cite{ghoshal2018learning,ghoshal2017learning}, linear graph measurements~\cite{ahn2012analyzing}, generalized linear cascade models~\cite{pouget2015inferring}, \textit{etc}.

Despite extra efforts required on data collection, processing and storage, model-based graph learning often guarantees provable sample complexity, which is often significantly lower than the empirical number of measurements needed with traditional inference methods. In many problem settings, having computationally efficient algorithms with low sample complexity is important. One reason for this is that the graph parameters may change in a short time-scale, making sample complexity a vital metric to guarantee that the learning can be accomplished with limited measurements. 
Indeed many applications, such as real-time optimal power flow~\cite{momoh1999review,low2014convex,tang2017real}, real-time contingency analysis~\cite{mittal2011real} and frequency control~\cite{horta2015frequency} in power systems \textit{etc}., require data about the network that are time-varying.
For example, the generations or net loads  may change rapidly due to the proliferation of distributed energy resources. The topology and line parameters of the grid may be reconfigured to mitigate
cascading failure~\cite{guo2018failure}. Line switching has changed the traditional idea of a power network with a fixed topology, enabling power flow control by switching lines~\cite{zhao2013vulnerability}, \textit{etc.}
Hence analyzing fundamental limits of parameter reconstruction and designing graph algorithms that are efficient in both computational and sample complexity are important. 

The number of measurements needed for reconstructing a graph Laplacian can be affected by various system parameters, such as data quality (distribution), physical laws, and graph structures. In particular, existing recovery algorithms often assume the graph to be recovered is in a specific class, \textit{e.g.,} trees~\cite{chow1968approximating}, sparse graphs~\cite{chepuri2017learning}, graphs with no high-degree nodes~\cite{dasarathy2015sketching}, with notable exceptions such as~\cite{belilovsky2017learning}, which considers an empirical algorithm for topology identification. However, there is still a lack of understanding of sample complexity for learning general undirected graphs that may contain high-degree nodes, especially with measurements constrained naturally by a linear system.

In this work, we consider a general graph learning problem where the measurements and underlying matrix to be recovered can be represented as or approximated by a linear system. %We define a particular type of symmetric matrices that are commonly utilized in the studies of graphical models. 
A \textit{graph matrix} $\mathbf{Y}(G)$ with respect to an underlying graph $G$, which may have \textit{high-degree} nodes (see Definition~\ref{def:graph_matrix}) is defined as an $\nn\times\nn$ symmetric matrix with each nonzero $(i,j)$-th entry corresponding to an edge connecting node $i$ and node $j$ where $\nn\in\mathbbm{N}_+$ is the number of nodes of the underlying \textit{undirected} graph. %Furthermore, if two distinct nodes are connected, the corresponding entry is non-zero. If the two nodes are disconnected, then the entry takes zero. 
The diagonal entries can be arbitrary. The measurements are summarized as two $\ns\times \nn$ ($1\leq \ns\leq \nn$) real or complex matrices $\mathbf{A}$ and $\mathbf{B}$ satisfying
%\footnote{The matrices $\mathbf{Y}$, $\mathbf{A}$, $\mathbf{B}$ and $\mathbf{Z}$ are in either $\mathbbm{R}^{\nn\times \nn}$ or $\mathbbm{C}^{\nn\times \nn}$.}
\begin{align}
\label{eq:linear}
\mathbf{A} = \mathbf{B}\mathbf{Y}(G)+\mathbf{Z}
\end{align}
where $\mathbf{Z}$ denotes additive noise.

We focus on the following problems:
\begin{itemize}
	\item  {{\textit{Fundamental Trade-offs}.}} {What is the \textit{minimum number} $m$ of linear measurements required for reconstructing the \textit{symmetric} matrix $\mathbf{Y}(G)$?}
    {Is there an algorithm \textit{asymptotically achieving} recovery with the minimum number of measurements?} As a special case, can we characterize the sample complexity when the measurements are Gaussian IID\footnote{This means the entries of the matrix $\mathbf{B}$ are IID normally distributed.}? 
    \item {{\textit{Applications to Electrical Grids}.}} 
    Do the theoretical guarantees on sample complexity result in a practical algorithm (in terms of both sample and computational complexity) for recovering electric grid topology and parameters?
\end{itemize}

Some comments about the above model and the results in this paper are as follows.

% {\color{blue}
\begin{remark}
\label{remark:vec}
It has been noted that vectorization and standard compressed sensing techniques do not lead to straightforward results (see~\cite{dasarathy2015sketching} for detailed arguments about a similar linear system).
This issue is discussed extensively in Section~\ref{sec:lit_cs}. %, in comparison with related compressed sensing literature.
\end{remark}

\begin{remark}
The results in this paper do not assume low-degree nodes as most of existing results do, with notable exceptions such as~\cite{belilovsky2017learning} which gives empirical and data-based subroutines for topology identification.
\end{remark}

\subsection{Related Work}

\subsubsection{Graph Learning}
Algorithms for learning sparse graphical model structures have a rich tradition in the literature. For general  Markov random fields (MRFs), learning the underlying graph structures is known to be NP-hard~\cite{bogdanov2008complexity}. However, in the case when the underlying graph is a tree, the classical Chow-Liu algorithm~\cite{chow1968approximating} offers an efficient approach to structure estimation.
%In particular, , the Chow-Liu algorithm is a maximum-likelihood estimator using Kullback-Leibler divergence as the metric. 
%Finding the underlying tree turns out to be a problem of computing the maximum-spanning tree where each edge weight is the corresponding empirical mutual information computed from the given measurements. 
Recent results contribute to an extensive understanding of the Chow-Liu algorithm. The authors in~\cite{tan2010learning} analyzed the error exponent and showed experimental results for chain graphs and star graphs. For pairwise binary MRFs with bounded maximum degree,~\cite{santhanam2012information} provides sufficient conditions for correct graph selection. Similar achievability results for Ising models are in \cite{anandkumar2010high}. Model-based graph learning has been emerging recently and assuming the measurements form linear SEMs, the authors in~\cite{ghoshal2018learning,ghoshal2017learning} showed theoretical guarantees of the sample complexity for learning a directed acyclic graph (DAG) structure, under mild conditions on the class of graphs.

For converse, information-theoretic tools have been widely applied to derive fundamental limits for learning graph structures. For a Markov random
field with bounded maximum degree, necessary conditions on the number of samples for estimating the underlying graph structure were derived in~\cite{santhanam2012information} using Fano's inequality (see \cite{fano1961transmission}). For Ising models,~\cite{anandkumar2012high} combines Fano's inequality with the idea of \textit{typicality} to derive weak and strong converse.  Similar techniques have also been applied to Gaussian graphical models~\cite{anandkumar2011high} and  Bayesian networks~\cite{ghoshal2017information}. Fundamental limits for noisy compressed sensing  have been extensively studied in~\cite{aeron2010information} under an information-theoretic framework.

\subsubsection{System Identification in Power Systems}
Graph learning has been widely used in electric grids applications, such as state estimation~\cite{cavraro2017voltage,chen2015quickest} and topology identification~\cite{sharon2012topology,deka2017structure}. Most of the literature focuses on topology identification or change detection, but there is less work on joint topology and parameter reconstruction, with notable exceptions of~\cite{li2013blind,yuan2016inverse,yu2017patopa,park2018exact}. However, the linear system proposed in~\cite{yuan2016inverse} does not leverage the sparsity of the graph\footnote{With respect to sparsity, we consider not only graphs with bounded degrees, but a broader class of graphs which may contain high-degree nodes. Definition~\ref{def:sparse} gives a comprehensive characterization of sparsity.}. Thus, in the worst case, the matrix $\mathbf{B}$ needs to have full column rank, implying that $\ns=\Omega(\nn)$ measurements are necessary for recovery.

Moreover, there is little exploration on the fundamental performance limits (estimation error and sample complexity) on topology and parameter reconstruction of power networks, with the exception of~\cite{zhu2012sparse} where a sparsity condition was given for exact recovery of outage lines. Based on single-type measurements (either current or voltage), correlation analysis has been applied for topology identification~\cite{tan2011sample,liao2015distribution,bolognani2013identification}.  Approximating the measurements as normal distributed random variables, the authors of~\cite{sharon2012topology} proposed an approach for topology identification with limited measurements. A graphical learning-based approach can be found in~\cite{deka2016estimating}. Recently, data-driven methods were studied for parameter estimation~\cite{yu2017patopa}.
In~\cite{yuan2016inverse}, a similar linear system as~(\ref{2.3}) was used combined with regression to recover the symmetric graph parameters (which is the admittance matrix in the power network). 

\subsubsection{Compressed Sensing and Sketching}
\label{sec:lit_cs}
% {\color{blue}
It is well known that compressed sensing~(\cite{candes2005error,rudelson2008sparse}) techniques allow for recovery of a sparse matrix with a limited number of measurements in various applications such as medical imaging~\cite{lustig2008compressed}, wireless communication~\cite{li2012compressed}, channel estimation~\cite{berger2010application} and circuit design ~\cite{li2016fundamental},~\textit{etc}. For electricity grids, in~\cite{babakmehr2016compressive}, based on these techniques, experimental results have been given for topology recovery. However, nodal admittance matrices (generalized Laplacians) for power systems have two properties for which there are gaps in the sparse recovery literature: 1) the presence of high-degree nodes in a graph (corresponding to dense columns in its Laplacian) and 2) symmetry. 

Consider a vectorization of system (\ref{eq:linear}) using tensor product notation, with $\mathbf{a}:=\mathrm{vec}(\mathbf{A})$ and  $\mathbf{y}(G):=\mathrm{vec}(\mathbf{Y}(G))$. Then linear system (\ref{eq:linear}) is equivalent to
$
  \mathbf{a} = (\mathbf{I}\otimes\mathbf{B})\mathbf{y}(G) 
$
where $\mathrm{vec}(\cdot)$ produces a column vector by stacking the columns of the input matrix and $\mathbf{I}\otimes\mathbf{B}$ is the Kronecker product of an identity matrix $\mathbf{I}\in\mathbbm{R}^{\nn\times\nn}$ and $\mathbf{B}$. With the sensing matrix being a Kronecker product of two matrices, traditional compressed sensing analysis works for the case when $\mathbf{y}$ contains only $\mu=\Theta(1)$ non-zeros~\cite{duarte2011kronecker}. For instance, the authors of~\cite{jokar2009sparse} showed that the restricted isometry constant (see Section~\ref{sec:def_of_rip} for the definition), $\delta_{\mu}(\mathbf{I}\otimes\mathbf{B})$ is bounded from above by $\delta_{\mu}(\mathbf{B})$, the restricted isometry constant of $\mathbf{B}$. However, if a column (or row) of $\mathbf{Y(G)}$ is dense, classical restricted isometry-based approach cannot be applied straightforwardly. 

Another way of viewing it is that vectorizing $\mathbf{A}$ and $\mathbf{Y}(G)$ and constructing a sensing matrix $\mathbf{I}\otimes\mathbf{B}$ is equivalent to recovering each of the column  (or row) of $\mathbf{Y}(G)$ separately from $A_j = \mathbf{B} Y_j(G)$ for $j=1,\ldots,\nn$ where $A_j$'s and $Y_j(G)$'s are columns of $\mathbf{A}$ and $\mathbf{Y}(G)$. For a general ``sparse" graph $G$, such as a star graph, some of the columns (or rows) of the graph matrix $\mathbf{Y}(G)$ may be dense vectors consisting of many non-zeros. The results in~\cite{jokar2009sparse,duarte2011kronecker} give no guarantee for the recovery of the dense columns of $\mathbf{Y}(G)$ (correspondingly, the high-degree nodes in $G$), and thus they cannot be applied directly to the analysis of sample complexity. This statement is further validated in our experimental results shown in Figure~\ref{fig:ieee30} and Figure~\ref{fig:Gaussian}. 

The authors of~\cite{dasarathy2015sketching} considered the recovery of an unknown sparse matrix $\mathbf{M}\in\mathbbm{R}^{\nn\times\nn}$ (not necessarily symmetric) from an $\ns\times\ns$ matrix $\overline{\mathbf{A}}=\overline{\mathbf{B}}\mathbf{M}\overline{\mathbf{C}}^{T}$ where $\overline{\mathbf{B}}\in\mathbbm{R}^{\ns\times\nn}$ and $\overline{\mathbf{C}}\in\mathbbm{R}^{\ns\times\nn}$ with $\ns\ll\nn$. By adding a symmetry constraint to their recovery formulation, we obtain the following modified basis pursuit as a convex optimization:
\begin{align}
\label{eq:basis_pursuit_1}
\mathrm{minimize} \ &||\mathbf{Y}(G)||_{1}\\
\mathrm{subject}\ \mathrm{to}\  &\mathbf{B}\mathbf{Y}(G)= \mathbf{A},\\
\label{eq:basis_pursuit_2}
&\mathbf{Y}(G)\in\mathbbm{S}^{\nn\times\nn}
\end{align}
where $||\mathbf{Y}(G)||_{1} = ||\mathrm{vec}(\mathbf{Y}(G))||_{1}$ is the entry-wise $\ell_1$-norm of $\mathbf{Y}(G)$ and $\mathbbm{S}^{\nn\times\nn}$ denotes the set of all symmetric matrices in $\mathbbm{R}^{\nn\times\nn}$. However, the approach in~\cite{dasarathy2015sketching} does not carry through to our setting for two reasons. First, the analysis of such an optimization often requires stronger assumptions, \textit{e.g.,} the non-zeros are not
concentrated in any single column (or row) of $\mathbf{Y}(G)$, as in~\cite{dasarathy2015sketching}.  Second, having the symmetry property of $\mathbf{Y}$ as a constraint does not explicitly make use of the fact that many columns in $\mathbf{Y}$ are indeed sparse and can be recovered correctly. As a consequence, basis pursuit may produce poor results in certain scenarios where our approach performs well, as demonstration in our experimental results on star graphs in Section~\ref{sec:basis_pursuit}.

Although the columns of $\mathbf{Y}(G)$ are correlated because of the symmetry, in general there are no constraints on the support sets of the columns. Thus distributed compressed sensing schemes (for instance,~\cite{baron2005distributed} requires the columns to share the same support set) are not directly applicable in this situation. 

The previous studies and aforementioned issues together motivate us to propose a novel three-stage recovery scheme for the derivation of a sufficient recovery condition, which leads to a practical algorithm that is sample and computationally efficient as well as robust to noise.
%}
\subsection{Our Contributions}
\label{sec:con}
% In this work, based on Fano's inequality, we first derive fundamental limits of recovering the structure of a graph (hence, recovering the underlying symmetric matrix) from the linear system in~(\ref{eq:linear}). In Section~\ref{sec:ach}, we describe a simple three-stage scheme combining $\ell_1$-minimization with an additional step called \textit{consistency-checking} and argue that for a certain class of distributions (including uniform sampling of trees and the Erd\H{o}s-R\'{e}nyi $(\nn,p)$ model), if a graph is drawn according to such a distribution, then ...
We demonstrate that the linear system in (\ref{eq:linear}) can be used to learn the topology and parameters of a graph. Our framework can be applied to perform system identification in electrical grids by leveraging synchronous nodal current and voltage measurements obtained from phasor measurement units (PMUs). 

Compared to existing methods and analysis, the main results of this paper are three-fold:
%Let $\nn$ denote the number of the nodes in the graph to be measured.
\begin{enumerate}
	\item {\textit{Fundamental Trade-offs}:}
	In Theorem~\ref{thm:1}, we derive a general lower bound on the \textit{probability of error} for topology identification (defined in~(\ref{poet})).
    In Section~\ref{sec:ach}, we describe a simple three-stage recovery scheme combining $\ell_1$-norm minimization with an additional step called \textit{consistency-checking}, rendering which allows us to bound the number of measurements for exact recovery from above as in Theorem~\ref{thm:3}.
    \item {\textit{(Worst-case) Sample Complexity}:} 
    We provide sample complexity results for recovering a random graph that may contain \textit{high-degree} nodes. The unknown distribution that the graph is sampled from is characterized based on the definition of ``\textit{$({\mu},{K},\rho)$-sparsity}" (see Definition~\ref{def:sparse}).
    Under the assumption that the matrix $\mathbf{B}$ has Gaussian IID entries, in Section~\ref{sec:GIPM}, we provide upper and lower bounds on the worst-case sample complexity in Theorem~\ref{thm:4}. We show two applications of Theorem~\ref{thm:4} for the uniform sampling of trees and the Erd\H{o}s-R\'{e}nyi $(\nn,p)$ model in Corollary~\ref{coro:1} and~\ref{coro:2}, respectively. 
    % In addition, we consider the case when part of the topology information is known \textit{a priori} and derive the corresponding lower and upper bounds on $\ns$ in Section~\ref{sec:5}. See Corollary~\ref{corollary:subgraph_lower} and Corollary~\ref{corollary:subgraph_upper}.
	\item {\textit{(Heuristic) Algorithm}:} Motivated by the three-stage recovery scheme, a heuristic algorithm with  polynomial (in $\nn$) running-time is reported in Section~\ref{sec:6}, together with simulation results for power system test cases validating its performance in Section~\ref{sec:sim}.
\end{enumerate}

Some comments about the above results are as follows:

\subsection{Outline of the Paper}
The remaining content is organized as follows. In Section~\ref{sec:pre}, we specify our models. In Section~\ref{sec:fun}, we present the converse result as fundamental limits for recovery. The achievability is provided in ~\ref{sec:ach}. We present our main result as the worst-case sample complexity for Gaussian IID measurements in Section~\ref{sec:GIPM}. A heuristic algorithm together with simulation results are reported in Section~\ref{sec:6} and~\ref{sec:sim}.

\section{Model and Definitions}
\label{sec:pre}

\subsection{Notation}
Let $\mathbbm{F}$ denote a field that can either be the set of real numbers $\mathbbm{R}$, or the set of complex numbers $\mathbbm{C}$. The set of all symmetric $\nn\times\nn$ matrices whose entries are in $\mathbbm{F}$ is denoted by $\mathbbm{S}^{\nn\times\nn}$. The imaginary unit is denoted by $\mathrm{j}$. Throughout the work, let $\log\left(\cdot\right)$ denote the binary logarithm with base $2$ and let $\ln\left(\cdot\right)$ denote the natural logarithm with base $e$. We use $\mathbbm{E}\left[\cdot \right]$ to denote the expectation of random variables. The mutual information is denoted by $\mathbbm{I}(\cdot)$. The entropy function (either differential or discrete)  is denoted by $\mathbbm{H}(\cdot)$ and in particular, we reserve $h(\cdot)$ for the binary entropy function. To distinguish random variables and their realizations, we follow the convention and denote the former by capital letters (\textit{e.g.,} $A$) and the latter by lower case letters (\textit{e.g.,} $a$).  The symbol $C$ is used to designate a constant. %With a slight abuse of notation, if not specified clearly, the constants $C$'s can be different depending on contexts.

Matrices are denoted in boldface (\textit{e.g.,} $\mathbf{A}$, $\mathbf{B}$ and $\mathbf{Y}$). The $i$-th row, the $j$-th column and the $(i,j)$-th entry of a matrix $\mathbf{A}$ are denoted by $A^{(i)}$, $A_j$ and $A_{i,j}$ respectively. For notational convenience, let $\mathcal{S}$ be a subset of $\mathcal{V}$. Denote by $\overline{\mathcal{S}}:=\mathcal{V}\backslash\mathcal{S}$ the complement of $\mathcal{S}$ and by 
$\mathbf{A}_{\mathcal{S}}$ 
%in $\mathbbm{F}^{\ns\times \left|\mathcal{S}\right|}$ 
a sub-matrix consisting of $|\mathcal{S}|$ columns of the matrix $\mathbf{A}$ whose indices are chosen from $\mathcal{S}$. The notation $\top$ denotes the transpose of a matrix,  $\mathrm{det}\left(\cdot\right)$ calculates its determinant.  For the sake of notational simplicity, we use big O notation ($o$,$\omega$,$O$,$\Omega$,$\Theta$) to quantify asymptotic behavior.

% Table~\ref{notation} summarizes the notation used throughout the paper.

\subsection{Graphical Model}
\label{sec:model}
Denote by $\mathcal{V}=\{1,\ldots,\nn\}$ a set of $\nn$ nodes
%The \textit{connectivity} between those nodes are unknown. 
and consider an \textit{undirected} graph $G=(\mathcal{V},\mathcal{E})$ (with no self-loops) whose edge set $\mathcal{E}\subseteq \mathcal{V}\times\mathcal{V}$ contains the desired topology information. The degree of each node $j$ is denoted by $d_j$. The connectivity between the nodes is unknown and our goal is to determine it by learning the associated \textit{graph matrix} using linear measurements. %(defined in Definition~\ref{def:graph_matrix}) 

\begin{definition}[Graph matrix]
\label{def:graph_matrix}
Provided with an underlying graph $G=(\mathcal{V},\mathcal{E})$, a {\emph{symmetric}} matrix $\mathbf{Y}(G)\in\mathbbm{S}^{\nn\times\nn}$ is called a {\emph{graph matrix}} if the following conditions hold:
\begin{align*}
{Y}_{i,j}(G)=\begin{cases}
\neq 0 \quad &\text{ if } i\neq j \text{ and } (i,j)\in\mathcal{E}\\
0    \quad &\text{ if } i\neq j \text{ and } (i,j)\notin\mathcal{E}\\
\text{arbitrary} \quad &\text{ otherwise }
\end{cases}.
\end{align*}
\end{definition}
\begin{remark}
Our theorems can be generalized to recover a broader class of symmetric matrices, as long as the matrix to be recovered satisfies (1) Knowing $\mathbf{Y}(G)\in\mathbbm{F}^{\nn\times\nn}$ gives the full knowledge of the topology of $G$; (2) The number of non-zero entries in a column of $\mathbf{Y}(G)$ has the same order as the degree of the corresponding node, \textit{i.e.,} 
$|\mathrm{supp}({Y}_{j})| = O(d_j)$. for all $j\in\mathcal{V}$. To have a clear presentation, we consider specifically the case $|\mathrm{supp}({Y}_{j})| = d_j$.
% there is a positive constant $C>0$ such that $|\mathrm{supp}({Y}_{j})| = Cd_j$. for all $j\in\mathcal{V}$. To have a clear presentation, we consider specifically the case $C=1$.
\end{remark}
In this work, we employ a probabilistic model and assume that the graph $G$ is chosen randomly from a \textit{candidacy set} $\mathsf{C}(\nn)$ (with $\nn$ nodes), according to some distribution ${\mathcal{G}_\nn}$. Both the candidacy set $\mathsf{C}(\nn)$ and distribution ${\mathcal{G}_\nn}$ are not known to the estimator. For simplicity, we often omit the subscripts of $\mathsf{C}(\nn)$ and $\mathcal{G}_\nn$.

\begin{example}
\label{example}
We exemplify some possible choices of the candidacy set and distribution:
\begin{enumerate}

\item[\text{(a)}] \textit{{(Mesh Network)}}
When $G$ represents a transmission (mesh) power network and no prior information is available, the corresponding candidacy set $\mathsf{G}(\nn)$ consisting of all graphs with $\nn$ nodes and $G$ is selected uniformly at random from $\mathsf{G}(\nn)$. Moreover, $\left|\mathsf{G}(\nn)\right|=2^{\nn\choose 2}$ in this case.

\item[\text{(b)}] \textit{{(Radial Network)}}
When $G$ represents a distribution (radial) power network and no other prior information is available, then the corresponding candidacy set $\mathsf{T}(\nn)$ is a set containing all spanning trees of the complete graph with $\nn$ buses (nodes) and $G$ is selected uniformly at random from $\mathsf{T}(\nn)$; the cardinality is $\left|\mathsf{T}(\nn)\right|=\nn^{\nn-2}$ by Cayley's formula.

\item[\text{(c)}] \textit{{(Radial Network with Prior Information)}}
When $G=\left(\mathcal{V},\mathcal{E}\right)$ represents a distribution (radial) power network, and we further know that some of the buses cannot be connected (which may be inferred from locational/geographical information), then the corresponding candidacy set $\mathsf{T}_{H}(\nn)$ is a set of spanning trees of a sub-graph $H=\left(\mathcal{V},\mathcal{E}_H\right)$ with $\nn$ buses. An edge $e\notin\mathcal{E}_H$ if and only if we know $e\notin\mathcal{E}$. The size of $\mathsf{T}_{H}(\nn)$ is given by Kirchhoff's matrix tree theorem (\textit{c.f.}~\cite{west2001introduction}).

\item[\text{(d)}] \textit{{(Erd\H{o}s-R\'{e}nyi $(\nn,p)$ model)}}
In a more general setting, $G$ can be a random graph chosen from an ensemble of graphs according to a certain distribution. When a graph $G$ is sampled according to the Erd\H{o}s-R\'{e}nyi $(\nn,p)$ model, each edge of $G$ is connected IID with probability $p$. We denote the corresponding graph distribution for this case by $\mathcal{G}_{\mathrm{ER}}(\nn,p)$.
\end{enumerate}
\end{example}

The next section is devoted to describing available measurements. 
\subsection{Linear System of Measurements}
\label{sec:cvm}

Suppose the measurements are sampled discretely and indexed by the elements of the set $\{1,\ldots,\ns\}$. As a general framework, the measurements are collected in two matrices $\mathbf{A}$ and $\mathbf{B}$ and defined as follows.
\begin{definition}[{Generator and measurement matrices}]
Let $\ns$ be an integer with $1\leq \ns\leq \nn$. The \textit{generator matrix} $\mathbf{B}$ is an $\ns\times \nn$ \textit{random} matrix and  
the \emph{measurement matrix} $\mathbf{A}$ is an $\ns\times \nn$ matrix with entries selected from $\mathbbm{F}$ that satisfy the linear system (\ref{eq:linear}):
\begin{align*}
\mathbf{A} = \mathbf{B}\mathbf{Y}(G)+\mathbf{Z}
\end{align*}
where $\mathbf{Y}(G)\in\mathbbm{S}^{\nn\times \nn}$ is a graph matrix to be recovered, with an underlying graph $G$ and $\mathbf{Z}\in\mathbbm{F}^{m\times n}$ denotes the random \textit{additive  noise}. We call the recovery \textit{noiseless} if $\mathbf{Z}=\mathbf{0}$. Our goal is to resolve the matrix $\mathbf{Y}(G)$ based on given matrices $\mathbf{A}$ and $\mathbf{B}$. 
\end{definition}
 In the remaining contexts, we sometime simplify the matrix $\mathbf{Y}(G)$ as $\mathbf{Y}$ if there is no confusion.

% Our goal is to resolve the matrix $\mathbf{Y}$ based on given measurement matrices $\mathbf{A}$ and $\mathbf{B}$. Thus, knowing both  ${\mathbf{A}}$ and ${\mathbf{B}}$, the parameter estimation problem can be transformed into solving a set of linear equations. Precisely, representing the matrix $\mathbf{Y}$ in its column vectors, it suffices to solve the following linear system: 
% %proposed in~\cite{yuan2016inverse,babakmehr2016compressive}:
% \begin{align}
% \label{2.5}
% \underbrace{\begin{bmatrix}
% {A}_{1,j}\\
% {A}_{2,j}\\
% \vdots\\
% {A}_{\ns,j}
% \end{bmatrix}}_{{{A}}_j}=\underbrace{\begin{bmatrix}
% B_{1,1} & B_{1,2} & \cdots & B_{1,\nn}\\
% B_{2,1} & B_{2,2} & \cdots & B_{2,\nn}\\
% \vdots & \vdots & \ddots &  \vdots\\
% B_{\ns,1} & B_{\ns,2} & \cdots & B_{\ns,\nn}
% \end{bmatrix}}_{{\mathbf{B}}}\underbrace{\begin{bmatrix}
% {Y}_{1,j}\\
% {Y}_{2,j}\\
% \vdots\\
% {Y}_{\nn,j}
% \end{bmatrix}}_{{{Y}}_j}, \quad j\in\mathcal{V}.
% \end{align}

\subsection{Applications to Electrical Grids}

Various applications fall into the framework in (\ref{eq:linear}). Here we present two examples of the graph identification problem in power systems. The measurements are modeled as time series data obtained via nodal sensors at each node, \textit{e.g.,} PMUs, smart switches, or smart meters. 
\subsubsection{Example $1$: Nodal Current and Voltage Measurements}
We assume data is obtained from a short time interval over which the unknown parameters in the network are \textit{time-invariant}.
$\mathbf{Y}\in\mathbbm{C}^{\nn\times \nn}$ denotes the \textit{nodal admittance matrix} of the network and is defined
\begin{align}
\label{eq:admi}
Y_{i,j}:=\begin{cases}
-y_{i,j} \quad &\text{if } i\neq j\\
y_i + \sum_{k\neq i} y_{i,k} & \text{if } i=j
\end{cases}
\end{align}
where $y_{i,j}\in\mathbbm{C}$ is the admittance of line $(i,j)\in\mathcal{E}$ and $y_i$ is the self-admittance of bus $i$. Note that if two buses are not connected then $Y_{i,j}=0$. 

The corresponding generator and measurement matrices are formed by simultaneously measuring both current (or equivalently, power injection) and voltage at each node and at each time step. For each $t=1,\ldots,\ns$, the nodal current injection is collected in an $\nn$-dimensional random vector $I_{t}=(I_{t,1},\ldots,I_{t,\nn})$.
%We assume the sequences $I^{(1)},\ldots,I^{(\ns)}$ are distributed \textit{independently} according to the same distribution. 
%For notational simplicity, we define a length-$\nn$ random sequence $\mathbf{A}=\left[X_{1},\ldots,X_{\nn}\right]^T\in\mathbbm{C}^{\nn}$ called the \textit{current vector}. 
%Formally, we define the current measurements as a matrix.
Concatenating the $I_{t}$ into a matrix we get $\mathbf{I}:=[I_{1},I_{2},\ldots,I_{\ns}]^{\top}\in \mathbbm{C}^{\ns \times \nn}$.
% whose rows $I^{(1)},\ldots,I^{(\ns)}$ are nodal currents measured at the $\nn$ nodes at each time $t=1,\ldots,\ns$. 
%\footnote{We assume each measured current is a real number to make the model concise. In practice, the measurements can be complex. In that case, we can model each complex current phasor $X_j^{(t)}\in\mathbbm{C}$ as a two-dimensional real Gaussian random variable (as~\cite{liao2015distribution} did in the simulation part). This would only induce a constant factor in all the results presented in this work.}
The generator matrix  $\mathbf{V}:=[V_{1},V_{2},\ldots,V_{\ns}]^{\top}\in \mathbbm{C}^{\ns \times \nn}$ is constructed analogously. 
%obtain an $\nn$-dimensional random row vector ${V}^{(t)}=(V_{t,1},\ldots,V_{t,\nn})^{\top}$ that contains the voltages measured at buses $j\in\mathcal{V}$. Together, they form a matrix $\mathbf{V}$ whose rows $V^{(1)},\ldots,V^{(\ns)}$ are nodal voltages measured at time $t=1,\ldots,\ns$.
%Denote by $S_j^{(t)}\in\mathbbm{C}$ the net power injection at the bus $j\in\mathcal{V}A$. The following power flow equations are known as the bus injection model (BIM) (see, for example,~\cite{low2014convex}) derived from the Kirchhoff's law:
%\begin{align*}
%S_j^{(t)} = \sum_{i\in\mathcal{V}_j}y
%\end{align*}
Each pair of measurement vectors $(I_{t},V_{t})$ from $\mathbf{I}$ and $\mathbf{V}$ must satisfy Kirchhoff's and Ohm's laws,
\begin{align}
\label{2.3}
{I}_{t}=\mathbf{Y}{V}_{t}, \quad t=1,\ldots,\ns.
\end{align}
In matrix notation (\ref{2.3}) is equivalent to $
\mathbf{I}=\mathbf{V}\mathbf{Y}
$, which is a noiseless version of the linear system defined in (\ref{eq:linear}).

Compared with only obtaining one of the current, power injection or voltage measurements (for example, as in~\cite{tan2011sample,tan2010learning,liao2015distribution}), collecting simultaneous current-voltage pairs doubles the amount of data to be acquired and stored. There are benefits however. First, exploiting the physical law relating voltage and current not only enables us to identify the topology of a power network but also recover the parameters of the admittance matrix. Furthermore, 
dual-type measurements significantly reduce the sample complexity for learning the graph, compared with the results for single-type measurements. 

\subsubsection{Example $2$: Nodal Power Injection and Phase Angles}
Similar to the previous example, at each time $t=1,\ldots,\ns$, denote by $P_{t,j}$ and $\theta_{t,j}$ the active nodal power injection and the phase of voltage at node $j$ respectively. The matrices $\mathbf{P}\in\mathbbm{R}^{\ns\times\nn}$ and $\pmb{\theta}\in\mathbbm{R}^{\ns\times\nn}$ are constructed in a similar way by concatenating the vectors ${P}_t=(P_{t,1},\ldots,P_{t,\nn})$ and ${\theta}_{t}=(\theta_{t,1},\ldots,\theta_{t,\nn})$. The matrix representation of the DC power flow model can be expressed as a linear system $\mathbf{P} =\pmb{\theta}\mathbf{C}\mathbf{S}\mathbf{C}^{\top}$, which belongs to the general class represented in (\ref{eq:linear}). Here, the diagonal matrix $\mathbf{S}\in\mathbbm{R}^{|\mathcal{E}|\times|\mathcal{E}|}$ is the susceptence matrix whose $e$-th diagonal entry represents the susceptence on the $e$-th edge in $\mathcal{E}$ and $\mathbf{C}\in \{-1,0,1\}^{\nn\times |\mathcal{E}|}$ is the node-to-link incidence matrix of the graph. 
The vertex-edge incidence matrix\footnote{Although the underlying network is a directed graph, when considering the fundamental limit for topology identification, we still refer to the recovery of an undirected graph $G$.}  $\mathbf{C}\in \{-1,0,1\}^{\nn\times |\mathcal{E}|}$ is defined as
\begin{align*}
C_{j,e}:=
\begin{cases}
1, \quad &\text{if bus } j \text{ is the source of } e\\
-1, \quad &\text{if bus } j \text{ is the target of } e\\
0, \quad &\text{otherwise}
\end{cases}.
\end{align*}
Note that $\mathbf{C}\mathbf{S}\mathbf{C}^{\top}$ specifies both the network topology and the susceptences of power lines.

\subsection{Probability of Error as the Recovery Metric}
\label{sec:poe}
We define the error criteria considered in this paper. We refer to finding the edge set $\mathcal{E}$ of $G$ via matrices $\mathbf{A}$ and $\mathbf{B}$ as the \textit{topology identification problem} and recovering the graph matrix $\mathbf{Y}$ via matrices $\mathbf{A}$ and $\mathbf{B}$ as the \textit{parameter reconstruction problem}.

% {\color{blue}
\begin{definition}
Let $f$ be a function or algorithm that returns an estimated graph matrix $\mathbf{X}=f(\mathbf{A},\mathbf{B})$ given inputs $\mathbf{A}$ and $\mathbf{B}$. 
The \textit{probability of error for topology identification} $\poet$ is defined to be the probability that the estimated edge set is not equal to the correct edge set:
\begin{align}
\label{poet}
\poet:=
\mathbbm{P}\left(\exists \ i\neq j \ \big| \ \mathrm{sign}(X_{i,j})\neq \mathrm{sign}\left({Y}_{i,j}(G)\right)\right)
\end{align}
where the probability is taken over the randomness in $G,\mathbf{B}$ and $\mathbf{Z}$.
% and the function $f$ if it is a stochastic function.
The \textit{probability of error for parameter reconstruction} $\poe(\eta)$ is defined to be the probability that the Frobenius norm of the difference between the estimate $\mathbf{X}$ and the original graph matrix $\mathbf{Y}(G)$ is larger than $\eta>0$: 
\begin{align}
\label{poe}
\poe(\eta):=\sup_{\mathbf{Y}\in\mathsf{Y}(G)}\mathbbm{P}\left(||\mathbf{X} -\mathbf{Y}(G)||_{\mathrm{F}}>\eta\right)
\end{align}
where $||\cdot||_{\mathrm{F}}$ denotes the Frobenius norm, $\eta>0$ and $\mathsf{Y}(G)$ is the set of all graph matrices $Y(G)$ that satisfy Definition \ref{def:graph_matrix} for the underlying graph $G$, and the probability is taken over the randomness in $G$, $\mathbf{B}$ and $\mathbf{Z}$. Note that for noiseless parameter reconstruction, \textit{i.e.,} $\mathbf{Z}=0$, we always consider exact recovery and set $\eta=0$ and abbreviate the probability of error as $\poe$.
\end{definition}
%}
%\poet
% In contrast to topology identification, parameter reconstruction requires a complete recovery of the graph matrix.

% \begin{remark}
% For parameter reconstruction (recovery of the graph matrix $\mathbf{Y}(G)$), we assume the measurements are noiseless and algorithms seek to recover each entry of the graph matrix \textit{exactly} and the worst-case sample complexity is analyzed in Theorem~\ref{thm:4}. For topology identification (recovery of the support set of $\mathbf{Y}(G)$), we provide results assuming the measurements are noisy and the worst-case sample complexity is analyzed in Corollary~\ref{thm:worst_topo}.
% NEEDS to be changed
% %Theorem~\ref{thm:1} provides general converse results as trade-offs between the number of measurement needed and the probability of error defined in (\ref{poe}).
% \end{remark}

% \begin{remark}
% Note that for a fixed noiseless parameter reconstruction algorithm, we always have the corresponding $\poe$ greater than $\poet$. We use $\poe$ as the error metric in this work and refer it as the \textit{probability of error} considered in the remainder of this paper.
% \end{remark}

\section{Fundamental Trade-offs}

We discuss fundamental trade-offs of the parameter reconstruction problem defined in Section~\ref{sec:model} and~\ref{sec:cvm}. The converse result is summarized in Theorem~\ref{thm:1} as an inequality involving the probability of error, the distributions of the underlying graph, generator matrix and noise. Next, in Section~\ref{sec:ach}, we focus on a particular three-stage scheme, and show in Theorem~\ref{thm:3} that under certain conditions, the probability of error is asymptotically zero (in $\nn$). 
\subsection{Necessary Conditions}
\label{sec:fun}
The following theorem states the fundamental limit.
\begin{theorem}[\text{Converse}]
	\label{thm:1}
 The probability of error for topology identification $\poet$ is bounded from below as
	\begin{align}
	\label{3.0}
	\poet\geq 1-\frac{\mathbbm{H}\left(\mathbf{A}\right)-\mathbbm{H}\left(\mathbf{Z}\right)+\ln 2}{\mathbbm{H}\left({\mathcal{G}_\nn}\right)}
	\end{align}
where $\mathbbm{H}\left(\mathbf{A}\right)$, $\mathbbm{H}\left(\mathbf{Z}\right)$ are differential entropy (in base $e$) functions of the random variables $\mathbf{A}$, $\mathbf{Z}$ respectively and $\mathbbm{H}\left({\mathcal{G}_\nn}\right)$ is the entropy (in base $e$) of the probability distribution ${\mathcal{G}_\nn}$.
\end{theorem}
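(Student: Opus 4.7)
The plan is to apply a Fano-type inequality to the chain $G \to (\mathbf{A},\mathbf{B}) \to \hat{G}$, where $\hat{G}$ is the topology estimate induced by $\mathbf{X} := f(\mathbf{A},\mathbf{B})$. Specifically, I would set the edge set of $\hat{G}$ to be $\{(i,j) : X_{i,j} \neq 0\}$; since $\hat{G} \neq G$ forces $\mathrm{sign}(X_{i,j}) \neq \mathrm{sign}(Y_{i,j}(G))$ for some pair $(i,j)$, we have $\poet \geq \mathbbm{P}(\hat{G} \neq G)$, so it suffices to lower-bound the latter.

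For the Fano step, I would use the variant that bounds $\mathbbm{H}(G \mid \hat{G})$ in terms of $\mathbbm{H}(\mathcal{G}_\nn)$ rather than $\log|\mathsf{C}(\nn)|$. Letting $E := \mathbbm{1}\{\hat{G} \neq G\}$ and decomposing
\begin{align*}
\mathbbm{H}(G \mid \hat{G}) &= \mathbbm{H}(E \mid \hat{G}) + \mathbbm{H}(G \mid E, \hat{G}) \\
&\leq \ln 2 + \mathbbm{P}(\hat{G} \neq G)\, \mathbbm{H}(G \mid E=1, \hat{G}),
\end{align*}
then bounding $\mathbbm{H}(G \mid E=1, \hat{G}) \leq \mathbbm{H}(G) = \mathbbm{H}(\mathcal{G}_\nn)$ via ``conditioning reduces entropy'' yields $\mathbbm{H}(G \mid \hat{G}) \leq \ln 2 + \mathbbm{P}(\hat{G} \neq G)\, \mathbbm{H}(\mathcal{G}_\nn)$.

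For the mutual-information step, I would use the standard independence of $\mathbf{B}$ and $\mathbf{Z}$ from $G$ (and from each other) that is implicit in the problem setup. Then $\mathbbm{I}(G;\mathbf{B}) = 0$, and $G \to \mathbf{Y}(G) \to \mathbf{A}$ is a Markov chain conditional on $\mathbf{B}$, so the data processing inequality applied twice gives
\begin{align*}
\mathbbm{I}(G;\hat{G}) \leq \mathbbm{I}(G;\mathbf{A},\mathbf{B}) = \mathbbm{I}(G;\mathbf{A}\mid \mathbf{B}) \leq \mathbbm{I}(\mathbf{Y};\mathbf{A}\mid \mathbf{B}) = \mathbbm{H}(\mathbf{A}\mid \mathbf{B}) - \mathbbm{H}(\mathbf{Z}),
\end{align*}
where $\mathbbm{H}(\mathbf{A}\mid \mathbf{B},\mathbf{Y}) = \mathbbm{H}(\mathbf{Z})$ follows from $\mathbf{A} - \mathbf{B}\mathbf{Y} = \mathbf{Z}$ with $\mathbf{Z}$ independent of $(\mathbf{B},\mathbf{Y})$. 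Using $\mathbbm{H}(\mathbf{A}\mid \mathbf{B}) \leq \mathbbm{H}(\mathbf{A})$ then yields $\mathbbm{I}(G;\hat{G}) \leq \mathbbm{H}(\mathbf{A}) - \mathbbm{H}(\mathbf{Z})$. Combining with $\mathbbm{H}(\mathcal{G}_\nn) = \mathbbm{I}(G;\hat{G}) + \mathbbm{H}(G \mid \hat{G})$ and rearranging delivers the stated bound.

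The main obstacle is the Fano step, since the textbook form delivers $\log(|\mathsf{C}(\nn)|-1)$ in place of $\mathbbm{H}(\mathcal{G}_\nn)$ on the right-hand side; the relaxed form above hinges on ``conditioning reduces entropy'' to swap in the distribution's actual entropy, which is essential because $\log|\mathsf{C}(\nn)|$ can be much larger than $\mathbbm{H}(\mathcal{G}_\nn)$ for non-uniform $\mathcal{G}_\nn$. A secondary bookkeeping item is being explicit about independence among $G$, $\mathbf{B}$, and $\mathbf{Z}$, which the statement leaves implicit but which the mutual-information bound clearly requires.
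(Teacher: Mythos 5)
Your architecture is the same as the paper's: reduce $\poet$ to $\mathbbm{P}(\hat G\neq G)$, apply a Fano-type inequality with $\mathbbm{H}(\mathcal{G}_\nn)$ in the denominator, and bound the mutual information via $\mathbbm{I}(G;\mathbf{A}\mid\mathbf{B})\leq\mathbbm{I}(\mathbf{Y};\mathbf{A}\mid\mathbf{B})=\mathbbm{H}(\mathbf{A}\mid\mathbf{B})-\mathbbm{H}(\mathbf{Z})\leq\mathbbm{H}(\mathbf{A})-\mathbbm{H}(\mathbf{Z})$. That second half is correct and essentially identical to the paper's chain of inequalities, and your reduction from the sign-mismatch event to $\{\hat G\neq G\}$ is fine.

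The problem sits exactly where you flagged the ``main obstacle,'' and your patch does not hold. The step $\mathbbm{H}(G\mid E=1,\hat G)\leq\mathbbm{H}(G)$ is not an instance of ``conditioning reduces entropy'': that principle controls the average $\mathbbm{H}(G\mid E)=\sum_{e}\mathbbm{P}(E=e)\mathbbm{H}(G\mid E=e)$, not the entropy on the single slice $E=1$, which can exceed $\mathbbm{H}(G)$. Concretely, let $G$ put mass $1/2$ on a fixed graph $g_0$ and mass $1/(2N)$ on each of $N$ other graphs, and take the constant estimator $\hat G\equiv g_0$. Then $\mathbbm{H}(G\mid E=1,\hat G)=\ln N$ while $\mathbbm{H}(G)=\ln 2+\tfrac{1}{2}\ln N$, so your intermediate inequality fails once $N\geq 5$; moreover the final bound $\mathbbm{P}(\hat G\neq G)\geq 1-\ln 2/\mathbbm{H}(G)$ also fails there, since the left side equals $1/2$ while the right side tends to $1$ as $N$ grows. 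So the $\mathbbm{H}(\mathcal{G}_\nn)$-denominator form of Fano is not a theorem for arbitrary distributions and estimators; what your decomposition actually yields is the textbook bound with $\ln\left(|\mathsf{C}(\nn)|-1\right)$ in place of $\mathbbm{H}(G)$, because given $E=1$ and $\hat G$ the graph $G$ ranges over at most $|\mathsf{C}(\nn)|-1$ values. The paper does not close this gap either: it imports the generalized inequality as Lemma~\ref{lemma:fano} by citation, without proof. To make your argument airtight you would need either to restrict to uniform $\mathcal{G}_\nn$ (where $\mathbbm{H}(\mathcal{G}_\nn)=\ln|\mathsf{C}(\nn)|$ and the textbook Fano already gives \eqref{3.0}; this covers the uniform candidacy sets used in the converse of Theorem~\ref{thm:4}), or to state and prove a correct non-uniform variant with whatever additional hypotheses it requires.
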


% {\color{blue}
\begin{remark}
It can be inferred from the theorem that $\poet=1-O(\ns\nn/\mathbbm{H}\left({\mathcal{G}_\nn}\right))$, given that the generator matrix $\mathbf{B}$ has Gaussian IID entries and the noise $\mathbf{Z}$ is additive white Gaussian (see Lemma~\ref{lemma:2}). Therefore, the structure of the graphs reflected in the corresponding entropy of the graph distribution determines the number of samples needed. Consider the four cases listed in Example~\ref{example}.
The number of samples must be at least linear in $\nn$ (size of the graph) to ensure a small probability of error, given that the graph, as a mesh network, is chosen uniformly at random from $\mathsf{C}(\nn)$ (see Example~\ref{example} (a)) since $\mathbbm{H}(\mathcal{U}_{\mathsf{G}(\nn)})={\nn\choose 2}$. On the other hand, as corollaries, under the assumptions of Gaussian IID measurements, $\ns=\Omega(\log \nn)$ is \textit{necessary} for making the probability of error less or equal to $1/2$, if the graph is chosen uniformly at random from $\mathsf{T}(\nn)$; $\ns=\Omega(\nn h(p))$ is \textit{necessary} if the graph is sampled according to $\mathcal{G}_{\mathrm{ER}}(\nn,p)$, as in Examples~\ref{example} (b) and (c), respectively.
The theorem can be generalized to complex measurements by adding additional multiplicative constants.
\end{remark}
%}

Note that $\poe\geq \poet$ for any fixed noiseless parameter reconstruction algorithm, the necessary conditions work for both topology and (noiseless) parameter reconstruction. The proof is postponed to Appendix~\ref{app:proof_fundamental_limits} and the key steps are first applying the generalized Fano's inequality~(see \cite{fano1961transmission,aeron2010information}) and then bounding the mutual information $\mathbbm{I}\left(G;\mathbf{A}|\mathbf{B}\right)$ from above by $\mathbbm{H}(\mathbf{A})-\mathbbm{H}(\mathbf{Z})$. The general converse stated in Theorem~\ref{thm:1} is used in asserting the results on worst-case sample complexity in Theorem~\ref{thm:4}. Next, we analyze the sufficient condition for recovering a graph matrix $\mathbf{Y}(G)$. Before proceeding to the results, we introduce a novel characterization of the distribution $\mathcal{G}_{\nn}$, from which a graph $G$ is sampled. In particular, the graph $G$ is allowed to have high-degree nodes.

\subsection{Characterization of Graph Distributions}
\label{sec:sparse_distribution}
Let $d_j(G)$ denote the degree of node $j\in\mathcal{V}$ in $G$. Denote by 
$
% \label{eq:large}
\mathcal{V}_{\mathrm{Large}}\left({\mu}\right):=\{j\in\mathcal{V} \ \big| \ d_j(G)>{\mu}\}
$ the set of nodes having degrees greater than the \textit{threshold parameter} $0\leq {\mu}\leq \nn-2$ and
$\mathcal{V}_{\mathrm{Small}}\left({\mu}\right):= \mathcal{V}\backslash\mathcal{V}_{\mathrm{Large}}\left({\mu}\right)$ the set of nodes for all $\mu$-sparse column vectors of $\mathbf{Y}$. With a \textit{counting parameter} $0\leq {K}\leq\nn$, we define a set of graphs wherein each graph consists of no more than $\nk$ nodes with degree larger than $\mu$, denoted by
$
\mathsf{C}(\nn,\mu,\nk):=\{G\in\mathsf{C}(\nn) \ | \left|\mathcal{V}_{\mathrm{Large}}\left({\mu}\right)\right|\leq \nk\}.
$
The following definition characterizes graph distributions.
\begin{definition}[$(\mu,K,\rho)$-sparse distribution]
\label{def:sparse}
A graph distribution $\mathcal{G}_{\nn}$ is said to be  $(\mu,K,\rho)$-\emph{sparse} if assuming that $G$ is distributed according to $\mathcal{G}_{\nn}$, then the probability that $G$ belongs to $\mathsf{C}(\nn,\mu,\nk)$ is larger than $1-\rho$, \textit{i.e.,}
\begin{align}
\label{eq:rec}
\mathbbm{P}_{{\mathcal{G}_{\nn}}}\left(G\notin\mathsf{C}(\nn,\mu,\nk)\right)\leq \rho.
\end{align}
\end{definition}

\textit{1) Uniform Sampling of Trees:}

Based on the definition above, for particular graph distributions, we can find the associated parameters. We exemplify by considering two graph distributions introduced in Example~\ref{example}.
% The following lemmas provide examples of sparse distributions. % In particular, the uniform distribution on the set of all trees and the Erd\H{o}s-R\'{e}nyi $(\nn,p)$ model are sparse with a suitable choice of parameters.
Denote by $\mathcal{U}_{\mathsf{T}(\nn)}$ the uniform distribution on the set $\mathsf{T}(\nn)$ of all trees with $\nn$ nodes. % The distribution $\mathcal{U}_{\mathsf{T}(\nn)}$ is $(O(1),\ln\nn)$-sparse. 
\begin{lemma}
\label{lemma:trees}
For any ${\mu}\geq 1+\ln \nn$ and ${K}>0$, the distribution $\mathcal{U}_{\mathsf{T}(\nn)}$ is $({\mu },{K},1/K)$-{sparse}. 
\end{lemma}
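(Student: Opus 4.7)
The plan is to obtain the bound $\mathbb{P}_{\mathcal{U}_{\mathsf{T}(\nn)}}(|\mathcal{V}_{\mathrm{Large}}(\mu)| > K) \leq 1/K$ by a single application of Markov's inequality to the random variable $N_\mu := |\mathcal{V}_{\mathrm{Large}}(\mu)|$, followed by control of $\mathbb{E}[N_\mu]$ using the combinatorial structure of labeled trees. Explicitly,
\[
\mathbb{P}_{\mathcal{U}_{\mathsf{T}(\nn)}}(N_\mu > K) \;\leq\; \frac{\mathbb{E}[N_\mu]}{K},
\]
so the entire task reduces to bounding the expected number of high-degree vertices.

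To bound $\mathbb{E}[N_\mu]$, I would combine two ingredients. First, by the symmetry of the uniform distribution on labeled trees and linearity of expectation, $\mathbb{E}[N_\mu] = \nn \cdot \mathbb{P}(d_v > \mu)$ for any fixed vertex $v$. Second, via the Prüfer sequence bijection (which is already used implicitly when the paper invokes Cayley's formula in Example~\ref{example}(b)), the marginal degree $d_v$ under $\mathcal{U}_{\mathsf{T}(\nn)}$ is distributed as $1+X$ with $X\sim \mathrm{Binomial}(\nn-2,1/\nn)$. In particular $\mathbb{E}[d_v]=2(\nn-1)/\nn<2$, and a second application of Markov to the single-vertex degree gives $\mathbb{P}(d_v>\mu)=\mathbb{P}(d_v\geq \mu+1)\leq \mathbb{E}[d_v]/(\mu+1)\leq 2/(\mu+1)$.

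As an alternative (or sanity check) that does not rely on Prüfer sampling, one can use the handshake lemma directly: in any tree on $\nn$ vertices, $\sum_{v}d_v=2(\nn-1)$, and each vertex counted by $N_\mu$ contributes at least $\mu+1$ to this sum, yielding the \emph{deterministic} bound $N_\mu \leq 2(\nn-1)/(\mu+1)$. Combined with Markov, this also produces an upper estimate for $\mathbb{E}[N_\mu]$ of the same order.

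The main obstacle is tightening the resulting estimate so that the multiplicative factor in front of $1/K$ reduces exactly to $1$: both approaches above naturally produce a factor of $O(\nn/(\mu+1))$ rather than a pure $1/K$. To close this gap, I would exploit the fact that under Prüfer sampling the degree counts $(d_1,\ldots,d_\nn)$ behave like weakly correlated Poisson$(1)$ variables (they are jointly multinomial, conditional on summing to $2(\nn-1)$), and apply a Chernoff/Poisson-tail bound on $N_\mu$ rather than the crude first-moment inequality. I expect that careful bookkeeping of this near-independence, together with the elementary identity $(\mu+1)N_\mu \leq 2(\nn-1)$, is precisely what lets one absorb the extra $\nn/(\mu+1)$ factor and conclude the clean statement $\mathbb{P}(N_\mu>K)\leq 1/K$ for all $\mu\geq 1$ and $K>0$.
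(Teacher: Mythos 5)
Your skeleton matches the paper's: Markov's inequality applied to $N_\mu=|\mathcal{V}_{\mathrm{Large}}(\mu)|$, linearity of expectation with vertex symmetry, and the Pr\"{u}fer-sequence observation that $d_v-1\sim\mathrm{Binomial}(\nn-2,1/\nn)$ under $\mathcal{U}_{\mathsf{T}(\nn)}$. The gap is in how you bound the single-vertex tail $\mathbbm{P}(d_v>\mu)$. Applying Markov to $d_v$ gives only $2/(\mu+1)$, hence $\mathbbm{E}[N_\mu]=O(\nn/\mu)$, and your proposed repair---a Chernoff/Poisson-type concentration of $N_\mu$ about its mean, combined with the handshake identity---cannot rescue this: concentrating $N_\mu$ around a mean of order $\nn/\mu$ makes $\mathbbm{P}(N_\mu>K)$ close to $1$ for $K\ll\nn/\mu$ rather than small, and the deterministic bound $N_\mu\le 2(\nn-1)/(\mu+1)$ only forces the tail to vanish for $K$ above that (large) threshold, which is not the regime of interest. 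What the paper does instead is apply the Chernoff bound to the \emph{individual} binomial degree, $\mathbbm{P}(d_v>\mu)\le\exp\left(-(\nn-2)\,\mathbbm{D}_{\mathrm{KL}}\left(\tfrac{\mu}{\nn-2}\,\|\,\tfrac{1}{\nn}\right)\right)$, and then lower-bounds the divergence by $\tfrac{\mu}{\nn-2}\ln\nn$, so the per-vertex tail is at most $\nn^{-\mu}\le 1/\nn$; this yields $\mathbbm{E}[N_\mu]\le 1$, and the first-moment Markov step alone then delivers $1/K$. So the single missing idea is a large-deviation bound at the level of one vertex's degree, not at the level of the count $N_\mu$.

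That said, your instinct that the factor $\nn/(\mu+1)$ is hard to absorb is sound and points to a real issue with the statement for small $\mu$: for $\mu=1$ one has $\mathbbm{P}(d_v>1)=1-(1-1/\nn)^{\nn-2}\to 1-e^{-1}$, so $\mathbbm{E}[N_1]=\Theta(\nn)$ and a uniform random tree has $\Theta(\nn)$ non-leaves with high probability; no argument can give $\mathbbm{P}(N_1>K)\le 1/K$ for, say, $K=2$ and large $\nn$. The paper's divergence estimate discards the negative entropy term $-h(\mu/(\nn-2))$ and is valid only when $\mu$ is sufficiently large (roughly $\mu=\Omega(\log\nn/\log\log\nn)$, at which point the true tail $\approx 1/\mu!$ drops below $1/\nn$). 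If you adopt the paper's route, you should flag that the ``for any $\mu\ge 1$'' in the statement implicitly requires this restriction.
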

% It is also possible to find the corresponding threshold parameter and counting parameter for the Erd\H{o}s-R\'{e}nyi $(\nn,p)$ model.

\textit{2) Erd\H{o}s-R\'{e}nyi $(\nn,p)$ model:}

Denote by $\mathcal{G}_{\mathrm{ER}}(\nn,p)$ the graph distribution for the Erd\H{o}s-R\'{e}nyi $(\nn,p)$ model. Similarly, the lemma below classifies $\mathcal{G}_{\mathrm{ER}}(\nn,p)$ into a $({\mu },{\nk},\rho)$-{sparse} distribution with appropriate parameters.

\begin{lemma}
\label{lemma:renyi}
For any ${\mu}(\nn,p)$ that satisfies ${\mu}(\nn,p)\geq {2\nn h(p)}/{(\ln 1/p)}$ and ${K}>0$, the distribution $\mathcal{G}_{\mathrm{ER}}(\nn,p)$ is  $({\mu},{K},\nn\exp(-\nn h(p))/K)$-{sparse}. 
\end{lemma}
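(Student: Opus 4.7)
The plan is to apply Markov's inequality to the count of high-degree nodes and then bound the single-node degree tail by a Chernoff-type argument calibrated to the hypothesis on $\mu$.

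First I would set $N := |\mathcal{V}_{\mathrm{Large}}(\mu)|$, the number of nodes of $G$ with degree strictly exceeding $\mu$. By Markov's inequality,
\begin{align*}
\mathbbm{P}_{\mathcal{G}_{\mathrm{ER}}(\nn,p)}\bigl(G \notin \mathsf{C}(\nn,\mu,K)\bigr) \;=\; \mathbbm{P}(N > K) \;\leq\; \frac{\mathbbm{E}[N]}{K}.
\end{align*}
Under $\mathcal{G}_{\mathrm{ER}}(\nn,p)$ the marginal degree of every node is $\mathrm{Binomial}(\nn-1,p)$, so by vertex symmetry and linearity of expectation $\mathbbm{E}[N] = \nn\,\mathbbm{P}(d_1 > \mu)$, where $d_1$ denotes the degree of node $1$. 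The lemma therefore reduces to establishing the single-node tail bound $\mathbbm{P}(d_1 > \mu) \leq \exp(-\nn h(p))$.

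For that tail bound I would start from the union bound over the $\binom{\nn-1}{\mu}$ potential neighbor sets of node~$1$, giving $\mathbbm{P}(d_1 \geq \mu) \leq \binom{\nn-1}{\mu} p^{\mu}$, and move everything to the exponential scale. The hypothesis $\mu \geq 2\nn h(p)/\ln(1/p)$ yields $\mu \ln(1/p) \geq 2\nn h(p)$, whence $p^{\mu} \leq \exp(-2\nn h(p))$. Estimating the binomial coefficient by a standard entropy-type bound (such as $\binom{\nn-1}{\mu} \leq 2^{(\nn-1) h(\mu/(\nn-1))}$, or the cruder $(e\nn/\mu)^{\mu}$) shows that the combinatorial factor costs at most $\exp(\nn h(p))$ in the exponent, leaving the desired residual $\exp(-\nn h(p))$. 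Combining with the Markov step produces $\mathbbm{P}(G \notin \mathsf{C}(\nn,\mu,K)) \leq \nn\exp(-\nn h(p))/K$, which is precisely the claimed $(\mu,K,\nn\exp(-\nn h(p))/K)$-sparsity.

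The main obstacle is the algebraic calibration in the tail step: the factor $\exp(-2\nn h(p))$ earned from $p^{\mu}$ must absorb the entropy contribution of $\binom{\nn-1}{\mu}$ while leaving the advertised $\exp(-\nn h(p))$. A cleaner alternative is to use the tight Chernoff bound $\mathbbm{P}(d_1 \geq \mu) \leq \exp\bigl(-(\nn-1) D(\mu/(\nn-1)\,\|\,p)\bigr)$ with $D$ the Kullback--Leibler divergence, and verify that $\mu \geq 2\nn h(p)/\ln(1/p)$ forces $(\nn-1) D(\mu/(\nn-1)\|p) \geq \nn h(p)$; this reduces to a careful expansion of $q\ln(q/p) + (1-q)\ln((1-q)/(1-p))$ at the prescribed threshold and is the technical heart of the argument. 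Structurally the proof mirrors the uniform-tree case (Lemma~\ref{lemma:trees}), with the single-node binomial tail replaced there by a handshaking/degree-sum estimate.
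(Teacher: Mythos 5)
Your overall architecture---Markov's inequality applied to the number of nodes of degree exceeding $\mu$, reduced by vertex symmetry to the single-node tail bound $\mathbbm{P}(d_1>\mu)\leq \exp(-\nn h(p))$---is exactly the paper's: its proof reuses the counter $F(\mathcal{E})$ and the Markov step (\ref{eq:a.1}) from the proof of Lemma~\ref{lemma:trees} and then invokes a per-node Chernoff bound. The gap is in the tail step, and it is not merely a matter of ``algebraic calibration.'' Your primary route, the union bound $\binom{\nn-1}{\mu}p^{\mu}$, is vacuous near the threshold: the hypothesis only forces $\mu\geq 2\nn h(p)/\ln(1/p)$, which for small $p$ is roughly $2\nn p$, i.e.\ only a constant factor above the mean degree. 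There $\ln\binom{\nn-1}{\mu}\approx \nn h(\mu/\nn)\approx 2\nn h(p)$, which cancels the entire gain $p^{\mu}\leq e^{-2\nn h(p)}$; more bluntly, $\binom{\nn-1}{\mu}p^{\mu}\leq (e\nn p/\mu)^{\mu}\approx (e/2)^{2\nn p}>1$. So the combinatorial factor does not cost ``at most $\exp(\nn h(p))$''; it costs essentially $\exp(2\nn h(p))$.

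Your fallback---the exact Chernoff bound $\exp(-(\nn-1)\mathbbm{D}_{\mathrm{KL}}(\mu/(\nn-1)\,\|\,p))$ together with the claim that the hypothesis forces the exponent below $-\nn h(p)$---is the paper's actual route, but you defer precisely the step that carries all the content, and that step does not go through. With $q=\mu/\nn$ the correct expansion is $\mathbbm{D}_{\mathrm{KL}}(q\|p)=q\ln\tfrac{1}{p}+(1-q)\ln\tfrac{1}{1-p}-h(q)$, so the hypothesis yields $\nn\mathbbm{D}_{\mathrm{KL}}(q\|p)\geq 2\nn h(p)-\nn h(q)+(\nn-\mu)\ln\tfrac{1}{1-p}$; to conclude one would additionally need $h(q)\lesssim h(p)$, which fails because $q\geq 2p$ and (for small $p$) $h(2p)\approx 2h(p)$. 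The paper's own display (\ref{eq:a.2}) writes $-h(p)$ where the identity requires $-h(\mu/\nn)$, and that substitution is exactly how its chain of inequalities closes. A concrete check: for $\nn=100$, $p=0.1$ one has $\nn h(p)\approx 32.5$ and threshold $\mu\approx 28.2$, yet $\nn\,\mathbbm{D}_{\mathrm{KL}}(0.282\|0.1)\approx 13$, and indeed $\mathbbm{P}(\mathrm{Bin}(99,0.1)\geq 29)$ is on the order of $10^{-7}$, vastly larger than $e^{-32.5}$. So the single-node tail estimate you reduce to is false at the stated threshold; no careful expansion of the divergence will rescue it, and closing the argument would require strengthening the hypothesis on $\mu$ (or weakening the claimed $\rho$).
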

The proofs of Lemmas~\ref{lemma:trees} and~\ref{lemma:renyi}  are in Appendix~\ref{app:proof_trees}. 
\begin{remark}
It is worth noting that the $(\mu,K,\rho)$-\emph{sparsity} is capable of characterizing \textit{any} arbitrarily chosen distribution. The interesting part is that for some of the well-known distributions, such as $\mathcal{G}_{\mathrm{ER}}(\nn,p)$, this sparsity characterization offers a method that can be used in the analysis and moreover, it leads to an \textit{exact characterization} of sample complexity for the noiseless case. Therefore, for the particular examples presented in Lemma~\ref{lemma:trees} and Lemma~\ref{lemma:renyi}, the selected threshold and counting parameters for both of them are ``tight" (up to multiplicative factors), in the sense that the corresponding sample complexity matches (up to multiplicative factors) the lower bounds derived from Theorem~\ref{thm:1}. This can be seen in Corollary~\ref{coro:1} and~\ref{coro:2}.
\end{remark}

\label{sec:def_of_rip}
\begin{algorithm}[h]
% \small
	\label{alg:1}
	\hrule
% 	\hrule
% 	\medskip
% 	\textit{\text{Specification}}:
% 	\medskip
% 	\hrule	
	\medskip
	\KwData{Matrices of measurements $\mathbf{A}$ and $\mathbf{B}$}
	\KwResult{Estimated graph matrix $\mathbf{X}$}

	\medskip
	\hrule
	\medskip
	\textit{\text{Step (a): Recovering columns independently}}: 
	\medskip
	\hrule	
	\medskip
	\For{$j\in\mathcal{V}$}{Solve the following $\ell_1$-minimization and obtain an optimal $\mathbf{X}$:
% 	{\color{blue}
		\begin{align}
		\nonumber
		\mathrm{minimize}  \quad &{\big|\big|X_j\big|\big|_{1}}\\
		\nonumber
		\mathrm{subject}\ \mathrm{to}\  & ||\mathbf{B}X_{j} - A_j||_2\leq \gamma,\\
		\nonumber
		&X_j\in\mathbbm{F}^{\nn}.
		\end{align}}
% 	}
	%Set $\gamma\left(\nn\right):=\nn-{\ln \nn}$\;
	\medskip
	\hrule
	\medskip
	\textit{\text{Step (b): Consistency-checking}}:
	\medskip
	\hrule
	\medskip	
	\For{$\mathcal{S}\subseteq\mathcal{V}$ with $|\mathcal{S}|=\nn-{K}$}{
			\If{$|X_{i,j} - X_{j,i}|> 2\gamma$ for some $i,j\in\mathcal{S}$}
			{$\textbf{continue}$\;}
		\Else{
		\For{$j\in\overline{\mathcal{S}}$}{
	\medskip
	\hrule
	\medskip
	\textit{\text{Step (c): Resolving unknown entries}}:
	\medskip
	\hrule
	\medskip
			Update $X_{j}^{\overline{\mathcal{S}}}$ by solving the linear system:
			\begin{align}
			\nonumber
			\mathbf{B}_{\overline{\mathcal{S}}}X_{j}^{\overline{\mathcal{S}}}=A_j- \mathbf{B}_{\mathcal{S}}X^{\mathcal{S}}_{j}.
			\end{align}}}
		\textbf{break;}
	}	
			\Return{$\mathbf{X}=(X_1,\ldots,X_\nn)$}\;
	\medskip
	\hrule
	\hrule
	\medskip
		\caption{A three-stage recovery scheme. The first stage focuses on solving each column of the matrix $\mathbf{Y}$ independently using $\ell_1$-minimization. In the second stage, the recovery correctness of the first stage is further verified via \textit{consistency-checking}, which utilizes the fact that the matrix to be recovered $\mathbf{Y}$ is \textit{symmetric}. The parameter $\gamma$ is set to zero for the analysis of noiseless parameter reconstruction.}
\end{algorithm}

\subsection{Sufficient Conditions}

\label{sec:ach}
In this subsection, we consider the sufficient conditions (achievability) for parameter reconstruction. The proofs rely on constructing a three-stage recovery scheme (Algorithm~\ref{alg:1}), which contains three steps -- \textit{column-retrieving}, \textit{consistency-checking} and \textit{solving unknown entries}. The worst-case running time of this scheme depends on the underlying distribution $\mathcal{G}_\nn$\footnote{Although for certain distributions, the computational complexity is not polynomial in $\nn$, the scheme still provides insights on the fundamental trade-offs between the number of samples and the probability of error for recovering graph matrices. Furthermore, motivated by the scheme, a polynomial-time heuristic algorithm is provided in Section~\ref{sec:6} and experimental results are reported in Section~\ref{sec:sim}.}. 
 The scheme is presented as follows.
 
 \begin{figure}[h!]
	\centering
	\includegraphics[scale=0.35]{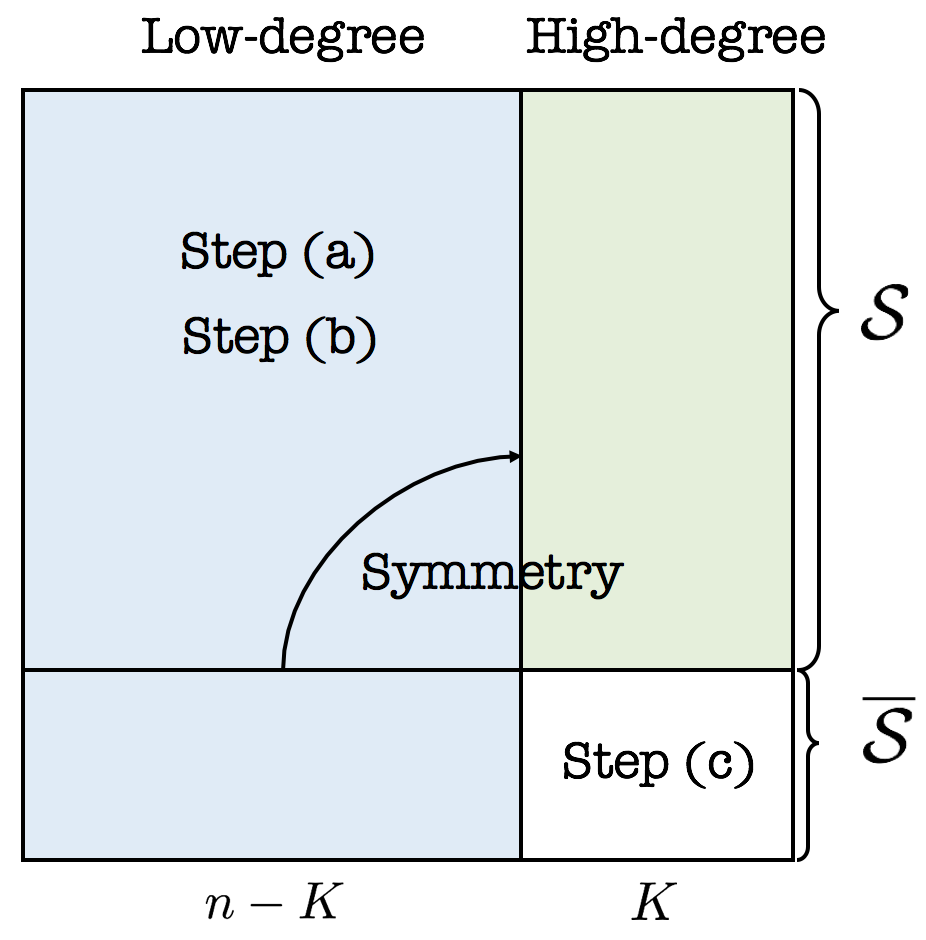}
	\caption{The recovery of a graph matrix $\mathbf{Y}$ using the three-stage scheme in Algorithm~\ref{alg:1}. The $\nn-\nk$ columns of $\mathbf{Y}$ colored by gray are first recovered via the $\ell_1$-minimization~\eqref{4.1}-\eqref{4.2} in step (a), after they are accepted by passing the consistency check in step (b). Then, symmetry is used for recovering the entries in the matrix marked by green. Leveraging the linear measurements again, in step (c), the remaining $\nk^2$ entries in the white symmetric sub-matrix are solved using Equation~\eqref{4.3}.}
	\label{fig:concept_1}
	\medskip
\end{figure}
 
\textit{1) Three-stage Recovery Scheme:}

% {\color{blue}
\textit{Step (a): {Retrieving columns.}}
In the first stage, using $\ell_1$-norm minimization, we recover each column of $\mathbf{Y}$ based on ~(\ref{eq:linear}):
\begin{subequations}
\begin{align}
\label{4.1}
\mathrm{minimize}  \quad &{\big|\big|X_j\big|\big|_{1}}\\
\label{4.0}
\mathrm{subject}\ \mathrm{to}\quad  &||\mathbf{B}X_{j} - A_j||_2\leq \gamma,\\
\label{4.2}
&X_j\in\mathbbm{F}^{\nn}.
\end{align}	
\end{subequations}
% Denote by $X_1,\ldots,X_\nn$ the recovered columns. 
Let
${X}_{j}^{\mathcal{S}}:=({X}_{i,j})_{i\in\mathcal{S}}$
be a length-$|\mathcal{S}|$ column vector consisting of $\left|\mathcal{S}\right|$ coordinates in ${X}_j$, the $j$-th retrieved column. We do not restrict the methods for solving the $\ell_1$-norm minimization in (\ref{4.1})-(\ref{4.2}), as long as there is a unique solution for sparse columns with fewer than ${\mu}$ non-zeros (provided enough number of measurements and the parameter ${\mu}>0$ is defined in Definition~\ref{def:sparse}).
%}

\textit{Step (b): Checking consistency.}

In the second stage, we check for error in the decoded columns $X_1,\ldots,X_\nn$ using the symmetry property (perturbed by noise) of the graph matrix $\mathbf{Y}$. Specifically, we fix a subset $\mathcal{S}\subseteq\mathcal{V}$ with a given size 
% \footnote{The size of the set $\mathcal{S}$ is set to be $\nn-{K}$ due to technical issues in the proof. Readers are referred to Lemma~\ref{lemma:detect} for more details.}
$\left|\mathcal{S}\right|=\nn-{K}$ for some integer\footnote{The choice of $\nk$ depends on the structure of the graph to be recovered and more specifically, $\nk$ is the counting parameter in Definition~\ref{def:sparse}. In Theorem~\ref{thm:3} and Corollary~\ref{corollary:noiseless}, we analyze the sample complexity of this three-stage recovery scheme by characterizing an arbitrary graph into the classes specified by Definition~\ref{def:sparse} with a fixed $\nk$.} $0\leq {K}\leq\nn$. Then we check if $|X_{i,j}- X_{j,i}|\leq 2\gamma$ for all $i,j\in\mathcal{S}$. If not, we choose a different set $\mathcal{S}$ of the same size. This procedure stops until either we find such a subset $\mathcal{S}$ of columns, or we go through all possible subsets without finding one.
In the latter case, an error is declared and the recovery is unsuccessful. It remains to recover the vectors $X_j$ for $j\in\overline{\mathcal{S}}$.

\textit{Step (c): Resolving unknown entries.}
In the former case, for each vector $X_j, j\in\mathcal{S}$, we accept its entries $X_{i,j}, i\in\overline{\mathcal{S}}$, as correct and therefore, according to the symmetry assumption, we know the entries $X_{i,j}, i\in{\mathcal{S}}, j\in\overline{\mathcal{S}}$ (equivalently $\{X_j^{\mathcal{S}}:j\in\overline{\mathcal{S}}\}$), which are used together with the sub-matrices $\mathbf{B}_{\mathcal{S}}$ and $\mathbf{B}_{\overline{\mathcal{S}}}$ to compute the other entries $X_{i,j}, i\in\overline{\mathcal{S}}$, of $X_j$ using~(\ref{4.0}):
\begin{align}
\label{4.3}
\mathbf{B}_{\overline{\mathcal{S}}}X_{j}^{\overline{\mathcal{S}}}=A_j- \mathbf{B}_{\mathcal{S}}X^{\mathcal{S}}_{j}, \quad j\in\overline{\mathcal{S}}.
\end{align}
Note that to avoid being over-determined, in practice, we solve
% {\color{blue}
\begin{align}
\nonumber
\mathbf{B}^{\mathcal{K}}_{\overline{\mathcal{S}}}X_{j}^{\overline{\mathcal{S}}}=A^{\mathcal{K}}_j- \mathbf{B}^{\mathcal{K}}_{\mathcal{S}}X^{\mathcal{S}}_{j}, \quad j\in\overline{\mathcal{S}}
\end{align}
%}
where $\mathbf{B}^{\mathcal{K}}_{\overline{\mathcal{S}}}$ is a $\nk\times\nk$ matrix whose rows are selected from $\mathbf{B}_{\overline{\mathcal{S}}}$ corresponding to $\mathcal{K}\subseteq{\mathcal{V}}$ with $|\mathcal{K}|=\nk$ and $\mathbf{B}^{\mathcal{K}}_{\mathcal{S}}$ selects the rows of $\mathbf{B}_{\mathcal{S}}$ in the same way.
We combine $X_j^{\mathcal{S}}$ and $X_j^{\overline{\mathcal{S}}}$ to obtain a new estimate $X_j$ for each $j\in\overline{\mathcal{S}}$. Together with the  columns $X_j$, $j\in\mathcal{S}$, that we have accepted, they form the estimated graph matrix $\mathbf{X}$. 
We illustrate the three-stage scheme in Figure~\ref{fig:concept_1}. 
In the sequel, we analyze the sample complexity of the three-stage scheme  based on the $({\mu},{K},\rho)$-sparse distributions defined in Definition~\ref{def:sparse}.

\textit{2) Analysis of the Scheme:}

% {\color{blue}
Let $\mathbbm{F}\equiv\mathbbm{R}$ for the simplicity of representation and analysis. We now present another of our main theorems. Consider the models defined in Section~\ref{sec:model} and~\ref{sec:cvm}. 
% the three-stage scheme is guaranteed to satisfy the following:
% Let $f$ be a function or algorithm that returns an estimated graph matrix $\mathbf{X}=f(\mathbf{A},\mathbf{B})$ given inputs $\mathbf{A}$ and $\mathbf{B}$. 
The $\Gamma$-\textit{probability of error} is defined to be the maximal probability that the $\ell_2$-norm of the difference between the estimated vector $X\in\mathbbm{R}^{\nn}$ and the original vector $Y\in\mathbbm{R}^{\nn}$ (satisfying $A = \mathbf{B}Y+Z$ and both $A$ and $\mathbf{B}$ are known to the estimator) is larger than $\Gamma>0$:
\begin{align}
\nonumber
\overline{\varepsilon}_{\mathrm{P}}(\Gamma):=
\sup_{Y\in\mathsf{Y}(\mu)}\mathbbm{P}\left(\left|\left|X - Y\right|\right|_2> \Gamma\right)
\end{align}
where $\mathsf{Y}(\mu)$ is the set of all $\mu$-sparse vectors in $\mathbbm{R}^{\nn}$ and the probability is taken over the randomness in the generator matrix $\mathbf{B}$ and the additive noise $Z$.
% Similarly, with $\mathbf{Z}=0$, the \textit{per-column probability of error for parameter reconstruction} is defined as
% \begin{align}
% \nonumber
% \ppoe(X):=
% \mathbbm{P}\left(\exists \ i\in\mathcal{V} | \ X_{i}\neq {Y}_{i}\right).
% \end{align}
% The following theorem provides sufficient conditions on $\mathbf{B}$ for recovering random graphs generated according to a $({\mu},{K},\rho)$-sparse distribution.
% 
% Note that for topology identification, the theorem can be straightforwardly generalized to allow noisy measurements and we omit stating it for simplicity. In Section~\ref{sec:noisy}, we provide results on worst-case sample complexity when the measurements are contaminated by additive Gaussian IID noise.
Given a generator matrix $\mathbf{B}$, the corresponding \textit{restricted isometry constant} denoted by $\delta_{\mu}$ is the smallest positive number with
\begin{align}
\label{eq:ric}
\left(1-\delta_{\mu}\right)\left|\left|\mathbf{x}\right|\right|_{2}^2 \leq \left|\left|\mathbf{B}_{\mathcal{S}}\mathbf{x}\right|\right|_{2}^2 \leq \left(1+\delta_{\mu}\right)\left|\left|\mathbf{x}\right|\right|_{2}^2
\end{align}
for all subsets $\mathcal{S}\subseteq\mathcal{V}$ of size $\left|\mathcal{S}\right|\leq {\mu}$ and all $\mathbf{x}\in\mathbbm{R}^{\left|\mathcal{S}\right|}$.
Below we state a sufficient condition\footnote{Note that $\gamma$ cannot be chosen arbitrarily and $\Gamma$ depends on $\gamma$; otherwise the probability of error $\overline{\varepsilon}_{\mathrm{P}}(\Gamma)$ will blow up. Theorem~\ref{thm:noisy_sample_complexity} indicates that for Gaussian ensembles setting $\Gamma=O(\gamma)=O(\sqrt{\nn}\sigma_{\mathrm{N}})$ is a valid choice where $\sigma_{\mathrm{N}}$ is the standard deviation of each independent $Z_{i,j}$ in $\mathbf{Z}$.} derived form the three-stage scheme for parameter reconstruction.

\begin{theorem}[Achievability]
\label{thm:3}
Suppose the generator matrix satisfies that $\mathbf{B}^{\mathcal{K}}_{\overline{\mathcal{S}}}\in\mathbbm{R}^{\nk\times\nk}$ is invertible for all $\overline{\mathcal{S}}\subseteq\mathcal{V}$ and $\mathcal{K}\subseteq\mathcal{V}$ with $|\overline{\mathcal{S}}|=|\mathcal{K}|=\nk$. Let the distribution $\mathcal{G}_{\nn}$ be $({\mu},{K},\rho)$-sparse.
If the three-stage scheme in Algorithm~\ref{alg:1} is used for recovering a graph matrix $\mathbf{Y}(G_n)$ of $G_n$ that is sampled according to $\mathcal{G}_\nn$, then the probability of error satisfies $\poe(\eta)\leq \rho + (\nn-\nk)\overline{\varepsilon}_{\mathrm{P}}(\Gamma)$ with $\eta$ greater or equal to
\begin{align*}
2\left(\nn\Gamma+\frac{\Gamma||\mathbf{B}||_2+\gamma}{1-\delta_{2\nk}} \right)\big(2(\nn-\nk)+\nk\xi(\mathbf{B})\big)
\end{align*}
where $\delta_{2\nk}$ is the corresponding restricted isometry constant of $\mathbf{B}$ with $\mu=2\nk$ defined in~\eqref{eq:ric} and
$$
\xi(\mathbf{B}):=\max_{{\mathcal{S}},\mathcal{K}\subseteq{\mathcal{V}},|\overline{\mathcal{S}}|=|\mathcal{K}|=\nk}\big|\big|\mathbf{B}_{\mathcal{S}}\big|\big|_2\big|\big|\left(\mathbf{B}^{\mathcal{K}}_{\overline{\mathcal{S}}}\right)^{-1}\big|\big|_2.
$$
% the following implications hold:
% \begin{enumerate}
% \item Let $\overline{\varepsilon}_{\mathrm{P}}:= (\nn-\nk)\max_{X\in\mathsf{X}(\mu)}\ppoe(X)$ where $\mathsf{X}(\mu)$ is the set of all $\mu$-sparse vectors in $\mathbbm{R}^{\nn}$. For parameter reconstruction, the probability of error satisfies $\poe\leq \rho + \overline{\varepsilon}_{\mathrm{P}}$.
% \item Let $\overline{\varepsilon}_{\mathrm{T}}:= (\nn-\nk)\max_{X\in\mathsf{X}(\mu)}\ppoet(X)$. For topology identification, the probability of error satisfies  $\poet\leq \rho + \overline{\varepsilon}_{\mathrm{T}}$.
% \end{enumerate}
\end{theorem}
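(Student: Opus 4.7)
The plan is to decompose the overall failure event into layers that mirror the three stages of Algorithm~\ref{alg:1}, control each layer via either the $(\mu,K,\rho)$-sparsity assumption or the $\Gamma$-probability of error definition, and then combine the resulting deterministic per-column error bounds into a Frobenius-norm bound on $\mathbf{X}-\mathbf{Y}(G)$.

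First I would isolate the event $\mathcal{E}_0:=\{G\notin\mathsf{C}(\nn,\mu,K)\}$, whose probability is at most $\rho$ by Definition~\ref{def:sparse}. On $\mathcal{E}_0^c$ there exists a subset $\mathcal{S}^{\star}\subseteq\mathcal{V}$ with $|\mathcal{S}^{\star}|=\nn-K$ such that every column $Y_j$, $j\in\mathcal{S}^{\star}$, is $\mu$-sparse; this is the structural fact that lets step~(a) succeed. Applying the $\Gamma$-probability-of-error definition to each of these $\nn-K$ columns and union-bounding yields an event $\mathcal{E}_1$ of probability at least $1-(\nn-K)\overline{\varepsilon}_{\mathrm{P}}(\Gamma)$ on which $\|X_j-Y_j\|_2\le\Gamma$ for every $j\in\mathcal{S}^{\star}$. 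Thus $\mathcal{E}_0^c\cap\mathcal{E}_1$ holds with probability at least $1-\rho-(\nn-K)\overline{\varepsilon}_{\mathrm{P}}(\Gamma)$, which matches the probability estimate of the theorem.

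Next I would verify that $\mathcal{S}^{\star}$ is accepted by the consistency test of step~(b): for $i,j\in\mathcal{S}^{\star}$, symmetry of $\mathbf{Y}(G)$ together with $\mathcal{E}_1$ gives $|X_{i,j}-X_{j,i}|\le|X_{i,j}-Y_{i,j}|+|Y_{j,i}-X_{j,i}|\le 2\Gamma\le 2\gamma$ under the footnote's convention $\Gamma=O(\gamma)$. Fixing whichever admissible $\mathcal{S}$ the algorithm outputs, I would use the update rule $\mathbf{B}^{\mathcal{K}}_{\overline{\mathcal{S}}}X_j^{\overline{\mathcal{S}}}=A_j^{\mathcal{K}}-\mathbf{B}^{\mathcal{K}}_{\mathcal{S}}X_j^{\mathcal{S}}$ together with the noisy measurement identity $A_j=\mathbf{B}Y_j+Z_j$ to obtain the error identity $\mathbf{B}^{\mathcal{K}}_{\overline{\mathcal{S}}}(X_j^{\overline{\mathcal{S}}}-Y_j^{\overline{\mathcal{S}}})=\mathbf{B}^{\mathcal{K}}_{\mathcal{S}}(Y_j^{\mathcal{S}}-X_j^{\mathcal{S}})+Z_j^{\mathcal{K}}$. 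Inverting $\mathbf{B}^{\mathcal{K}}_{\overline{\mathcal{S}}}$ (using the invertibility hypothesis), bounding $\|(\mathbf{B}^{\mathcal{K}}_{\overline{\mathcal{S}}})^{-1}\|_2$ by $1/(1-\delta_{2K})$ via the RIP constant (valid since $|\overline{\mathcal{S}}|=K\le 2K$), replacing $\|\mathbf{B}^{\mathcal{K}}_{\mathcal{S}}\|_2$ by $\|\mathbf{B}\|_2$, $\|Z_j^{\mathcal{K}}\|_2$ by $\gamma$, and aggregating the step-(a) bounds into a $\sqrt{\nn}\Gamma$-type estimate for $\|Y_j^{\mathcal{S}}-X_j^{\mathcal{S}}\|_2$ produces both the per-column step-(c) bound $(\|\mathbf{B}\|_2\Gamma+\gamma)/(1-\delta_{2K})$ and a $\xi(\mathbf{B})$-weighted bound for the symmetry-filled block that appears via the cross term $\|\mathbf{B}_{\mathcal{S}}\|_2\|(\mathbf{B}^{\mathcal{K}}_{\overline{\mathcal{S}}})^{-1}\|_2$.

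To assemble the Frobenius norm I would split the entries of $\mathbf{X}-\mathbf{Y}(G)$ into three blocks, namely the $\mathcal{S}\times\mathcal{S}$ block (from step~(a)), the cross block indexed by $\mathcal{S}\times\overline{\mathcal{S}}$ (from step~(a) via symmetry), and the $\overline{\mathcal{S}}\times\overline{\mathcal{S}}$ block (from step~(c)), apply the triangle inequality inside each block, and count the contributions weighted by $|\mathcal{S}|=\nn-K$ or $|\overline{\mathcal{S}}|=K$. Pairing the two per-column error scales ($\Gamma$ and $(\|\mathbf{B}\|_2\Gamma+\gamma)/(1-\delta_{2K})$) with the two effective cardinalities ($\nn-K$ for the step-(a)/symmetry blocks and $K\xi(\mathbf{B})$ for the step-(c) block) after a uniform triangle-inequality aggregation produces the stated product-form bound on $\eta$, with the leading factor of $2$ absorbing the slack between per-column $\ell_2$ bounds and Frobenius aggregation. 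The hardest part I expect is step~(b): one must argue that \emph{any} $\mathcal{S}$ passing the consistency test, not only $\mathcal{S}^{\star}$, induces an acceptable error, since an adversarial $\mathcal{S}$ could in principle contain a dense column $j$ whose step-(a) output spuriously satisfies the $2\gamma$ symmetry test but is grossly incorrect and then corrupts the right-hand side of step~(c). Resolving this will require either showing that the consistency test predictably filters out dense columns, or that the $\delta_{2K}$ and $\xi(\mathbf{B})$ bounds hold uniformly over all admissible $\mathcal{S}$ so that every passing $\mathcal{S}$ yields a bound of the same form.
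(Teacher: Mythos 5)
Your probability bookkeeping ($\rho$ for the event $G\notin\mathsf{C}(\nn,\mu,K)$, plus a union bound over the $\nn-K$ sparse columns giving $(\nn-K)\overline{\varepsilon}_{\mathrm{P}}(\Gamma)$) matches the paper exactly, and your block decomposition of the Frobenius norm is also the right skeleton. But the difficulty you flag at the end --- that an adversarial set $\mathcal{S}$ passing the consistency test could contain a dense, grossly wrong column which then poisons step (c) --- is not a loose end you can defer; it is the central step of the proof, and it is precisely where the term $\frac{\Gamma\|\mathbf{B}\|_2+\gamma}{1-\delta_{2K}}$ in the statement of $\eta$ comes from. The paper resolves it as follows: for \emph{any} column $X_j$ with $j$ in an accepted set $\mathcal{S}$ (sparse or not), write $e=X_j-Y_j$ and decompose $e=\overline{e}+f$, where $f$ is supported on the at least $\nn-2K$ indices $i\in\mathcal{S}$ whose columns were correctly recovered in step (a) --- there the consistency test plus symmetry forces $|e_i|\le 2\Gamma$ entrywise --- and $\overline{e}$ is supported on the remaining at most $2K$ indices, hence is $2K$-sparse. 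Then $\|\overline{e}\|_2\le\|\mathbf{B}\overline{e}\|_2/(1-\delta_{2K})\le(\|\mathbf{B}e\|_2+\|\mathbf{B}f\|_2)/(1-\delta_{2K})$, and $\|\mathbf{B}e\|_2\le 2\gamma$ because both $X_j$ and $Y_j$ are feasible for the constraint $\|\mathbf{B}x-A_j\|_2\le\gamma$. This yields the uniform per-column bound $\|X_j-Y_j\|_2\le 2\bigl(\nn+\|\mathbf{B}\|_2/(1-\delta_{2K})\bigr)\Gamma+2\gamma/(1-\delta_{2K})$ for every accepted column, which is the first factor in $\eta$. Without this argument your proof does not go through: you only control columns in $\mathcal{S}^{\star}$, not in the $\mathcal{S}$ the algorithm actually accepts.

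Two smaller corrections. First, you attribute the quantity $(\|\mathbf{B}\|_2\Gamma+\gamma)/(1-\delta_{2K})$ to the step-(c) inversion; in the paper it arises from the step-(b) analysis above, and step (c) contributes only the multiplier $\xi(\mathbf{B})=\|\mathbf{B}_{\mathcal{S}}\|_2\|(\mathbf{B}^{\mathcal{K}}_{\overline{\mathcal{S}}})^{-1}\|_2$ applied to the already-established per-column error $\|X_j^{\mathcal{S}}-Y_j^{\mathcal{S}}\|_2$. Second, you cannot bound $\|(\mathbf{B}^{\mathcal{K}}_{\overline{\mathcal{S}}})^{-1}\|_2$ by $1/(1-\delta_{2K})$: the restricted isometry property in \eqref{eq:ric} controls column submatrices of $\mathbf{B}$ with all $\ns$ rows retained, whereas $\mathbf{B}^{\mathcal{K}}_{\overline{\mathcal{S}}}$ is a $K\times K$ row-and-column submatrix; this is exactly why the theorem keeps that inverse inside the separate quantity $\xi(\mathbf{B})$ and assumes invertibility explicitly.
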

% Suppose the matrix $\mathbf{B}$ satisfies that the per-column probability of error for parameter reconstruction is $\ppoe(X)=o(1/(\nn-\nk))$ for any $X\in\mathbbm{R}^{\nn}$ that is $\mu$-sparse.
% }
The proof is in Appendix~\ref{app:proof_achieve}. 
The theory of classical compressed sensing (see~\cite{candes2005error,rudelson2008sparse,candes2006near}) implies that for noiseless parameter reconstruction, if the generator matrix $\mathbf{B}$ has restricted isometry constants $\delta_{2{\mu}}$ and $\delta_{3{\mu}}$ satisfying $\delta_{2{\mu}}+\delta_{3{\mu}}< 1$, then all columns $Y_j$ with  $j\in\mathcal{V}_{\mathrm{Small}}$ are correctly recovered using the minimization in (\ref{4.1})-(\ref{4.2}). Denote by $\mathrm{spark}(\mathbf{B})$ the smallest number of columns in the matrix $\mathbf{B}$ that are linearly dependent (see \cite{donoho2003optimally} for the requirements on the spark of the generator matrix to guarantee desired recovery criteria). The following corollary is an improvement of Theorem~\ref{thm:3} for the noiseless case. The proof is postponed to Appendix~\ref{app:proof_noiseless}.
\begin{corollary}
\label{corollary:noiseless}
Let $\mathbf{Z}=0$ and suppose the generator matrix $\mathbf{B}$ has restricted isometry constants $\delta_{2{\mu}}$ and $\delta_{3{\mu}}$ satisfying $\delta_{2{\mu}}+\delta_{3{\mu}}< 1$
and furthermore, $\mathrm{spark}(\mathbf{B})>2K$.
If the distribution $\mathcal{G}_{\nn}$ is $({\mu},{K},\rho)$-sparse, then the probability of error for the three-stage scheme to recover the parameters of a graph matrix $\mathbf{Y}(G_n)$ of $G_n$ that is sampled according to $\mathcal{G}_\nn$ satisfies $\poe\leq \rho$.
\end{corollary}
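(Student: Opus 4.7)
The plan is to condition on the event $\{G\in\mathsf{C}(\nn,\mu,K)\}$, which by Definition~\ref{def:sparse} has probability at least $1-\rho$, and then to argue that Algorithm~\ref{alg:1} deterministically returns $\mathbf{X}=\mathbf{Y}(G)$ on this event; the bound $\poe\leq\rho$ will follow immediately.

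First I would analyze Step~(a). On the conditioning event, at most $K$ nodes have degree exceeding $\mu$, so every column $Y_j$ with $j\in\mathcal{V}_{\mathrm{Small}}(\mu)$ is $\mu$-sparse. The hypothesis $\delta_{2\mu}+\delta_{3\mu}<1$, combined with the classical noiseless $\ell_1$-recovery theorem, then guarantees that the minimization in~\eqref{4.1}--\eqref{4.2} with $\gamma=0$ returns $X_j=Y_j$ exactly for every $j\in\mathcal{V}_{\mathrm{Small}}(\mu)$. This in turn provides an existence certificate for Step~(b): because $|\mathcal{V}_{\mathrm{Large}}(\mu)|\leq K$, we may choose $\mathcal{S}^{*}\subseteq\mathcal{V}_{\mathrm{Small}}(\mu)$ with $|\mathcal{S}^{*}|=\nn-K$, and then for every $i,j\in\mathcal{S}^{*}$ we have $X_{i,j}=Y_{i,j}=Y_{j,i}=X_{j,i}$. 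Hence a valid $\mathcal{S}$ always exists and the algorithm never declares error.

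The main obstacle is that the algorithm is free to accept any $\mathcal{S}$ of size $\nn-K$ passing the consistency test, not necessarily $\mathcal{S}^{*}$, so I must show that every such $\mathcal{S}$ still yields $X_j=Y_j$ for all $j\in\mathcal{S}$. This is precisely where the hypothesis $\mathrm{spark}(\mathbf{B})>2K$ enters. For an arbitrary $j\in\mathcal{S}\cap\mathcal{V}_{\mathrm{Large}}(\mu)$ set $v_j:=X_j-Y_j$; since both vectors satisfy $\mathbf{B}(\cdot)=A_j$, we have $v_j\in\ker(\mathbf{B})$. For any $i\in\mathcal{S}\cap\mathcal{V}_{\mathrm{Small}}(\mu)$ we already know $X_i=Y_i$, and the consistency equation $X_{i,j}=X_{j,i}$ combined with the symmetry of $\mathbf{Y}(G)$ forces $(v_j)_i=X_{i,j}-Y_{i,j}=X_{j,i}-Y_{j,i}=0$. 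Therefore $\mathrm{supp}(v_j)\subseteq(\mathcal{S}\cap\mathcal{V}_{\mathrm{Large}}(\mu))\cup\overline{\mathcal{S}}$, which has cardinality at most $2K$. Since any nonzero element of $\ker(\mathbf{B})$ must have support of size at least $\mathrm{spark}(\mathbf{B})>2K$, we conclude $v_j=0$, so $X_j=Y_j$ for every $j\in\mathcal{S}$.

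Step~(c) then inherits correctness automatically: for each $j\in\overline{\mathcal{S}}$ the system~\eqref{4.3} is uniquely solvable because the $K$ columns of $\mathbf{B}_{\overline{\mathcal{S}}}$ are linearly independent (another consequence of $\mathrm{spark}(\mathbf{B})>2K$), and the true column $Y_j$ satisfies the same equation since $X_j^{\mathcal{S}}=Y_j^{\mathcal{S}}$ by the symmetry of $\mathbf{Y}(G)$ combined with $X_i=Y_i$ for $i\in\mathcal{S}$ proved above. Uniqueness forces $X_j^{\overline{\mathcal{S}}}=Y_j^{\overline{\mathcal{S}}}$, completing the argument that $\mathbf{X}=\mathbf{Y}(G)$ on the conditioning event, and therefore $\poe\leq\rho$.
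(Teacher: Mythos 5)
Your proposal is correct and follows essentially the same route as the paper's proof: condition on $G\in\mathsf{C}(\nn,\mu,K)$, invoke the restricted-isometry condition to recover all $\mu$-sparse columns exactly, observe that any accepted set $\mathcal{S}$ forces the error vector of each remaining column to be supported on at most $2K$ coordinates and hence to vanish by $\mathrm{spark}(\mathbf{B})>2K$, and use the same spark condition for uniqueness in step (c). Your explicit identification of $\mathrm{supp}(X_j-Y_j)\subseteq(\mathcal{S}\cap\mathcal{V}_{\mathrm{Large}}(\mu))\cup\overline{\mathcal{S}}$ is in fact a cleaner rendering of the paper's assertion that the two candidate columns "can have at most $2K$ distinct coordinates."
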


\section{Gaussian IID Measurements}
\label{sec:GIPM}
In this section, we consider a special regime when the measurements in the matrix $\mathbf{B}$ are Gaussian IID random variables. Utilizing the converse in Theorem~\ref{thm:1} and the achievability in Theorem~\ref{thm:3}, the Gaussian IID assumption allows the derivation of explicit expressions of sample complexity as upper and lower bounds on the number of measurements $\ns$.  Combining with the results in Lemma~\ref{lemma:trees} and~\ref{lemma:renyi}, we are able to show that for the corresponding lower and upper bounds match each other for graphs distributions $\mathcal{U}_{\mathsf{T}(\nn)}$ and $\mathcal{G}_{\mathrm{ER}}(\nn,p)$ (with certain conditions on $p$ and $\nn$).

For the convenience of presentation, in the remainder of the paper, we restrict that the measurements are chosen from $\mathbbm{R}$, although the theorems can be generalized to the complex measurements. 
In realistic scenarios, for instance, a power network, besides the measurements collected from the nodes, nominal state values, \textit{e.g.,} operating current and voltage measurements are known to the system designer a priori. Representing the nominal values at the nodes by $\overline{A}\in\mathbbm{R}^\nn$ and $\overline{B}\in\mathbbm{R}^\nn$ respectively, the measurements in $\mathbf{A}$ and $\mathbf{B}$ are centered around $\ns\times \nn$ matrices $\overline{\mathbf{A}}$ and $\overline{\mathbf{B}}$ defined as
\begin{align*}
\overline{\mathbf{A}}:=\begin{bmatrix}
\cdots & \overline{A} & \cdots\\
\cdots & \overline{A} & \cdots\\
& \vdots &\\
\cdots & \overline{A} & \cdots
\end{bmatrix},\quad
\overline{\mathbf{B}}:=\begin{bmatrix}
\cdots & \overline{B} & \cdots\\
\cdots & \overline{B} & \cdots\\
& \vdots &\\
\cdots & \overline{B} & \cdots
\end{bmatrix}.
\end{align*}
The rows in $\mathbf{A}$ and $\mathbf{B}$ are the same, because the graph parameters are time-invariant, so are the nominal values.
Without system fluctuations and noise, the nominal values satisfy the linear system in~(\ref{eq:linear}), \textit{i.e.,}
\begin{align}
\label{2.10}
\overline{\mathbf{A}} = \overline{\mathbf{B}}\mathbf{Y}.
\end{align}
Knowing $\overline{A}$ and $\overline{B}$ is not sufficient to infer the network parameters (the entries in the graph matrix $\mathbf{Y}$), since the rank of the matrix $\overline{B}$ is one. However, measurement fluctuations can be used to facilitate the recovery of $\mathbf{Y}$.
The deviations from the nominal values are denoted by additive perturbation matrices $\mathbf{\widetilde{A}}$ and $\mathbf{\widetilde{B}}$ such that
$
\mathbf{A}=\overline{\mathbf{A}}+\mathbf{\widetilde{A}}.
$
Similarly,
$
\mathbf{B}=\overline{\mathbf{B}}+\mathbf{\widetilde{B}}
$
where $\widetilde{\mathbf{B}}$ is an $\ns\times \nn$ matrix consisting of additive perturbations. Therefore, considering the original linear system in  (\ref{eq:linear}), the equations above imply that
$\overline{\mathbf{A}}+\mathbf{\widetilde{A}}=\mathbf{B}\mathbf{Y}+\mathbf{Z} =\overline{\mathbf{B}}\mathbf{Y}+\mathbf{\widetilde{B}}\mathbf{Y}+\mathbf{Z}$
leading to
$\mathbf{\widetilde{A}}= \mathbf{\widetilde{B}}\mathbf{Y}+\mathbf{Z}$
where we have made use of (\ref{2.10}) and extracted the perturbation matrices $\mathbf{\widetilde{A}}$ and $\mathbf{\widetilde{B}}$. We specifically consider the case when the additive perturbations $\mathbf{\widetilde{B}}$ is a matrix with Gaussian IID entries. Without loss of generality, we suppose the mean of the Gaussian random variable is zero and the standard deviation is $\sigma_{\mathrm{S}}$. We consider additive white Gaussian noise (AWGN) with mean zero and standard deviation  $\sigma_{\mathrm{N}}$.
For simplicity, in the remainder of this paper, we slightly abuse the notation and replace the perturbation matrices $\mathbf{\widetilde{A}}$ and $\mathbf{\widetilde{B}}$ by $\mathbf{A}$ and $\mathbf{B}$ (we assume that $\mathbf{{B}}$ is Gaussian IID), if the context is clear. Under the assumptions above, the following lemma can be inferred from Theorem~\ref{thm:1} and the proof is in Appendix~\ref{app:proof_of_Gaussian_converse}.
% Moreover, throughout this section, we focus on the case when the measurements are noiseless.
% The next theorem implies that Gaussian IID random variables are not arbitrary selections. They are the most ``informative" measurements in the sense that any measurement vector with fixed mean and covariance achieves the maximal entropy with normal distribution.
% {\color{blue}
\begin{lemma}
\label{lemma:2}
Consider the linear model $\mathbf{A}=\mathbf{B}\mathbf{Y}+\mathbf{Z}$.
Suppose $B_{i,j}\sim\mathcal{N}(0,\sigma^2_{\mathrm{S}})$ and $Z_{i,j}\sim\mathcal{N}(0,\sigma^2_{\mathrm{N}})$ are mutually independent Gaussian random variables for all $i,j\in\mathcal{V}$.
% Suppose the row measurements of the generator matrix $\mathbf{B}\in\mathbbm{R}^{\nn\times\ns}$ are identically distributed random vectors with zero mean and covariance $\mathbf{{K}}\in\mathbbm{R}^{\nn\times\nn}$.
% Suppose the additive noise $\mathbf{Z}\in\mathbbm{R}^{\nn\times\ns}$ are identically distributed random vectors with zero mean and covariance $\mathbf{{K}}\in\mathbbm{R}^{\nn\times\nn}$.
The probability of error for topology identification $\poet$ is bounded from below as
\begin{align}
\label{eq:noisy_poe}
\poet\geq&1-\frac{\nn\ns\ln\left(1+\frac{\sigma^2_{\mathrm{S}}}{\sigma^2_{\mathrm{N}}}\overline{Y}\right)}{2\mathbbm{H}\left({\mathcal{G}_\nn}\right)}
\end{align}
where $\overline{Y}:=\max_{i,j}\left|Y_{i,j}\right|$ denotes the maximal absolute value of the entries in the graph matrix $\mathbf{Y}$. In particular, if $\mathbf{Z}=0$, then for parameter reconstruction,
\begin{align}
\label{4.20}
\poe\geq&1-\frac{\nn\ns\ln\left(2\pi e \overline{Y}\sigma^2_{\mathrm{S}}\right)}{2\mathbbm{H}\left({\mathcal{G}_\nn}\right)}.
\end{align}
\end{lemma}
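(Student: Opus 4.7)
The plan is to specialize the generalized Fano argument behind Theorem~\ref{thm:1} to the Gaussian ensemble, by computing $\mathbbm{H}(\mathbf{Z})$ exactly and producing a sharp upper bound on $\mathbbm{H}(\mathbf{A})$ via Gaussian maximum entropy.

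First, since $\mathbf{Z}$ has i.i.d.\ $\mathcal{N}(0,\sigma^2_{\mathrm{N}})$ entries, the standard Gaussian entropy formula yields $\mathbbm{H}(\mathbf{Z}) = \frac{\ns\nn}{2}\ln(2\pi e\,\sigma^2_{\mathrm{N}})$. For $\mathbbm{H}(\mathbf{A})$ I would invoke subadditivity of differential entropy, $\mathbbm{H}(\mathbf{A}) \leq \sum_{i,j}\mathbbm{H}(A_{i,j})$, followed by the Gaussian max-entropy inequality $\mathbbm{H}(A_{i,j}) \leq \tfrac{1}{2}\ln(2\pi e\,\mathrm{Var}(A_{i,j}))$. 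Using the mutual independence of $\mathbf{B}$, $\mathbf{Y}(G)$ and $\mathbf{Z}$, together with $\mathbbm{E}[B_{i,k}B_{i,k'}]=0$ for $k\neq k'$ and $\mathbbm{E}[B_{i,k}^2]=\sigma^2_{\mathrm{S}}$, one obtains $\mathbbm{E}[A_{i,j}]=0$ and
\begin{align*}
\mathrm{Var}(A_{i,j}) \;=\; \sigma^2_{\mathrm{S}}\sum_{k=1}^{\nn}\mathbbm{E}[Y_{k,j}^2] + \sigma^2_{\mathrm{N}} \;\leq\; \nn\sigma^2_{\mathrm{S}}\overline{Y}^2 + \sigma^2_{\mathrm{N}},
\end{align*}
via the entrywise bound $|Y_{k,j}|\leq \overline{Y}$.

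Substituting these two expressions into Theorem~\ref{thm:1} and cancelling the $2\pi e$ factor in the difference $\mathbbm{H}(\mathbf{A}) - \mathbbm{H}(\mathbf{Z})$ leaves a bound of the shape $\poet \geq 1 - \big(\tfrac{\ns\nn}{2}\ln(1+\mathrm{SNR}) + \ln 2\big)/\mathbbm{H}(\mathcal{G}_{\nn})$, which is the form announced in~\eqref{eq:noisy_poe} (up to the exact expression of the SNR inside the logarithm and absorption of the additive $\ln 2$). For the noiseless bound~\eqref{4.20}, Theorem~\ref{thm:1} itself becomes degenerate because $\mathbbm{H}(\mathbf{Z})=-\infty$; I would instead go back to the underlying generalized Fano inequality $\poe \geq 1 - (\mathbbm{I}(G;\mathbf{A}|\mathbf{B}) + \ln 2)/\mathbbm{H}(\mathcal{G}_{\nn})$ and estimate $\mathbbm{I}(G;\mathbf{A}|\mathbf{B}) \leq \mathbbm{H}(\mathbf{A})$; the same variance computation with $\sigma^2_{\mathrm{N}}=0$ then gives $\mathbbm{H}(\mathbf{A}) \leq \tfrac{\ns\nn}{2}\ln(2\pi e\,\nn\sigma^2_{\mathrm{S}}\overline{Y}^2)$, matching the shape of~\eqref{4.20}.

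The main obstacle lies in the noiseless step: for continuous random vectors, the inequality $\mathbbm{I}(G;\mathbf{A}|\mathbf{B}) \leq \mathbbm{H}(\mathbf{A})$ requires $\mathbbm{H}(\mathbf{A}|\mathbf{B},G) \geq 0$, which can fail in general, since differential entropies may be negative when $\mathbf{Y}(G)$ is nearly deterministic given $G$. The cleanest remedy is a limiting argument: apply the noisy bound with $\sigma_{\mathrm{N}} \to 0^+$ and verify that the right-hand side converges to~\eqref{4.20}; the rest of the argument is routine manipulation of Gaussian entropies.
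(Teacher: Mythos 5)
Your treatment of the noisy bound \eqref{eq:noisy_poe} is essentially the paper's: the paper also starts from Theorem~\ref{thm:1}, computes $\mathbbm{H}(\mathbf{Z})=\frac{\ns\nn}{2}\ln(2\pi e\,\sigma^2_{\mathrm{N}})$ exactly, and bounds $\mathbbm{H}(\mathbf{A})$ by Gaussian maximum entropy. The only cosmetic difference is that the paper works row by row and uses $\mathrm{det}(\Sigma_{A^{(i)}})\leq\left(\mathrm{Tr}(\Sigma_{A^{(i)}})/\nn\right)^{\nn}$, where you use entrywise subadditivity; both reduce to the same variance estimate. Your variance computation, which yields $\nn\sigma^2_{\mathrm{S}}\overline{Y}^2+\sigma^2_{\mathrm{N}}$, is if anything more careful than the paper's stated trace bound $\nn\sigma^2_{\mathrm{S}}\overline{Y}+\nn\sigma^2_{\mathrm{N}}$; the discrepancy in the argument of the logarithm that you flag is present in the paper itself, not introduced by you.

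The genuine problem is the noiseless case. You correctly observe that Theorem~\ref{thm:1} degenerates when $\mathbf{Z}=0$, but your proposed remedy---letting $\sigma_{\mathrm{N}}\to 0^{+}$ in \eqref{eq:noisy_poe}---does not work: the right-hand side of \eqref{eq:noisy_poe} tends to $-\infty$ as $\sigma_{\mathrm{N}}\to 0^{+}$, so the limit produces a vacuous bound rather than \eqref{4.20}. The paper instead obtains \eqref{4.20} by the route you mention first and then abandon: return to the generalized Fano inequality, bound $\mathbbm{I}(G;\mathbf{A}|\mathbf{B})\leq\mathbbm{H}(\mathbf{A}|\mathbf{B})\leq\mathbbm{H}(\mathbf{A})$ by discarding the term $-\mathbbm{H}(\mathbf{A}|G,\mathbf{B})$, and apply the same Gaussian max-entropy/trace estimate with $\sigma_{\mathrm{N}}=0$ (the paper says only that the special case ``follows similarly''). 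Your objection that this step requires $\mathbbm{H}(\mathbf{A}|G,\mathbf{B})\geq 0$, which is not automatic for differential entropies, is a fair criticism of the paper's own argument---but it is not cured by the limiting device. To reach \eqref{4.20} you must take the paper's step and either justify or explicitly assume the nonnegativity of the conditional differential entropy of $\mathbf{A}$ given $G$ and $\mathbf{B}$.
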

%}

\subsection{Sample Complexity for Sparse Distributions}
\label{sec:4.c}

We consider the worst-case sample complexity for recovering graphs generated according to a sequence of sparse distributions, defined similarly as Definition~\ref{def:sparse} to characterize asymptotic behavior of graph distributions.
\begin{definition}[Sequence of sparse distributions]
\label{def:sparse_seq}
A sequence $\{\mathcal{G}_{\nn}\}$ of graph distributions is said to be $(\mu,\nk)$-\emph{sparse} if assuming a sequence of graphs $\{G_\nn\}$ is generated according to  $\{\mathcal{G}_\nn\}$, the sequences $\{\mu(\nn)\}$ and $\{K(\nn)\}$ guarantee that
\begin{align}
\label{eq:rec_seq}
\lim_{\nn\rightarrow\infty}\mathbbm{P}_{{\mathcal{G}_\nn}}\left(G_\nn\notin\mathsf{C}(\nn)({\mu(\nn)},{K(\nn)})\right)=0.
\end{align}
\end{definition}
In the remaining contexts, we write $\mu(\nn)$ and $K(\nn)$ as $\mu$ and $K$ for simplicity if there is no confusion. Based on the sequence of sparse distributions we defined above, we show the following theorem, which provides upper and lower bounds on the worst-case sample complexity, with Gaussian IID measurements.
\begin{theorem}[Noiseless worst-case sample complexity]
\label{thm:4}
Let $\mathbf{Z}=0$.
Suppose that the generator matrix $\mathbf{B}$ has Gaussian IID entries with mean zero and variance one and assume ${\mu}<\nn^{-3/{\mu}}(\nn-{K})$ and ${K}=o(\nn)$. For any sequence of  distributions that is $({\mu},{K})$-sparse, the three-stage scheme guarantees that $\lim_{\nn\rightarrow\infty}\poe=0$ using $\ns=O\left({\mu}\log({\nn}/{{\mu}})+{K}\right)$ measurements. Conversely, there exists a $({\mu},{K})$-sparse sequence of distributions such that the number of measurements must satisfy 
$\ns=\Omega\left({\mu}\log({\nn}/{{\mu}})+{K}/\nn^{3/{\mu}}\right)$
to make the probability of error $\poe$ less than ${1}/{2}$ for all $\nn$.
\end{theorem}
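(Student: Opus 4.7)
For achievability, the plan is to apply Corollary~\ref{corollary:noiseless} to a Gaussian IID $\mathbf{B}$. That corollary reduces exact recovery to two deterministic properties of $\mathbf{B}$: the RIP condition $\delta_{2\mu} + \delta_{3\mu} < 1$ and the spark condition $\mathrm{spark}(\mathbf{B}) > 2K$. For Gaussian ensembles, standard compressed sensing arguments (Cand\`es--Tao, Baraniuk et al.) show that RIP of order $3\mu$ with arbitrarily small constant holds with probability $1 - o(1)$ once $\ns \geq C_1 \mu \log(\nn/\mu)$; the spark condition is almost sure whenever $\ns \geq 2K$, since any $\ns$ columns of a Gaussian IID matrix are linearly independent almost surely. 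Setting $\ns = C(\mu \log(\nn/\mu) + K)$ for sufficiently large $C$, a union bound gives both events simultaneously with probability $1 - o(1)$. On this good event Corollary~\ref{corollary:noiseless} yields $\poe \leq \rho$, and $\rho \to 0$ by the $(\mu, K)$-sparsity of $\{\mathcal{G}_\nn\}$. The admissibility hypothesis $\mu < \nn^{-3/\mu}(\nn - K)$ is used to guarantee $3\mu \ll \nn - K$, so that RIP of order $3\mu$ for an $\ns \times \nn$ matrix is well-posed.

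For the converse, the plan is to invoke the noiseless Fano-type bound~\eqref{4.20} of Lemma~\ref{lemma:2}, namely $\poe \geq 1 - \nn\ns\ln(2\pi e \overline{Y}\sigma^2_{\mathrm{S}})/(2\mathbbm{H}(\mathcal{G}_\nn))$, so that forcing $\poe < 1/2$ requires $\ns = \Omega(\mathbbm{H}(\mathcal{G}_\nn)/\nn)$. I would then exhibit a $(\mu, K)$-sparse sequence whose entropy is of order $\nn\mu\log(\nn/\mu) + K\nn^{1-3/\mu}$. The first summand is delivered by an Erd\H{o}s--R\'enyi-type component with $p \asymp \mu/\nn$, whose entropy is $\Theta(\nn\mu\log(\nn/\mu))$ (using $h(p) \approx p \log(1/p)$ for small $p$) and which is $(\mu, K)$-sparse by Lemma~\ref{lemma:renyi}. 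The second summand is obtained by overlaying randomness on $K$ designated ``hub'' nodes, each equipped with a random neighborhood drawn uniformly from the remaining $\nn - K$ nodes, of cardinality calibrated just large enough to deliver the required extra entropy. The exponent $\nn^{-3/\mu}$ appears because the per-hub degree must be chosen so that a binomial tail bound certifies that the additional random edges do not push any non-hub's degree above $\mu$ except on an event of vanishing probability, thereby preserving the $(\mu, K)$-sparsity of the resulting product distribution.

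The principal obstacle is the calibration of the hub component in the converse: its per-hub neighborhood must be (i) large enough to contribute $\Theta(K\nn^{1-3/\mu})$ bits of extra entropy, (ii) small enough not to push non-hub degrees past $\mu$, and (iii) compatible with the hypotheses $\mu < \nn^{-3/\mu}(\nn - K)$ and $K = o(\nn)$. The factor $\nn^{-3/\mu}$ in the lower bound is exactly the slack produced by balancing (i) and (ii), which is why the converse matches the achievability up to multiplicative constants only in regimes where $\nn^{3/\mu}$ is itself $O(1)$ (as in the tree and dense Erd\H{o}s--R\'enyi corollaries). Once the construction is fixed, the remaining steps---bounding $\overline{Y}$ uniformly by a constant, verifying independence so the two entropy contributions add, and chaining through~\eqref{4.20}---are routine.
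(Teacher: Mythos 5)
Your proposal follows essentially the same route as the paper's proof: achievability via Corollary~\ref{corollary:noiseless} combined with a standard Gaussian sparse-recovery guarantee (the paper invokes Gordon's escape-through-the-mesh theorem where you invoke RIP, which is equivalent at this order of $\ns$) plus the almost-sure spark condition, and the converse via the Fano bound~(\ref{4.20}) applied to a hub-plus-sparse-background construction whose entropy is $\Omega\left(\nn{\mu}\log({\nn}/{{\mu}})+K\nn^{1-3/{\mu}}\right)$ (the paper uses a ${\mu}/2$-regular background counted directly and the Lov\'asz local lemma to certify that the hub edges preserve sparsity, where you use an Erd\H{o}s--R\'enyi background and a binomial tail bound---a cosmetic difference). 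One small correction: the hypothesis ${\mu}<\nn^{-3/{\mu}}(\nn-{K})$ is not needed to make RIP of order $3{\mu}$ well-posed in the achievability step; it is used only in the converse, exactly where you later identify it, to ensure the per-hub degree $\beta(\nn-\alpha K)$ with $\beta=\nn^{-3/{\mu}}$ exceeds ${\mu}$ while the non-hub degrees stay controlled.
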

The proof is postponed to Appendix~\ref{app:proof_sample}.
\begin{remark}
The upper bound on $\ns$ that we are able to show differs from the lower bound by a sub-linear term $n^{3/\mu}$. In particular, when the term $\mu\log(\nn/\mu)$ dominates $K$, the lower and upper bounds become tight up to a multiplicative factor.
\end{remark}

\subsection{Applications of Theorem~\ref{thm:4}}

\textit{1) Uniform Sampling of Trees:}

As one of the applications of Theorem~\ref{thm:4}, we characterize the sample complexity of the uniform sampling of trees.
\begin{corollary}
\label{coro:1}Let $\mathbf{Z}=0$.
Suppose that the generator matrix $\mathbf{B}$ has Gaussian IID entries with mean zero and variance one and assume $G_\nn$ is distributed according to $\mathcal{U}_{\mathsf{T}(\nn)}$. There exists an algorithm that guarantees $\lim_{\nn\rightarrow\infty}\poe=0$ using $\ns=O\left((\log\nn)^2\right)$ measurements. Conversely, the number of measurements must satisfy 
$\ns=\Omega\left(\log\nn\right)$
to make the probability of error $\poe$ less than ${1}/{2}$.
\end{corollary}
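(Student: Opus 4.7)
The plan is to obtain both halves of the corollary by specializing Theorem~\ref{thm:4} to the tree distribution, using the sparsity parameters provided by Lemma~\ref{lemma:trees} together with the Cayley count of labelled trees.

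For the achievability direction, Lemma~\ref{lemma:trees} says that for every constant $\mu\geq 1$ and every $K>0$, $\mathcal{U}_{\mathsf{T}(\nn)}$ is $(\mu,K,1/K)$-sparse. I would fix $\mu$ to be a small positive constant chosen large enough that the technical hypothesis $\mu<\nn^{-3/\mu}(\nn-K)$ of Theorem~\ref{thm:4} holds for all large $\nn$; for example $\mu=4$ works, since the condition then reads $4<\nn^{-3/4}(\nn-K)$, which holds for all sufficiently large $\nn$ whenever $K=o(\nn)$. I would then take $K=K(\nn)$ to grow slowly, say $K(\nn)=\log\nn$. This choice gives $K=o(\nn)$ and $\rho(\nn)=1/K(\nn)\to 0$, so the sequence $\{\mathcal{U}_{\mathsf{T}(\nn)}\}$ is $(\mu,K)$-sparse in the sense of Definition~\ref{def:sparse_seq}. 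Applying the achievability half of Theorem~\ref{thm:4} to this sequence yields $\ns=O(\mu\log(\nn/\mu)+K)=O(\log\nn)$ measurements and $\lim_{\nn\to\infty}\poe=0$.

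For the converse direction, I would invoke the noiseless Gaussian lower bound~\eqref{4.20} from Lemma~\ref{lemma:2}. By Cayley's formula $|\mathsf{T}(\nn)|=\nn^{\nn-2}$, so the uniform distribution on this set has entropy $\mathbbm{H}(\mathcal{U}_{\mathsf{T}(\nn)})=\ln|\mathsf{T}(\nn)|=(\nn-2)\ln\nn$ in base $e$. Substituting into~\eqref{4.20} and demanding $\poe<1/2$ gives
\[
\frac{\nn\ns\ln(2\pi e\overline{Y}\sigma_{\mathrm{S}}^2)}{2(\nn-2)\ln\nn}>\frac{1}{2},
\]
which rearranges to $\ns>\frac{(\nn-2)\ln\nn}{\nn\ln(2\pi e\overline{Y}\sigma_{\mathrm{S}}^2)}=\Omega(\log\nn)$, matching the upper bound up to multiplicative constants.

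The step I expect to require the most care is reconciling Theorem~\ref{thm:4}'s technical precondition $\mu<\nn^{-3/\mu}(\nn-K)$ with a $(\mu,K)$ choice that also yields the target $O(\log\nn)$ sample complexity; the resolution is that taking $\mu$ to be a sufficiently large fixed constant (so that $\nn^{-3/\mu}\cdot\nn$ still grows) and $K(\nn)=\log\nn$ satisfies every hypothesis simultaneously. The rest is bookkeeping: the $O(\mu\log(\nn/\mu)+K)$ bound from Theorem~\ref{thm:4} collapses to $O(\log\nn)$ with constant $\mu$ and $K=\log\nn$, and the Cayley entropy $(\nn-2)\ln\nn$ feeds through~\eqref{4.20} to deliver the matching $\Omega(\log\nn)$ converse.
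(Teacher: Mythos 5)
Your proposal is correct and follows essentially the same route as the paper: achievability by specializing Theorem~\ref{thm:4} with Lemma~\ref{lemma:trees} and $K(\nn)=\log\nn$, and the converse by substituting $\mathbbm{H}(\mathcal{U}_{\mathsf{T}(\nn)})=(\nn-2)\ln\nn=\Omega(\nn\log\nn)$ from Cayley's formula into~(\ref{4.20}). Your write-up is in fact more careful than the paper's, since you explicitly verify the technical hypothesis $\mu<\nn^{-3/\mu}(\nn-K)$ and $K=o(\nn)$ with a concrete constant $\mu$, which the paper leaves implicit.
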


\begin{proof}
The achievability follows from combining Theorem~\ref{thm:4} and Lemma~\ref{lemma:trees}, by setting $K(\nn)=\log\nn$. Substituting $\mathbbm{H}(\mathcal{U}_{\mathsf{T}(\nn)})=\Omega\left(\nn\log\nn\right)$ into (\ref{4.20}) yields the desired result for converse.
\end{proof}

\textit{2) Erd\H{o}s-R\'{e}nyi $(\nn,p)$ model:}

Similarly, recalling Lemma~\ref{lemma:renyi}, the sample complexity for recovering a random graph generated according to the Erd\H{o}s-R\'{e}nyi $(\nn,p)$ model is obtained.
\begin{corollary}
\label{coro:2}
Let $\mathbf{Z}=0$.
Assume $G_\nn$ is a random graph sampled according to $\mathcal{G}_{\mathrm{ER}}(\nn,p)$ with $1/\nn \leq p\leq 1-1/\nn$.
Under the same conditions in Corollary~\ref{coro:1}, there exists an algorithm that guarantees $\lim_{\nn\rightarrow\infty}\poe=0$ using $\ns=O\left(\nn h(p)\right)$ measurements. Conversely, the number of measurements must satisfy 
$\ns=\Omega\left(\nn h(p)\right)$
to make the probability of error $\poe$ less than ${1}/{2}$.
\end{corollary}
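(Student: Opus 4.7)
\medskip
\noindent\textbf{Proof plan.} The argument parallels that of Corollary~\ref{coro:1}: achievability follows from combining Theorem~\ref{thm:4} with Lemma~\ref{lemma:renyi}, while the converse is obtained by substituting into Lemma~\ref{lemma:2}. For the achievability direction, I would invoke Lemma~\ref{lemma:renyi} with $\mu=\lceil 2\nn h(p)/\ln(1/p)\rceil$, so that $\mathcal{G}_{\mathrm{ER}}(\nn,p)$ is $(\mu,K,\rho)$-sparse with $\rho=\nn\exp(-\nn h(p))/K$. Since the assumption $p\geq 1/\nn$ ensures $\nn h(p)\geq \log\nn\cdot(1-o(1))$, the quantity $\nn\exp(-\nn h(p))$ is bounded, so choosing any $K=K(\nn)\to\infty$ with $K=o(\nn)$ yields $\rho\to 0$. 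Applying Theorem~\ref{thm:4} then gives $\poe\to 0$ for the three-stage scheme using $\ns=O(\mu\log(\nn/\mu)+K)$ measurements. The core computation is to verify $\mu\log(\nn/\mu)=O(\nn h(p))$: using $h(p)\asymp p\log(1/p)$ for small $p$ gives $\mu=\Theta(\nn p)$ and $\log(\nn/\mu)\asymp\log(1/p)$, hence $\mu\log(\nn/\mu)\asymp \nn p\log(1/p)\asymp \nn h(p)$; the additive $K$ is absorbed since $\nn h(p)\geq \log\nn$.

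For the converse, I would plug the entropy of the Erd\H{o}s--R\'{e}nyi distribution, $\mathbbm{H}(\mathcal{G}_{\mathrm{ER}}(\nn,p))=\binom{\nn}{2}h(p)\ln 2=\Theta(\nn^2 h(p))$ (after converting the binary entropy $h(p)$ to nats), into the noiseless Gaussian lower bound~\eqref{4.20} of Lemma~\ref{lemma:2}. This produces $\poe\geq 1-O\!\left(\ns/(\nn h(p))\right)$, so that $\poe\leq 1/2$ forces $\ns=\Omega(\nn h(p))$, which matches the achievability bound up to a constant factor.

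The main obstacle is the joint tuning of $\mu$ and $K$ so as to meet simultaneously Lemma~\ref{lemma:renyi}'s threshold $\mu\geq 2\nn h(p)/\ln(1/p)$, Theorem~\ref{thm:4}'s technical constraint $\mu<\nn^{-3/\mu}(\nn-K)$, and the complexity requirement $\mu\log(\nn/\mu)=O(\nn h(p))$, uniformly over $p\in[1/\nn,1-1/\nn]$. At the sparse end $p=\Theta(1/\nn)$ one needs $\mu$ just large enough to diverge with $\nn$ so that $\nn^{-3/\mu}\to 1$; at the dense end, where the Lemma~\ref{lemma:renyi} threshold would otherwise saturate near $\nn$, one can either exploit the complement-graph symmetry $\mathcal{G}_{\mathrm{ER}}(\nn,p)\leftrightarrow\mathcal{G}_{\mathrm{ER}}(\nn,1-p)$ or fall back on a direct linear-algebraic solve at $\ns=\Theta(\nn)$ to close the upper bound, both of which remain consistent with the target $\ns=O(\nn h(p))$.
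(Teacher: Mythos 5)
Your proposal follows the same route as the paper's proof: achievability by combining Lemma~\ref{lemma:renyi} with Theorem~\ref{thm:4} (with $\mu$ at the threshold $2\nn h(p)/\ln(1/p)$), and the converse by substituting $\mathbbm{H}(\mathcal{G}_{\mathrm{ER}}(\nn,p))=h(p)\binom{\nn}{2}=\Omega(\nn^{2}h(p))$ into~\eqref{4.20}. Two loose ends are worth pinning down. First, taking ``any $K\to\infty$ with $K=o(\nn)$'' is not sufficient: Theorem~\ref{thm:4} gives $\ns=O\big(\mu\log(\nn/\mu)+K\big)$, so you also need $K=O(\nn h(p))$, and $\nn h(p)$ can be as small as $\Theta(\log\nn)$ at the endpoints $p\in\{1/\nn,\,1-1/\nn\}$; your remark that ``the additive $K$ is absorbed'' only holds once $K$ is fixed accordingly. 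The paper chooses $K=\nn h(p)/\log\nn$, which is simultaneously $o(\nn)$, $o(\nn h(p))$, and large enough that $\rho=\nn\exp(-\nn h(p))/K\to 0$. Second, your dense-end fallback of a direct linear solve with $\ns=\Theta(\nn)$ is \emph{not} consistent with the target: at $p=1-1/\nn$ one has $\nn h(p)=\Theta(\log\nn)$, so $\Theta(\nn)$ measurements overshoot the claimed $O(\nn h(p))$ by a factor of $\nn/\log\nn$; if anything, only the complement-symmetry idea could rescue that regime, and even there one must argue it applies to parameter (not just topology) recovery. That said, your suspicion about the constraint $\mu<\nn^{-3/\mu}(\nn-K)$ at both extremes of $p$ is well placed --- the paper's own one-line verification of this condition with $\mu=2\nn h(p)/\ln(1/p)$ is delicate precisely at $p=\Theta(1/\nn)$ (where $\mu=\Theta(1)$) and for $p$ bounded away from $0$ (where $\mu$ can exceed $\nn$), so your proposal is in this respect more candid than the original proof.
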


\begin{proof}
Taking ${K}=\nn h(p)/\log\nn$ and ${\mu}={2\nn h(p)}/{(\ln 1/p)}$, we check that ${\mu}<\nn^{-3/{\mu}}(\nn-{K})$ and ${K}=o(\nn)$. The assumptions on $h(p)$ guarantee that $h(p)\geq\log\nn/\nn$, whence $\nn h(p)=\omega\left(\log(\nn/{K})\right)$. The choices of $\{\mu(\nn)\}$ and $\{K(\nn)\}$ make sure that the sequence of distributions is $(\mu(\nn),K(\nn))$-sparse. Theorem~\ref{thm:4} implies that $\ns=O(\nn h(p))$ is sufficient for achieving a vanishing probability of error. For the second part of the corollary, substituting $\mathbbm{H}(\mathcal{G}_{\mathrm{ER}}(\nn,p))=h(p){\nn\choose 2}=\Omega\left(\nn^2 h(p)\right)$ into (\ref{4.20}) yields the desired result.
\end{proof}

\subsection{Measurements corrupted by AWGN}
\label{sec:noisy}
% {\color{blue}
The results on sample complexity can be extended to the case with noisy measurements. The following theorem is proved by combining Theorem~\ref{thm:3} and Lemma~\ref{lemma:2}. The details can be found in Appendix~\ref{app:proof_noisy}.

\begin{theorem}[Noisy worst-case sample complexity]
\label{thm:noisy_sample_complexity}
Suppose that $\mathbf{B}$ and $\mathbf{Z}$ are defined as in Lemma~\ref{lemma:2}. Let ${\mu}<\nn^{-3/{\mu}}(\nn-{K})$ and ${K}=o(\nn)$. Conversely, there exists a $({\mu},{K})$-sparse sequence of distributions such that the number of measurements must satisfy 
$$\ns=\Omega\left(\frac{{\mu}\log({\nn}/{{\mu}})+{K}/\nn^{3/{\mu}}}{\log(1+\sigma^2_{\mathrm{S}}/\sigma^2_{\mathrm{N}})}\right)$$
to make the probability of error $\poet$ less than ${1}/{2}$ for all $\nn$. Moreover, if $\sigma_{\mathrm{N}}=o(1/\nn^{5/2})$, $\sigma_{\mathrm{S}} = 1/\sqrt{\ns}$ and $\nk\leq\mu$,
% and $Y_{i,j}=\Omega(1)$ for all
% % $i,j\in\mathcal{V}$, 
then
for any sequence of distributions that is $({\mu},{K})$-sparse, the three-stage scheme guarantees that $\lim_{\nn\rightarrow\infty}\poet=0$ using $\ns=O\left({\mu}\log({\nn}/{{\mu}})\right)$ measurements. Moreover, $\lim_{\nn\rightarrow\infty}\poe(\eta)=0$ with $\eta=o(1)$.
\end{theorem}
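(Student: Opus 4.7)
The plan is to assemble the noisy worst-case sample complexity from two ingredients already available: the generic achievability in Theorem~\ref{thm:3} (via the three-stage scheme of Algorithm~\ref{alg:1}) and the Gaussian-noisy converse in Lemma~\ref{lemma:2}. The converse half reuses the extremal $(\mu,K)$-sparse distribution constructed in the proof of Theorem~\ref{thm:4}; the achievability half instantiates Theorem~\ref{thm:3} with a Gaussian $\mathbf{B}$ and verifies that the restricted-isometry, spectral-norm, and submatrix-invertibility quantities that appear in its conclusion can be controlled with $m=O(\mu\log(n/\mu))$ rows.

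For the converse, starting from \eqref{eq:noisy_poe} in Lemma~\ref{lemma:2}, demanding $\poet<1/2$ forces
\begin{equation*}
m \;\geq\; \frac{\mathbbm{H}(\mathcal{G}_n)}{n\ln\!\left(1+\sigma_{\mathrm{S}}^2\overline{Y}/\sigma_{\mathrm{N}}^2\right)}.
\end{equation*}
The proof of the noiseless converse in Theorem~\ref{thm:4} already exhibits a $(\mu,K)$-sparse sequence of distributions whose entropy obeys $\mathbbm{H}(\mathcal{G}_n)=\Omega\!\bigl(n[\mu\log(n/\mu)+K/n^{3/\mu}]\bigr)$. Reusing exactly that sequence here and substituting into the display above yields the claimed $m=\Omega\!\bigl([\mu\log(n/\mu)+K/n^{3/\mu}]/\log(1+\sigma_{\mathrm{S}}^2/\sigma_{\mathrm{N}}^2)\bigr)$ bound, with $\overline{Y}$ absorbed into an implicit constant.

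For the achievability, I would invoke Theorem~\ref{thm:3} with the Gaussian generator $\mathbf{B}$ whose entries are $\mathcal{N}(0,1/m)$ and with $m=O(\mu\log(n/\mu))$. Three facts, each standard for Gaussian ensembles and each holding on a single high-probability event after a union bound, must be verified: (i) the restricted isometry constants $\delta_{2\mu}$ and $\delta_{3\mu}$ of $\mathbf{B}$ are bounded strictly below $1$, by the classical Cand\`es--Tao theorem on Gaussian matrices; (ii) every $K\times K$ submatrix $\mathbf{B}^{\mathcal{K}}_{\overline{\mathcal{S}}}$ is invertible almost surely (continuity of Gaussians together with $K\leq \mu\leq m$); and (iii) $\|\mathbf{B}\|_2$ and $\xi(\mathbf{B})=\max\|\mathbf{B}_{\mathcal{S}}\|_2\|(\mathbf{B}^{\mathcal{K}}_{\overline{\mathcal{S}}})^{-1}\|_2$ are at most polynomial in $n$, by Gaussian spectral-norm and smallest-singular-value tail bounds combined with a $\binom{n}{K}^{2}$-size union bound. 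The $\ell_1$-recovery subroutine is run with tolerance $\gamma=\Theta(\sqrt{n}\sigma_{\mathrm{N}})$, chosen so that the true sparse column $Y_j$ is feasible in \eqref{4.1}--\eqref{4.2} with high probability (a $\chi^2$-tail computation on $\|\mathbf{B}Y_j\|_2$) and the noisy basis-pursuit stability guarantee produces $\overline{\varepsilon}_{\mathrm{P}}(\Gamma)$ exponentially small in $m$ with $\Gamma=O(\gamma)$. Substituting $\sigma_{\mathrm{N}}=o(1/n^{5/2})$ into the $\eta$ expression of Theorem~\ref{thm:3} makes $n\Gamma=o(1/n^{2})$, and the resulting bound, even after multiplication by the polynomial second factor $2(n-K)+K\xi(\mathbf{B})$, is $o(1)$. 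Combining this with $\rho\to 0$ from the $(\mu,K)$-sparsity of the sequence and $(n-K)\overline{\varepsilon}_{\mathrm{P}}(\Gamma)\to 0$ gives $\lim_{n\to\infty}\poe(\eta)=0$ with $\eta=o(1)$; the topology statement $\lim_{n\to\infty}\poet=0$ then follows because an $o(1)$ Frobenius error cannot flip the sign of any entry of $\mathbf{Y}(G)$ whose nonzero values are bounded away from zero.

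The main obstacle is the simultaneous control of the polynomial factors $\|\mathbf{B}\|_2$, $(1-\delta_{2K})^{-1}$, and $\xi(\mathbf{B})$ uniformly over all $\binom{n}{K}^{2}$ pairs $(\overline{\mathcal{S}},\mathcal{K})$, since the smallest singular values of $K\times K$ Gaussian submatrices can be small and must be controlled in the worst case. The exponent $5/2$ in the hypothesis $\sigma_{\mathrm{N}}=o(1/n^{5/2})$ is precisely what is needed to absorb those polynomial factors against the $n\Gamma$ term inside the first parenthesis of Theorem~\ref{thm:3} and the $O(n)$ term in the second parenthesis, while still leaving the final Frobenius error $\eta$ vanishing; any relaxation of this scaling would force a careful re-examination of the submatrix singular-value tails rather than of the high-level structure of the argument.
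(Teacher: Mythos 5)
Your proposal matches the paper's own proof in all essentials: the converse combines the noisy bound of Lemma~\ref{lemma:2} with the extremal $(\mu,K)$-sparse distribution and entropy estimate \eqref{4.5} from the proof of Theorem~\ref{thm:4}, and the achievability instantiates Theorem~\ref{thm:3} for the Gaussian ensemble with $\gamma=\Theta(\sqrt{\nn}\sigma_{\mathrm{N}})$, $\Gamma=O(\gamma)$ via the noisy basis-pursuit stability bound, and control of $\|\mathbf{B}\|_2$, $\delta_{2K}$, and the submatrix singular values, yielding $\eta=O(\nn^{5/2}\sigma_{\mathrm{N}})=o(1)$. (Your intermediate claim $\nn\Gamma=o(1/\nn^{2})$ should read $o(1/\nn)$, but this does not affect the conclusion.)
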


\begin{remark}
\label{remark:choice_of_gamma}
The proof of Theorem~\ref{thm:noisy_sample_complexity} implies that $\eta=O(\nn^{5/2}\sigma_{\mathrm{N}})$. Therefore, if we consider the normalized Frobenius norm of $(1/n^2)||\mathbf{Y}-\mathbf{X}||_{\mathrm{F}}$ where $\mathbf{X}$ and $\mathbf{Y}$ are the recovered and original graph matrices respectively, then $\sigma_{\mathrm{N}}=o(1/\sqrt{\nn})$ guarantees that the normalized Frobenius norm vanishes. For topology identification, we need to consider the Frobenius norm bound, $\eta$, to rule out the worst-case situation and the sufficient condition becomes $\sigma_{\mathrm{N}}=o(1/\nn^{5/2})$. Another implication is that the choice of $\gamma$ in (\ref{4.0}) satisfying $\gamma=O(\sqrt{\nn}\sigma_{\mathrm{N}})$ (used in the proof) guarantees the reconstruction criteria and its effectiveness is also validated in our experiments in Section~\ref{sec:sim_noisy}.
\end{remark}
%}

\section{Heuristic Algorithm}
\label{sec:6}
We present in this section an algorithm motivated by the consistency-checking step in the proof of achievability (see Section~\ref{sec:ach}). Instead of checking the consistency of each subset of $\mathcal{V}$ consisting of $\nn-{K}$ nodes, as the three-stage scheme does and which requires $O(\nn^{{K}})$ operations, we compute an estimate $X_j$ for each column of the graph matrix independently and then assign a score to each column based on its symemtric consistency with respect to the other columns in the matrix. The lower the score, the closer the estimate of the matrix column $X_j$ is to the ground truth $Y_j$. Using a scoring function we rank the columns, select a subset of them to be ``correct", and then eliminate this subset from the system. The size of the subset determines the number of iterations. Heuristically, this procedure results in a polynomial-time algorithm to compute an estimate $\mathbf{X}$ of the graph matrix $\mathbf{Y}$. 

The algorithm proceeds in four steps.
 \begin{figure}[h!]
	\centering
	\includegraphics[scale=0.4]{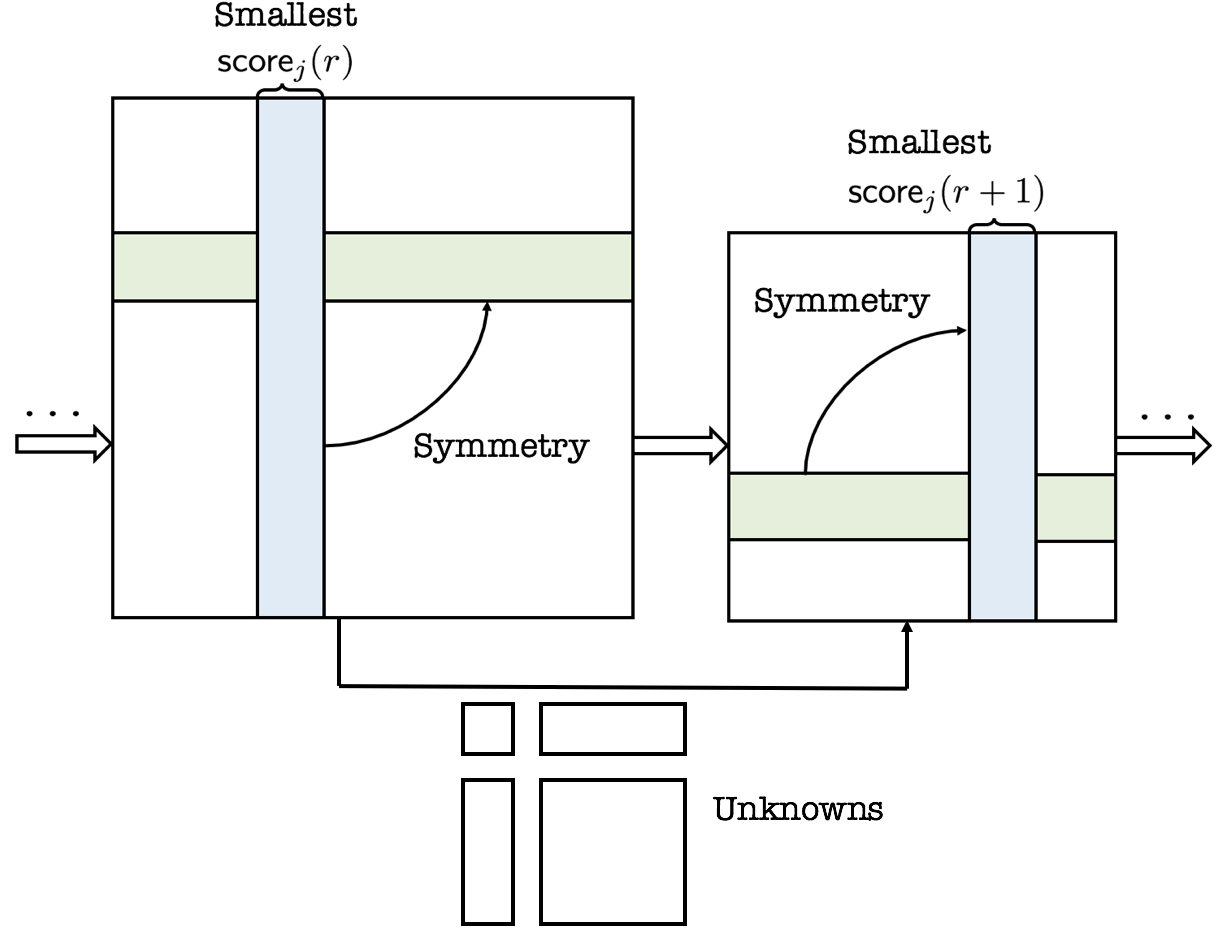}
	\caption{Iterative dimension reduction of the heuristic algorithm. At step $r$, the $s$ columns with the smallest scores defined in~\eqref{eq:score} are assumed to be ``correct" and eliminated from the linear system. The dimension of variables is reduced by $s$ and this procedure is repeated until the $\lceil\nn/s\rceil$ iterations are complete.}
	\label{fig:concept_2}
	\medskip
\end{figure}
\subsubsection{{Step $1$}. Initialization}
Let matrices $\mathbf{A} \in \mathbbm{R}^{\ns \times \nn}$ and $\mathbf{B}\in \mathbbm{R}^{\ns \times \nn}$ be given and set the number of columns fixed in each iteration to be an integer $s$ such that $1 \leq s \leq \nn$. 
% We set up the following initialization:
For the first iteration, set $\mathcal{S}(0) \leftarrow \mathcal{V}$,
$\mathbf{A}(0)\leftarrow\mathbf{A}$, and
$\mathbf{B}(0)\leftarrow\mathbf{B}$.

For each iteration $r=0,\ldots,\lceil {\nn}/{s}\rceil-1$, we perform the remaining three stages. The system dimension is reduced by $s$ after each iteration.
\subsubsection{{Step $2$}. Independent $\ell_1$-minimization}

For all $j\in\mathcal{S}(r)$, we solve the following $\ell_1$-minimization:

\begin{align}
\label{opt:xr}
X_j(r) = \argmin_{x \in \mathbbm{F}^{\nn-sr}} \quad &{\big|\big|x\big|\big|_{1}}\\
\label{eq:constraint}
\mathrm{subject } \ \mathrm{ to}\quad  & ||\mathbf{B}(r)x - A_j(r)||_{2}\leq \gamma,\\
\nonumber
&x\in \mathcal{X}_j(r).
\end{align}	
Constraint (\ref{opt:xr}) is optional; the set $\mathcal{X}_j(r)$ may encode additional constraints on the form of $x$ such as entry-wise positivity or negativity (\textit{e.g.,} Section \ref{sec:sim}). The forms of reduced matrix $\mathbf{B}(r)$ and reduced vector $A_j(r)$ are specified in Step 4.
\subsubsection{{Step $3$}. Column scoring}
We rank the \textit{symmetric consistency} of the independently solved columns. For all $j\in\mathcal{S}(r)$, let
\begin{align}
\label{eq:score}
\mathsf{score}_j(r):= \sum_{i=1}^{\nn-sr}\left|X_{i,j}(r) - X_{j,i}(r) \right|.
\end{align}
Note that if $\mathsf{score}_j(r)=0$ then $X_j(r)$ and its partner symmetric row in $\mathbf{X}(r)$ are identical. Otherwise there will be some discrepancies between the entries and the sum will be positive. The subset of the $X_j(r)$ corresponding to the $s$ smallest values of $\mathsf{score}_j(r)$ is deemed ``correct". Call this subset of correct indices $\mathcal{S}'(r)$. 

%Denote by $\ell:=\argmin_{j\in\mathcal{S}(r)}\mathsf{score}_j(r)$.

\begin{comment}
We rank the \textit{consistency level} of the independently solved columns. For all $j\in\mathcal{S}(r)$, let
\begin{align*}
\mathsf{score}_j(r):= -\sum_{i=1}^{\nn-r} \omega_{i,j}(r)+ \left|X_{i,j}(r) - X_{j,i}(r) \right|
\end{align*}
where ($c>0$ denotes the \textit{tuning parameter}\footnote{Concrete examples of choices of suitable tuning parameters are provided in the second part of this paper.})
\begin{align}
\omega_{i,j}(r):= \frac{1}{\left|X_{i,j}(r) X_{j,i}(r) \right|+ c}.
\end{align}
Denote by $\ell:=\argmin_{j\in\mathcal{S}(r)}\mathsf{score}_j(r)$.
\end{comment}

\subsubsection{\text{{Step $4$}. System dimension reduction}}
Based on the assumption that $s$ of the previously computed columns $X_j(r)$ are correct, the dimension of the linear system is reduced by $s$. We set $\mathcal{S}(r+1)\leftarrow \mathcal{S}(r)\backslash\mathcal{S}'(r)$. For all $i,j\in\mathcal{S}'(r)$, we fix
\begin{equation}
\label{eq:recover1}
X_{i,j} =	X_{i,j}(r), \ X_{j,i}= X_{i,j}(r).
\end{equation}
The measurement matrices are reduced to
\begin{align*}
\mathbf{B}(r+1)&\leftarrow \underline{\mathbf{B}}_{\mathcal{S}(r+1)},\\
			\nonumber
			A_j(r+1)&\leftarrow \underline{A}_j(r)-\sum_{i\in\mathcal{S}'(r)}\underline{B}_i X_{i,j}.
\end{align*}
When $r\leq \nn-\ns$, $\underline{\mathbf{B}}_{\mathcal{S}(r+1)}=\mathbf{B}_{\mathcal{S}(r+1)}$, $\underline{A}_j(r)={A}_j(r)$ and $\underline{B}_i={B}_i$.
When $r>\nn-\ns$, to avoid making the reduced matrix $\mathbf{B}(r+1)$ over-determined, we set $\mathbf{B}(r+1)$ to be an $(\nn-r)\times (\nn-r)$ sub-matrix of $\mathbf{B}_{\mathcal{S}(r+1)}$ by selecting $\nn-r$ rows of $\mathbf{B}_{\mathcal{S}(r+1)}$ uniformly at random. A new length-$(\nn-r)$ vector $\underline{A}_j(r)$ is formed by selecting the corresponding entries from $A_j(r)$.
Once the $\lceil {\nn}/{s}\rceil$ iterations complete, an estimate  $\mathbf{X}$ is returned using (\ref{eq:recover1}). The algorithm requires at most $\lceil {\nn}/{s}\rceil$ iterations and in each iteration, the algorithm solves an $\ell_1$-minimization and updates a linear system. Solving an $\ell_1$-minimization can be done in polynomial time (\textit{c.f.}~\cite{ge2011note}). Thus, the heuristic algorithm is a polynomial-time algorithm. 
% Figure~\ref{fig:concept_2} illustrates the $r$-th iteration of the heuristic algorithm described above.

\section{Applications in Electric Grids}
% \tongxin{This part needs more work.}
\label{sec:sim}
Experimental results for the heuristic algorithm are given here for both synthetic data and IEEE standard power system test cases. The algorithm was implemented in Matlab; simulated power flow data was generated using Matpower 7.0~\cite{zimmerman2011matpower} and CVX 2.1~\cite{cvx} with the Gurobi solver~\cite{gurobi} was used to solve the sparse optimization subroutine.  
 
\subsection{Scalable Topologies and Error Criteria}
\label{sec:7.a}
We first demonstrate our results using synthetic data and two typical graph ensembles -- stars and chains. For both topologies, we increment the graph size from $\nn=5$ to $\nn=300$ and record the number of samples required for accurate recovery of parameters and topology. For each simulation, we generate a complex-valued random admittance matrix $\mathbf{Y}$ as the ground truth. Both the real and imaginary parts of the line impedances of the network are selected uniformly and IID from $[-100,100]$. A valid electrical admittance matrix is then constructed using these impedances. The real components of the entries of $\mathbf{B}$ are distributed IID according to $\mathcal{V}\left(1, 1\right)$ and the imaginary components according to $\mathcal{V}\left(0, 1\right)$. $\mathbf{A} = \mathbf{Y}\mathbf{B}$ gives the corresponding complex-valued measurement matrix. The parameter $\gamma$ in (\ref{eq:constraint}) is $0$ since we consider noiseless reconstruction here. 

Given data matrices $\mathbf{A},\mathbf{B}$ the algorithm returns an estimate $\mathbf{X}$ of the ground truth $\mathbf{Y}$. We set $s=\left \lceil{n/2} \right \rceil$ for each graph. If an entry of $\mathbf{X}$ has magnitude $|X_{i,j}| <10^{-5}$, then we fix it to be 0. Following this, if $\mathrm{supp}\left(\mathbf{X}\right)=\mathrm{supp}\left(\mathbf{Y}\right)$ then the topology identification is deemed exact. The criterion for accurate parameter reconstruction is $||\mathbf{Y}-\mathbf{X}||_\mathrm{F}/n^2<10^{-6}$. The number of samples $\ns$ (averaged over repeated trials) required to meet both of these criteria is designated as the sample complexity for accurate recovery. The sample complexity trade-off displayed in Figure~\ref{fig:ieee} shows approximately logarithmic dependence on graph size $\nn$ for both ensembles.

%For each graph in both graph ensembles, we ran $20$ simulations and averaged over the results.

%Starting with $\ns=1$ current-voltage measurement pair for each bus in the network, we run the heuristic algorithm to generate an estimate $\mathbf{X}$ of the admittance matrix and increase the number of samples in the measurement matrices until the following error criteria are satisfied.

% \begin{figure}[h!]
% 	\centering
% 	\includegraphics[scale=0.4]{Chains_stars_temp.eps}
% 	\caption{The close agreement between the curves suggests that the network topology does not impact the success of the heuristic algorithm recovery.}
% 	\label{fig:Gaussian}
% 	\medskip
% \end{figure}
\subsection{IEEE Test Cases}
\begin{figure}[h!]
	\centering
	\includegraphics[scale=0.37]{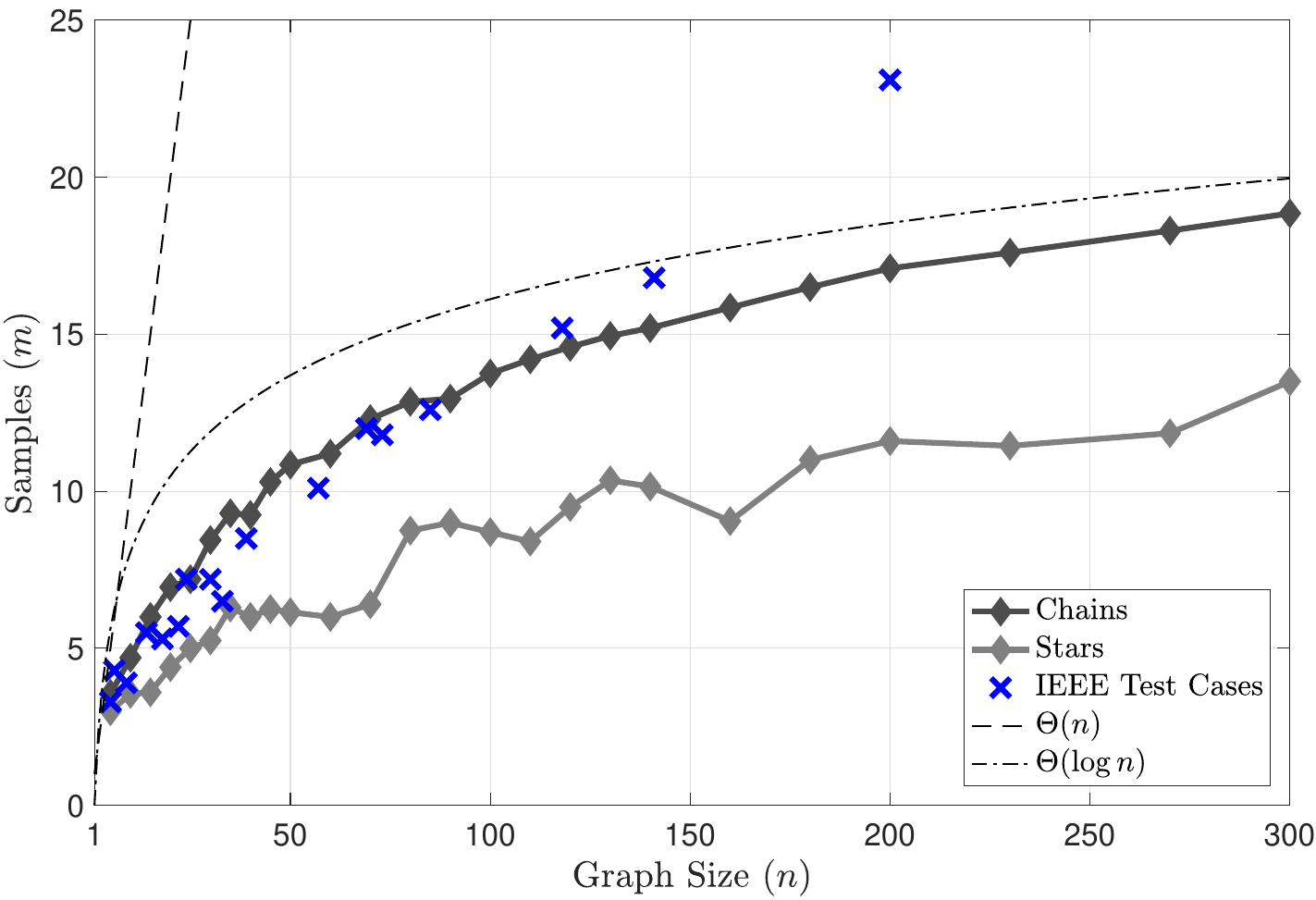}
	\caption{The number of samples required to accurately recover the nodal admittance matrix is shown on the vertical axis. Results are averaged over 20 independent simulations. Star and chain graphs are scaled in size between 5 and 300 nodes. IEEE test cases ranged from 5 to 200 buses. In the latter case, there are no assumptions on the random IID selection of the entries of $\mathbf{Y}$ (in contrast to the star/chain networks). Linear and logarithmic (in $\nn$) reference curves are plotted as dashed lines.}
	\label{fig:ieee}
	\medskip
\end{figure}
We also validate the heuristic algorithm on 17 IEEE standard power system test cases ranging from $5$ to $200$ buses. The procedure for determining sample complexity for accurate recovery is the same as above, but the data generation is more involved.  

\subsubsection{Power flow data generation}
A sequence of time-varying loads is created by scaling the nominal load values in the test cases by a times series of Bonneville Power Administration's aggregate load on $02/08/2016$, 6am to 12pm~\cite{bpa}. For each test case network, we perform the following steps to generate a set of measurements:
\begin{enumerate}[label=\alph*)]
    \item Interpolate the aggregate load profile to $6$-second intervals, extract a length-$m$ random consecutive subsequence, and then scale the real parts of bus power injections by the load factors in the subsequence. 
    \item Compute optimal power flow in Matpower for the network at each time step to determine bus voltage phasors.
    \item Add a small amount of Gaussian random noise ($\sigma^2=0.001$) to the voltage measurements and generate corresponding current phasor measurements using the known admittance matrix.
\end{enumerate}

\subsubsection{Sample complexity for recovery of IEEE test cases}

Figure \ref{fig:ieee} shows the sample complexity for accurate recovery of the IEEE test cases. The procedure and criteria for determining the necessary number of samples for accurate recovery of the admittance matrix are the same as for the synthetic data case. Unlike the previous setting, here we have no prior assumptions about the structure of the IEEE networks: networks have both mesh and radial topologies. However, because power system topologies are typically highly sparse, the heuristic algorithm was able to achieve accurate recovery with a comparable (logarithmic) dependence on graph size.
\begin{figure}[h!]
	\centering
	\includegraphics[scale=0.338]{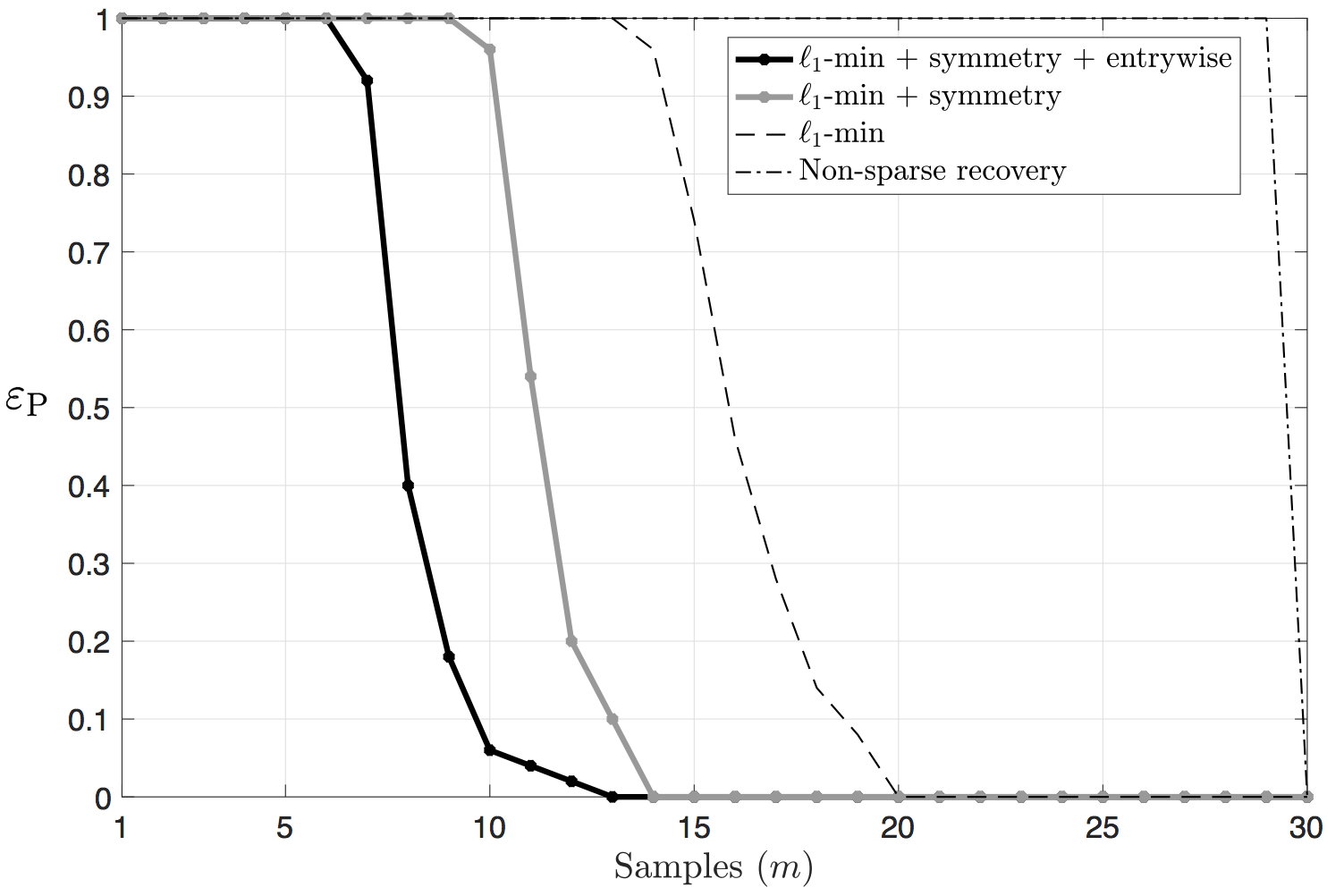}
	\caption{Probability of error for parameter reconstruction $\poe$ for the IEEE $30$-bus test case is displayed on the vertical axis. Probability is taken over 50 independent trials. The horizontal axis shows the number of samples used to compute the estimate $\mathbf{X}$. The probability of error for independent recovery of all $X_j$ via $\ell_1$-norm minimization (double dashed line) and full rank non-sparse recovery (dot dashed line) are shown for reference. Adding the symmetry score function (second-to-left) improves over the naive column-wise scheme. Adding entry-wise positivity/negativity constraints on the entries of $\mathbf{X}$ (left-most curve) reduces sample complexity even further ($\approx 1/3$ samples needed compared to full rank recovery).}
	\label{fig:ieee30}
	\medskip
\end{figure}

\subsubsection{Influence of structure constraints on recovery}
There are structural properties of the nodal admittance matrix for power systems---symmetry, sparsity, and entry-wise positivity/negativity---that we exploit in the heuristic algorithm to improve sample complexity for accurate recovery. The score function $\mathsf{score}_j(r)$ rewards symmetric consistency between columns in $\mathbf{X}$; the use of $\ell_1$-minimization promotes sparsity in the recovered columns; and the constraint set $\mathcal{X}_j$ in (\ref{opt:xr}) forces $\mathrm{Re}(X_{i,j}) \leq 0,\ \mathrm{Im}(X_{i,j}) \geq 0$ for $i \neq j$ and $\mathrm{Re}(X_{i,j}) \geq 0$ for $i=j$. These entry-wise properties are commonly found in power system admittance matrices. In Figure \ref{fig:ieee30} we show the results of an experiment on the IEEE $30$-bus test case that quantify the effects of the structure constraints on the probability of error. In Figure \ref{fig:Gaussian} we show that the score function and the constraints are effective across a range of IEEE test cases, compared with the standard compressed sensing recovery discussed in Section~\ref{sec:lit_cs}. Furthermore, this demonstrates the heuristic algorithm is robust to noise for a broad range of real-world graph structures with respect to Frobenius norm error.

\begin{figure}[h!]
	\centering
	\includegraphics[scale=0.37]{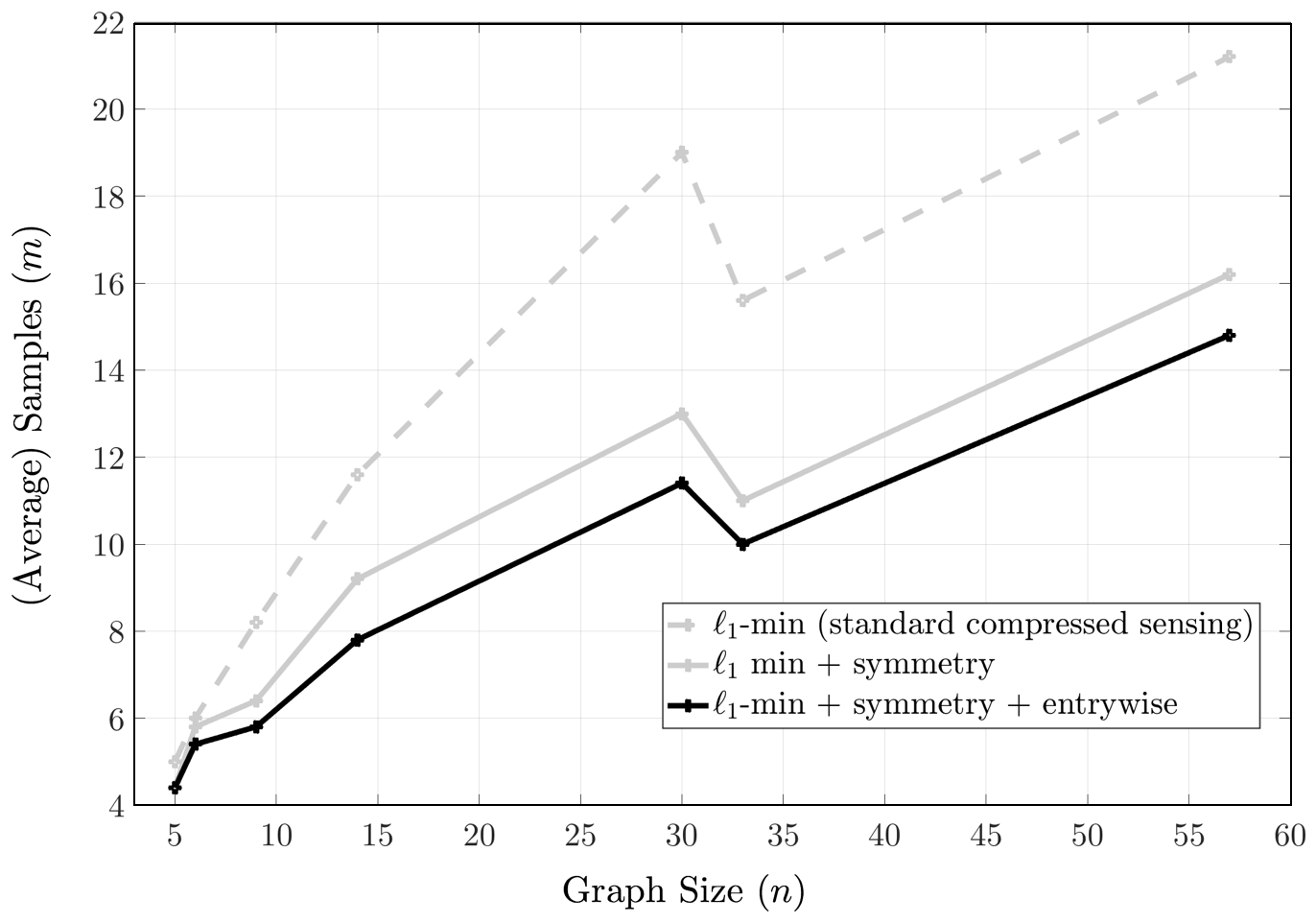}
	\caption{Sample complexity for accurate recovery is shown for a selection of IEEE power system test cases ranging from 5 to 57 buses. The number of samples for accurate recovery is obtained by satisfying the criterion $||\mathbf{X}-\mathbf{Y}||_{\mathrm{F}}/\nn^2 < 10^{-4}$. The noise $\mathbf{Z}$ is an IID Gaussian matrix with zero mean and standard deviation $0.01$. The parameter $\gamma$ in~(\ref{eq:constraint}) is set to be $10^{-4}$. As a benchmark, the number of measurements required for separately reconstructing every column of $\mathbf{Y}$ (standard compressed sensing) is also given.}
	\label{fig:Gaussian}
	\medskip
\end{figure} 

\subsubsection{Comparison with basis pursuit on star graphs}
\label{sec:basis_pursuit}

\begin{figure}[h!]
	\centering
	\includegraphics[scale=0.4]{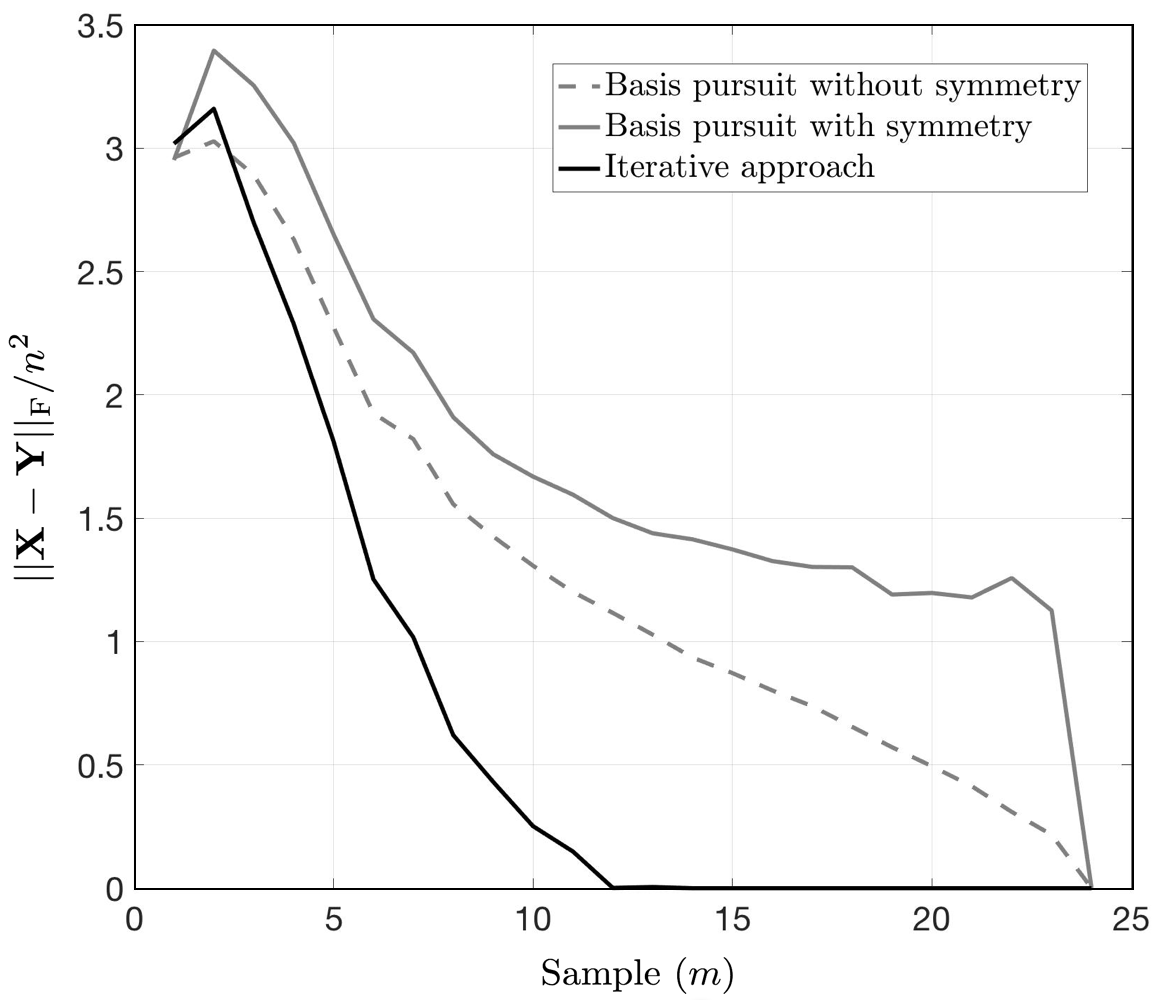}
	\caption{A comparison between our iterative heuristic and basis pursuit. The Frobenius norm error plotted is averaged over $250$ independent trials. The underlying graph is a star graph with $\nn=24$. The solid and
	dotted gray curves are results for basis pursuit with and without a constraint emphasizing symmetry, respectively.}
	\label{fig:comparison}
	\medskip
\end{figure} 

% {\color{blue}
In Figure~\ref{fig:comparison}, we consider star graphs and compare our heuristic algorithm with the modified basis pursuit subroutine in (\ref{eq:basis_pursuit_1})-(\ref{eq:basis_pursuit_2}) with noiseless measurements. For a star graph with $\nn = 24$ nodes, the iterative recovery scheme with $s=12$ outperforms the basis pursuit, with or without a symmetry constraint. The solid and dotted gray curves show the normalized Frobenius error for cases where $\mathbf{Y}(G)$ is constrained to be symmetric and where it is not, respectively.  Our experiments show that convex optimization-based approach breaks down if there are highly dense columns in $\mathbf{Y}$. The star graph contains a high-degree node (degree $\nn-1$), hindering the standard compressed sensing (basis pursuit without the symmetry constraint) from recovering the whole matrix until the number of measurements reaches $\nn$. Surprisingly, adding the symmetry constraint suggests basis pursuit performs less well than basis pursuit without the symmetry condition. This is evidence to support the assumption made in~\cite{dasarathy2015sketching}. There, the non-zeros in the matrix to be recovered should not be
concentrated in any single column (or row) of $\mathbf{Y}(G)$.

%}

\subsubsection{Effects of noise and selection of $\gamma$}
\label{sec:sim_noisy}
% {\color{blue}
In this section, we consider noisy measurements and fix the additive noise $\mathbf{Z}$ be IID Gaussian with mean zero and  variance $\sigma^2_{\mathrm{N}}\in [10^{-9},10^{-2}]$. We set  $\gamma=\sqrt{\nn}\sigma_{\mathrm{N}}$ in (\ref{eq:constraint}), as indicated in Remark~\ref{remark:choice_of_gamma}. Due to the presence of noise, there is error in the recovered matrix $\mathbf{X}$. However, the mean absolute percentage error is small.  
%}
\begin{figure}[h!]
	\centering
	\includegraphics[scale=0.35]{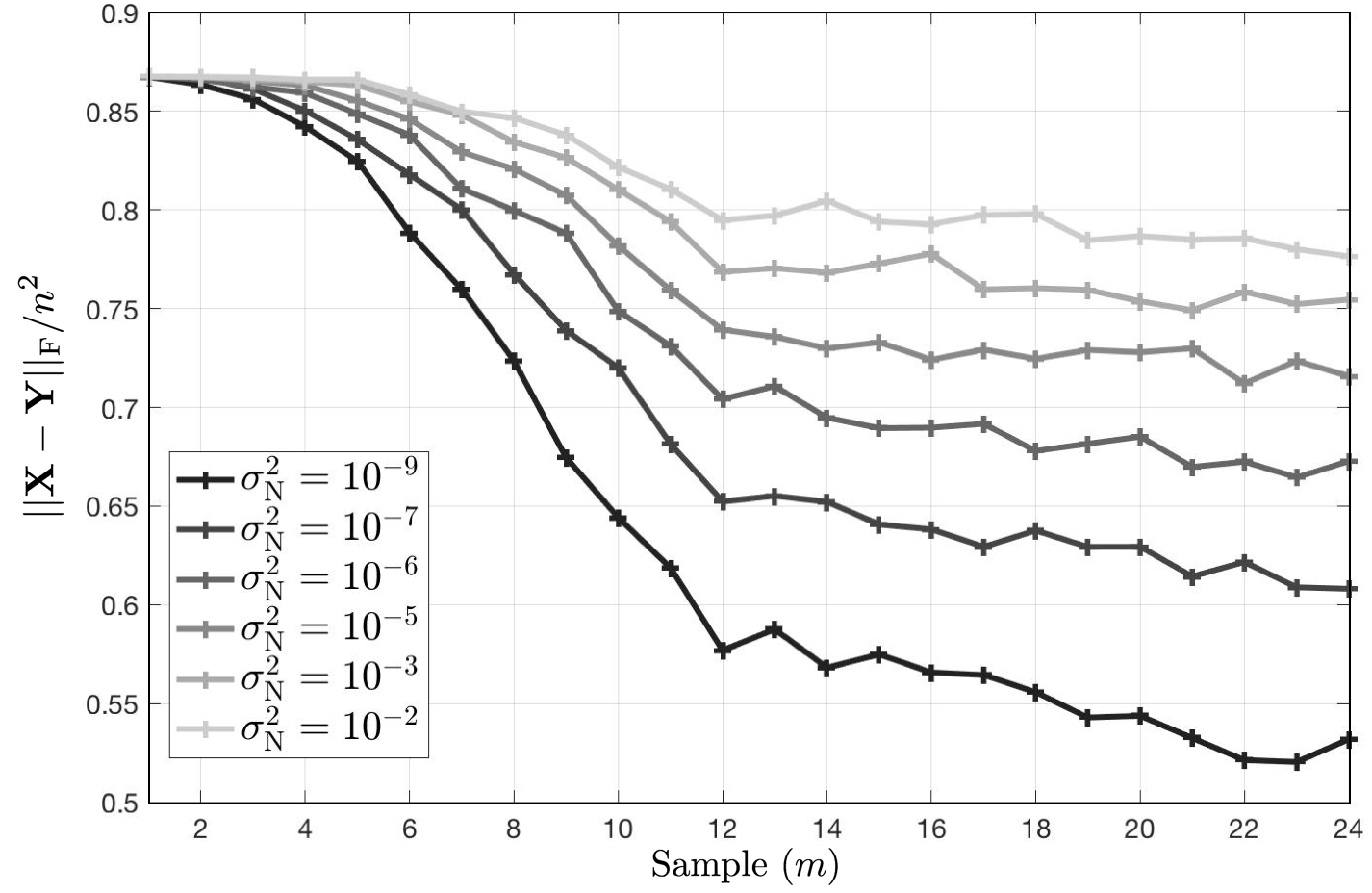}
	\caption{
	The impact of measurement noise on sample complexity for recovery of the IEEE 24-bus RTS test case is demonstrated. Trajectories correspond to increasing noise levels from dark (least) to light (most). From left to right, we observe---as expected---that for each variance value, the normalized Frobenius error of the recovered matrix decreases as the number of samples used for recovery increases. From bottom to top, we observe that the error increases (for every value of $\ns$) as variance of the additive noise $\mathbf{Z}$ increases.
	}
	\label{fig:noisy}
	\medskip
\end{figure}

\addcontentsline{toc}{section}{Bibliography}
{\bibliographystyle{IEEEtran}
	\small
	\bibliography{ref}}

\appendix

\section{Erratum}

In the proof of Lemma 1 in~\cite{li2020learning}, the KL divergence inequality $\mathbbm{D}_{\mathrm{KL}}\left(\frac{{\mu}}{\nn-2}\big|\big| \frac{1}{\nn}\right)\leq\frac{{\mu}}{\nn-2}\ln\nn$ was applied in its reverse direction. The lemma needs to be fixed as follows. The lemma has been amended accordingly. We extend our gratitude to Carl Wanninger for identifying the error.
\begin{lemma}[Original Version]
\label{lemma:trees}
For any ${\mu}\geq 1$ and ${K}>0$, the distribution $\mathcal{U}_{\mathsf{T}(\nn)}$ is $({\mu },{K},1/K)$-{sparse}. 
\end{lemma}
\begin{lemma}[Fixed Version]
\label{lemma:trees}
For any ${\mu}\geq 1+\ln\nn$ and ${K}>0$, the distribution $\mathcal{U}_{\mathsf{T}(\nn)}$ is $({\mu },{K},1/K)$-{sparse}. 
\end{lemma}

\begin{corollary}[Original Version]
\label{coro:1}Let $\mathbf{Z}=0$.
Suppose that the generator matrix $\mathbf{B}$ has Gaussian IID entries with mean zero and variance one and assume $G_\nn$ is distributed according to $\mathcal{U}_{\mathsf{T}(\nn)}$. There exists an algorithm that guarantees $\lim_{\nn\rightarrow\infty}\poe=0$ using $\ns=O\left(\log\nn\right)$ measurements. Conversely, the number of measurements must satisfy 
$\ns=\Omega\left(\log\nn\right)$
to make the probability of error $\poe$ less than ${1}/{2}$.
\end{corollary}

\begin{corollary}[Fixed Version]
\label{coro:1}Let $\mathbf{Z}=0$.
Suppose that the generator matrix $\mathbf{B}$ has Gaussian IID entries with mean zero and variance one and assume $G_\nn$ is distributed according to $\mathcal{U}_{\mathsf{T}(\nn)}$. There exists an algorithm that guarantees $\lim_{\nn\rightarrow\infty}\poe=0$ using $\ns=O\left((\log\nn)^2\right)$ measurements. Conversely, the number of measurements must satisfy 
$\ns=\Omega\left(\log\nn\right)$
to make the probability of error $\poe$ less than ${1}/{2}$.
\end{corollary}

\section{Proof of Theorem~\ref{thm:1}}
\label{app:proof_fundamental_limits}
\begin{proof}
The graph $G$ is chosen from a discrete set $\mathsf{C}(\nn)$ according to some probability distribution ${\mathcal{G}_\nn}$.
% The current and voltage matrices $\mathbf{A}$ and $\mathbf{B}$ are distributed according to the conditional distribution $p(\mathbf{A},\mathbf{B}|G)$. 
Fano's inequality~\cite{fano1961transmission} plays an important role in deriving fundamental limits. We especially focus on its extended version. Similar generalizations appear in many places, \textit{e.g.,}~\cite{aeron2010information,santhanam2012information} and~\cite{ghoshal2016information}. We repeat the lemma here for the sake of completion:
\begin{lemma}[\text{Generalized Fano's inequality}]
	\label{lemma:fano}
Let $G$ be a random graph and let $\mathbf{A}$ and $\mathbf{B}$ be matrices defined in Section~\ref{sec:model} and~\ref{sec:cvm}. Suppose the original graph $G$ is selected from a nonempty candidacy set $\mathsf{C}(\nn)$ according to a probability distribution ${\mathcal{G}_\nn}$. Let $\hat{G}$ denote the estimated graph. Then the conditional probability of error for estimating $G$ from $\mathbf{A}$ given $\mathbf{B}$ is always bounded from below as
\begin{align}
\label{3.2}
\mathbbm{P}\left(\hat{G}\neq G\big| \mathbf{B}\right)\geq 1-\frac{\mathbbm{I}\left(G;\mathbf{A}\big|\mathbf{B}\right)+\ln 2}{\mathbbm{H}\left({\mathcal{G}_\nn}\right)}
\end{align}
where the randomness is over the selections of the original graph $G$ and the estimated graph $\hat{G}$.
\end{lemma}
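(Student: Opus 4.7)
The plan is to prove this as a conditional analogue of Fano's inequality, treating the generator matrix $\mathbf{B}$ as side information available to the estimator. Let $E := \mathbbm{1}[\hat{G}\neq G]$ be the binary error indicator and abbreviate $P_e := \mathbbm{P}(\hat{G}\neq G\mid \mathbf{B})$.

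The first step is a two-way chain-rule expansion of $\mathbbm{H}(G, E \mid \hat{G}, \mathbf{B})$. Because $E$ is a deterministic function of $(G, \hat{G})$, one has $\mathbbm{H}(E\mid G, \hat{G}, \mathbf{B}) = 0$, which forces
\begin{align*}
\mathbbm{H}(G \mid \hat{G}, \mathbf{B}) = \mathbbm{H}(E \mid \hat{G}, \mathbf{B}) + \mathbbm{H}(G \mid E, \hat{G}, \mathbf{B}).
\end{align*}
I would bound each piece separately. Since $E$ is binary, $\mathbbm{H}(E \mid \hat{G}, \mathbf{B}) \leq \mathbbm{H}(E) \leq \ln 2$. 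Expanding the second piece over the two values of $E$ and using that $G = \hat{G}$ on $\{E = 0\}$, one obtains $\mathbbm{H}(G \mid E, \hat{G}, \mathbf{B}) = P_e\,\mathbbm{H}(G \mid E{=}1, \hat{G}, \mathbf{B})$. On $\{E=1\}$ the conditional distribution of $G$ is supported on $\mathsf{C}(\nn)\setminus\{\hat{G}\}$, and I would bound its entropy by $\mathbbm{H}(\mathcal{G}_\nn)$. Combining,
\begin{align*}
\mathbbm{H}(G \mid \hat{G}, \mathbf{B}) \leq \ln 2 + P_e\,\mathbbm{H}(\mathcal{G}_\nn).
\end{align*}

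The second step is a data processing argument. Since $\hat{G}$ is a measurable function of $(\mathbf{A}, \mathbf{B})$, one has $\mathbbm{H}(G\mid \mathbf{A}, \mathbf{B}) \leq \mathbbm{H}(G\mid \hat{G}, \mathbf{B})$. The matrix $\mathbf{B}$ is drawn independently of $G$ by construction, so $\mathbbm{H}(G\mid \mathbf{B}) = \mathbbm{H}(\mathcal{G}_\nn)$ and consequently $\mathbbm{H}(G\mid \mathbf{A}, \mathbf{B}) = \mathbbm{H}(\mathcal{G}_\nn) - \mathbbm{I}(G;\mathbf{A}\mid \mathbf{B})$. Chaining with the previous inequality gives
\begin{align*}
\mathbbm{H}(\mathcal{G}_\nn) - \mathbbm{I}(G;\mathbf{A}\mid \mathbf{B}) \leq \ln 2 + P_e\,\mathbbm{H}(\mathcal{G}_\nn),
\end{align*}
which after dividing by $\mathbbm{H}(\mathcal{G}_\nn)$ and rearranging produces the claimed lower bound.

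The main obstacle is justifying the bound $\mathbbm{H}(G \mid E{=}1, \hat{G}, \mathbf{B}) \leq \mathbbm{H}(\mathcal{G}_\nn)$ for an arbitrary prior. For the uniform law on $\mathsf{C}(\nn)$ it is immediate, since the conditional distribution is supported on a set of size $|\mathsf{C}(\nn)|-1$ and has entropy at most $\log(|\mathsf{C}(\nn)|-1) \leq \log|\mathsf{C}(\nn)| = \mathbbm{H}(\mathcal{G}_\nn)$. For a non-uniform prior the cleanest alternative is to first prove the analogous bound with $\log|\mathsf{C}(\nn)|$ in place of $\mathbbm{H}(\mathcal{G}_\nn)$ directly from standard Fano, and then appeal to the maximum-entropy inequality $\mathbbm{H}(\mathcal{G}_\nn) \leq \log|\mathsf{C}(\nn)|$ together with a case split on the sign of the numerator $\mathbbm{H}(\mathcal{G}_\nn) - \mathbbm{I}(G;\mathbf{A}\mid\mathbf{B}) - \ln 2$, which makes the stated bound vacuous in the regime where the replacement would otherwise fail.
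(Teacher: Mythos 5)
Your derivation is the textbook conditional Fano argument, and every step is sound except the one you yourself flag as the main obstacle: the bound $\mathbbm{H}\left(G\mid E{=}1,\hat{G},\mathbf{B}\right)\leq \mathbbm{H}\left(\mathcal{G}_\nn\right)$. That step is false in general --- conditioning on the error event can inflate the entropy of $G$ above its unconditional value --- and neither of your proposed repairs closes the gap. The fallback via $\log\left|\mathsf{C}(\nn)\right|$ points in the wrong direction: standard Fano gives $P_e\geq \left(\mathbbm{H}\left(\mathcal{G}_\nn\right)-\mathbbm{I}\left(G;\mathbf{A}\mid\mathbf{B}\right)-\ln 2\right)/\log\left|\mathsf{C}(\nn)\right|$, and since $\mathbbm{H}\left(\mathcal{G}_\nn\right)\leq\log\left|\mathsf{C}(\nn)\right|$, the target bound with $\mathbbm{H}\left(\mathcal{G}_\nn\right)$ in the denominator is \emph{stronger} than this precisely when the numerator is positive --- that is, precisely in the non-vacuous regime. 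Your case split only disposes of the regime where there was nothing to prove. In fact the inequality (\ref{3.2}) as stated can fail outright for non-uniform priors: take $\left|\mathsf{C}(\nn)\right|=17$ with $\mathbbm{P}(G=g_0)=1/2$ and $\mathbbm{P}(G=g_i)=1/32$ for $i=1,\dots,16$, let $\mathbf{A}$ carry no information about $G$ (so $\mathbbm{I}\left(G;\mathbf{A}\mid\mathbf{B}\right)=0$), and take the constant estimator $\hat{G}\equiv g_0$. Then $\mathbbm{H}\left(\mathcal{G}_\nn\right)=3\ln 2$ and $P_e=1/2$, while the right-hand side of (\ref{3.2}) equals $1-\tfrac{\ln 2}{3\ln 2}=2/3$. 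So the gap is not a missing technical estimate; no proof at this level of generality can succeed.

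For context, the paper itself does not prove this lemma either --- it imports it from prior work "for the sake of completion." In those sources, and in this paper's own converse arguments for Theorem~\ref{thm:4} and Corollary~\ref{coro:1}, the prior is uniform on the candidacy set, in which case $\mathbbm{H}\left(\mathcal{G}_\nn\right)=\log\left|\mathsf{C}(\nn)\right|$ and your argument collapses to classical conditional Fano and is correct. If you want the $\mathbbm{H}\left(\mathcal{G}_\nn\right)$ denominator for a genuinely non-uniform prior such as $\mathcal{G}_{\mathrm{ER}}(\nn,p)$ (as used in Corollary~\ref{coro:2}), the standard route is different from yours: restrict $G$ to a typical set of roughly $e^{\mathbbm{H}(\mathcal{G}_\nn)(1+o(1))}$ graphs carrying almost all of the probability, apply ordinary Fano uniformly on that set, and absorb the atypical event into the error probability. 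Your write-up should either add a uniformity hypothesis to the lemma or switch to that typicality argument; as it stands the second step of your "cleanest alternative" does not go through.
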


In (\ref{3.2}), the term $\mathbbm{I}\left(G;\mathbf{B}\big|\mathbf{A}\right)$ denotes the conditional mutual information (base $e$) between $G$ and $\mathbf{B}$ conditioned on $\mathbf{A}$.
% which is defined as
% \begin{align*}
% &\mathbbm{I}\left(G;\mathbf{A}\big | \mathbf{B}\right):=\sum_{G\in\mathsf{C}(\nn)}\int_{\mathbf{B},\mathbf{A}}p\left(\mathbf{B},\mathbf{A},G\right)\ln\frac{p\left(\mathbf{A}|\mathbf{B},G\right)}{p\left(\mathbf{A}|\mathbf{B}\right)}\mathrm{d}\mathbf{B}\mathrm{d}\mathbf{A}
% \end{align*}
% where the integrals are both taken over $\mathbbm{F}^{\nn\times\ns}$. 
Furthermore, the conditional mutual information $\mathbbm{I}\left(G;\mathbf{A}|\mathbf{B}\right)$ is bounded from above by the differential entropies of $\mathbf{A}$ and $\mathbf{B}$. It follows that
\begin{align}
\label{3.8}
\mathbbm{I}\left(G;\mathbf{A}|\mathbf{B}\right)&= \mathbbm{H}\left(\mathbf{A}|\mathbf{B}\right)-\mathbbm{H}\left(\mathbf{A}|G,\mathbf{B}\right)\\
\label{3.9}
&\leq  \mathbbm{H}\left(\mathbf{A}|\mathbf{B}\right)-\mathbbm{H}\left(\mathbf{A}|\mathbf{Y},\mathbf{B}\right)\\
\label{3.10}
&= \mathbbm{H}\left(\mathbf{A}|\mathbf{B}\right)-\mathbbm{H}\left(\mathbf{Z}\right)\\
\label{3.11}
&\leq \mathbbm{H}\left(\mathbf{A}\right)-\mathbbm{H}\left(\mathbf{Z}\right).
\end{align}
%
%Noting that the measurement matrices $\mathbf{A}$ and $\mathbf{B}$ defined in Section~\ref{CM} consists of $\ns$ IID random vectors $I$ and $V$ with distributions $p_{I}$ and $p_V$ respectively, it follows that
%\begin{align}
%\label{3.4}
%\mathbbm{H}\left(\mathbf{A}\right)&=\ns\mathbbm{H}\left(I\right),\\
%\label{3.9}
%\mathbbm{H}\left(\mathbf{B}\right)&=\ns\mathbbm{H}\left(V\right).
%\end{align}
Here, Eq.~(\ref{3.8}) follows from the definitions of mutual information and differential entropy. Moreover, knowing $\mathbf{Y}$, the graph $G$ can be inferred. Thus, $\mathbbm{H}\left(\mathbf{A}|G,\mathbf{B}\right)\geq \mathbbm{H}\left(\mathbf{A}|\mathbf{Y},\mathbf{B}\right)$ yields (\ref{3.9}). Recalling the linear system in (\ref{eq:linear}), we obtain (\ref{3.10}). Furthermore, (\ref{3.11}) holds since $\mathbbm{H}\left(\mathbf{A}\right)\geq \mathbbm{H}\left(\mathbf{A}|\mathbf{B}\right)$.

Plugging (\ref{3.11}) into (\ref{3.2}),
\begin{align*}
\poet=&\mathbbm{E}_{\mathbf{B}}\left[\mathbbm{P}\left(\hat{G}\neq G\big| \mathbf{B}\right)\right]\\
\geq& 1-\frac{ \mathbbm{H}\left(\mathbf{A}\right)-\mathbbm{H}\left(\mathbf{Z}\right)+\ln 2}{\mathbbm{H}\left({\mathcal{G}_\nn}\right)},
\end{align*}
which yields the desired (\ref{3.0}).
\end{proof}

\section{Proof of Theorem~\ref{thm:3}}
\label{app:proof_achieve}
Conditioning on that no less than $\nn-{K}$ many columns are recovered with respect to the $\Gamma$-probability of error, \textit{i,e.,} for each entry, the absolute value of the difference between the recovered one and the original one is bounded from above by $\gamma$, the union bound ensures the desired bound on the probability of error for noisy parameter reconstruction. It remains to show that the consistency-check in our scheme gives the expression for $\eta$. First, if no less than $\nn-{K}$ many columns are recovered, there must be a subset $\mathcal{S}\subseteq\mathcal{V}$ passing through the consistency-check. Let us consider the vectors that are not $\mu$-sparse. For any such vector $Y^*\in\mathbbm{R}^{\nn}$, denote by $e=Y^*-Y'$ the difference of $Y^*$ and the original vector $Y'$. It follows that $e$ can be decomposed as a summation of a $2\nk$-sparse vector $\overline{e}\in\mathbbm{R}^{\nn}$ and a vector $f\in\mathbbm{R}^{\nn}$ that satisfies $|f_i|\leq 2\Gamma$ for all $i\in\mathcal{V}$. Therefore, the definition of restricted isometry constants ensures the following:
\begin{align*}
    ||e||_2 \leq& ||\overline{e}||_2+||f||_2\\
    \leq&\frac{1}{1-\delta_{2\nk}} ||\mathbf{B}\overline{e}||_2+ 2\nn\Gamma\\
    \leq&\frac{1}{1-\delta_{2\nk}} ||\mathbf{B}e||_2+ \left(2\nn+\frac{2||\mathbf{B}||_2}{1-\delta_{2\nk}} \right)\Gamma
\end{align*}
which can be further bounded by noting that
\begin{align*}
||\mathbf{B}e||_2 =& ||(\mathbf{B}Y^*-A)-(\mathbf{B}Y'-A)||_2\leq 2\gamma
\end{align*}
since both $Y'$ and $Y^*$ satisfy (\ref{4.0}) where $A$ is a column of $\mathbf{A}$. Thus, the consistency-check guarantees that for each $j$ in the set $\mathcal{S}\subseteq\mathcal{V}$ that passes the check,
\begin{align*}
||X_j-Y_j||_2\leq 2\left(\nn+\frac{||\mathbf{B}||_2}{1-\delta_{2\nk}} \right)\Gamma+\frac{2\gamma}{1-\delta_{2\nk}}.
\end{align*}
Consider the reduced linear system in (\ref{4.3}). For each $j$ in the set $\overline{\mathcal{S}}\subseteq\mathcal{V}$, 
\begin{align*}
   ||X^{\overline{\mathcal{S}}}_j-Y^{\overline{\mathcal{S}}}_j||_2 \leq & \left|\left|\left(\mathbf{B}^{\mathcal{K}}_{\overline{\mathcal{S}}}\right)^{-1}\right|\right|_2 \left|\left|\mathbf{B}_{\mathcal{S}}(X_j^{\mathcal{S}}-Y_j^{\mathcal{S}})\right|\right|_2\\
   \leq &\left|\left|\left(\mathbf{B}^{\mathcal{K}}_{\overline{\mathcal{S}}}\right)^{-1}\right|\right|_2 \left|\left|\mathbf{B}_{\mathcal{S}}\right|\right|_2\left|\left|X_j^{\mathcal{S}}-Y_j^{\mathcal{S}}\right|\right|_2.
\end{align*}
% Using classical perturbation theory (see for example, Theorem~7.2 in~\cite{higham2002accuracy}) and 
Summing up the bounds on the $\ell_2$ norms for each column and considering the worst case of the invertible matrix $\mathbf{B}^{\mathcal{K}}_{\overline{\mathcal{S}}}$, the bound $\eta$ on the Frobenius norm follows by arranging the terms.
% \begin{lemma}[Consistency-check]
% \label{lemma:detect}
% % Suppose the matrix $\mathbf{B}$ has restricted isometry constants $\sigma_{3{\mu}}$ and $\sigma_{4{\mu}}$ satisfying $\sigma_{3{\mu}}+3\sigma_{4{\mu}}< 2$. 
% Suppose all the $\mu$-sparse columns of $\mathbf{Y}$ are correctly recovered.
% Furthermore, suppose $\mathrm{spark}(\mathbf{B})>2K$. If $G\in\mathsf{C}(\nn,{\mu},{K})$, then the collection of columns $\{X_j\}_{j\in\mathcal{S}}$  passing the consistency-check such that $X_{i,j} =X_{j,i}$ for all  $i,j\in\mathcal{S}$,
% are correctly decoded and together with (\ref{4.3}), the three-stage scheme always returns the original (correct) graph matrix.
%	with probability at least
%	\begin{align*}
%	\mathbbm{P}\left(\mathbf{Y}=\mathbf{X}|\mathsf{C}(\nn)({\mu},{K})\right)\geq 1 - \frac{\ln \nn^2}{\nn}.
%	\end{align*}
% \end{lemma}

\section{{Proof of Corollary~\ref{corollary:noiseless}}}
\label{app:proof_noiseless}
\begin{proof}
Conditioned on $G\in\mathsf{C}(\nn)({\mu},{K})$ and the assumption $\delta_{3{\mu}}+3\delta_{4{\mu}}< 2$, there are no less than $\nn-{K}$ many columns correctly recovered. Therefore, any such set $\mathcal{S}$ with $\left|\mathcal{S}\right|=\nn-{K}$ must contain at least $\nn-2K$ many corresponding indexes of the correctly recovered columns. The consistency-checking verifies that if the collection of an arbitrary set of nodes $\mathcal{S}$ of cardinality $\nn-{K}$ satisfies the symmetry property as the true graph $\mathbf{Y}$ must obey. If the consistency-checking fails, it is necessary that there exist two distinct length-$\nn$ vectors $Y'$ and $Y^*$ in $\mathbbm{F}^{\nn}$ such that $Y^*$ is the minimizer of the $\ell_1$-minimization (\ref{4.1})-(\ref{4.2}) that differs from the correct answer $Y'$, \textit{i.e.,} $Y'\neq Y^*$ where $A=\mathbf{B}Y'$ and 
\begin{align*}
Y^*=\argmin_{Y} & \ \left|\left|Y\right|\right|_{1}\\
\mathrm{subject}\ \mathrm{to} \ &A=\mathbf{B}Y\\
&Y\in\mathbbm{F}^n
\end{align*}
for some $A\in\mathbbm{F}^\ns$ and furthermore, the vectors $Y'$ and $Y^*$ can have at most  $2K$ distinct coordinates,
\begin{align*}
\left|\mathrm{supp}\left(Y'-Y^*\right)\right|\leq 2K.
\end{align*}

However, the constraints $\mathbf{B}Y'=A$ and $\mathbf{B}Y^*=A$ imply that $\mathbf{B}\left(Y'-Y^*\right)=0$, contradicting to $\mathrm{spark}(\mathbf{B})>2K$. Therefore, $\nn-{K}$ many columns can be successfully recovered if the decoded solution passes the consistency-checking. Moreover, since $\mathrm{spark}(\mathbf{B})>2K$ and number of unknown coordinates in each length-${K}$ vector $X_{j}^{\overline{\mathcal{S}}}$ (for $j=1,\ldots,|\overline{\mathcal{S}}|$) to be recovered is ${K}$, the solution of the system~(\ref{4.3}) is guaranteed to be unique. Thus, Algorithm~\ref{alg:1} always recovers the correct columns $Y_1,\ldots,Y_N$ conditioned on $\mathrm{spark}(\mathbf{B})>2K$. It follows that
$\poe\leq 1-\mathbbm{P}_{{\mathcal{G}}}(G\in\mathsf{C}(\nn,{\mu},{K}))$
provided $\mathrm{spark}(\mathbf{B})>2K$. In agreement with the assumption that the distribution $\mathcal{G}$ is $({\mu},{K},\rho)$-sparse, (\ref{eq:rec}) must be satisfied. Therefore, the probability of error must be less than $\rho$.
\end{proof}

\section{{Proof of Lemma~\ref{lemma:trees}}}
\label{app:proof_trees}

\begin{proof}
Consider the following function
\begin{align*}
F(\mathcal{E}) = \sum_{j=1}^{\nn} f(d_j(G))
\end{align*}
where $d_j(G)$ denotes the degree of the $j$-th node and consider the following indicator function:
\begin{align*}
f(d_j(G)) := \begin{cases}
1 \quad &\text{if } d_j(G)> {\mu}\\
0 & \text{otherwise}
\end{cases}.
\end{align*}

Applying the Markov's inequality,
\begin{align}
\nonumber
\mathbbm{P}\left(G\notin\mathsf{T}(\nn)({\mu},{K})\right)&=\mathbbm{P}_{\mathcal{U}_{\mathsf{T}(\nn)}}\left(F(\mathcal{E})\geq {K}\right)\\
\label{eq:a.1}
&\leq \frac{\mathbbm{E}_{\mathcal{U}_{\mathsf{T}(\nn)}}\left[F(\mathcal{E})\right]}{{K}}.
\end{align}
Continuing from~(\ref{eq:a.1}), the expectation $\mathbbm{E}_{\mathcal{U}_{\mathsf{T}(\nn)}}\left[F(\mathcal{E})\right]$ can be further expressed and bounded as
\begin{align}
\nonumber
\mathbbm{E}_{\mathcal{U}_{\mathsf{T}(\nn)}}\left[F(\mathcal{E})\right] &=
\sum_{j=1}^{\nn}\mathbbm{E}_{\mathcal{U}_{\mathsf{T}(\nn)}}\left[f(d_j(G))\right]\\
\label{eq:a.10}
& = \sum_{j=1}^{\nn} \mathbbm{P}_{\mathcal{U}_{\mathsf{T}(\nn)}}\left(d_j(G)> {\mu}\right).
\end{align}
Since $G$ is chosen uniformly at random from $\mathsf{T}(\nn)$, it is equivalent to selecting its corresponding Pr\"{u}fer sequence (by choosing $\nn-2$ integers independently and uniformly from the set $\mathcal{V}$, \textit{c.f.} ~\cite{kajimoto2003extension}) and the number of appearances of each $j\in\mathcal{V}$ equals to $d_j(G)-1$. Therefore, for any fixed node $j\in\mathcal{V}$, the Chernoff bound implies that 
\begin{align}
\label{eq:a.12}
\mathbbm{P}_{\mathcal{U}_{\mathsf{T}(\nn)}}\left(d_j(G)> {\mu}\right)\leq \exp\left(-(\nn-2) \mathbbm{D}_{\mathrm{KL}}\left(\frac{{\mu}}{\nn-2}\big|\big| \frac{1}{\nn}\right)\right)
\end{align}
where $\mathbbm{D}_{\mathrm{KL}}(\cdot || \cdot)$ is the Kullback-Leibler divergence of Bernoulli distributions and
\begin{align}
\label{eq:a.11}
\mathbbm{D}_{\mathrm{KL}}\left(\frac{{\mu}}{\nn-2}\big|\big| \frac{1}{\nn}\right)\geq 3\frac{\left(\frac{\nn\mu-(\nn-2)}{\nn(\nn-2)}\right)^2}{\frac{2}{\nn}+\frac{\mu}{\nn-2}}\geq 3\frac{(\mu-1)^2}{(\nn-2)(\mu + 2)}.
\end{align}

Therefore, substituting (\ref{eq:a.11}) back into (\ref{eq:a.12}) and combining (\ref{eq:a.1}) and (\ref{eq:a.10}), setting ${\mu}\geq 1 + \ln \nn$ leads to
\begin{align*}
\mathbbm{P}\left(G\notin\mathsf{T}(\nn)({\mu},{K})\right)\leq \frac{\nn}{K} \exp\left(-3\frac{({\mu}-1)^2}{\mu+2}\right) \leq \frac{1}{K}.
\end{align*}
\end{proof}

\section{{Proof of Lemma~\ref{lemma:renyi}}}
\label{app:proof_renyi}
\begin{proof}
For any fixed node $j\in\mathcal{V}$, applying the Chernoff bound,
\begin{align*}
\mathbbm{P}_{\mathcal{G}_{\mathrm{ER}}(\nn,p)}\left(d_j(G)> {\mu}\right)\leq \exp\left(-\nn \mathbbm{D}_{\mathrm{KL}}\left(\frac{{\mu}}{\nn}\big|\big| p\right)\right).
\end{align*}

Continuing from~(\ref{eq:a.1}), the expectation $\mathbbm{E}_{\mathcal{G}_{\mathrm{ER}}(\nn,p}\left[F(\mathcal{E})\right]$ can be further expressed and bounded as
\begin{align}
\mathbbm{E}_{\mathcal{G}_{\mathrm{ER}}(\nn,p)}\left[F(\mathcal{E})\right] 
\label{eq:a.3}
& \leq \nn \cdot \exp\left(-\nn \mathbbm{D}_{\mathrm{KL}}\left(\frac{{\mu}}{\nn}\big|\big| p\right)\right)
\end{align}
where the probability $p$ satisfies $0<p\leq {\mu}/\nn<1$. Note that
\begin{align}
\label{eq:a.2}
\mathbbm{D}_{\mathrm{KL}}\left(\frac{{\mu}}{\nn}\big|\big| p\right)= \frac{{\mu}}{\nn}\ln \frac{1}{p}+\left(1-\frac{{\mu}}{\nn}\right)\ln \frac{1}{1-p} - h(p)
\end{align}
where the binary entropy $h(p)$ is in base $e$.
Taking ${\mu}\geq{2\nn h(p)}/{(\ln 1/p)}\geq 2\nn p$, substituting (\ref{eq:a.2}) into (\ref{eq:a.3}) leads to
\begin{align*}
\mathbbm{E}_{\mathcal{G}_{\mathrm{ER}}(\nn,p)}\left[F(\mathcal{E})\right] \leq \nn\exp\left(-\nn h(p)\right).
\end{align*}
Therefore, (\ref{eq:a.1}) gives 
\begin{align*}
\mathbbm{P}\left(G\notin\mathsf{C}(\nn)({\mu},{K})\right)\leq \frac{\nn\exp\left(-\nn h(p)\right)}{{K}}.
\end{align*}
% Noticing that $\nn h(p)=\omega\left(\log(\nn/{K})\right)$ completes the proof.
\end{proof}

\section{Proof of Lemma~\ref{lemma:2}}
\label{app:proof_of_Gaussian_converse}
\begin{proof}
Continuing from Theorem~\ref{thm:1}, 
\begin{align}
\nonumber
&\mathbbm{H}(\mathbf{A}) - \mathbbm{H}(\mathbf{Z})\\
\nonumber
=& \sum_{i=1}^{\ns}\left[\mathbbm{H}\left({A}^{(i)}\right) - \mathbbm{H}\left({Z}^{(i)}\right)\right]\\
\label{eq:5.1}
\overset{\mathrm{(a)}}{\leq}&
\sum_{i=1}^{\ns}\frac{\nn}{2}\left[\ln\left(2\pi e \frac{\mathrm{Tr}\left(\Sigma_{\mathbf{A^{(i)}}}\right)}{\nn}\right)-\ln(2\pi e \sigma^2_{\mathrm{N}})\right]
\end{align}
where $\mathrm{Tr}\left(\Sigma_{\mathbf{A^{(i)}}}\right)$ is the trace of the covariance matrix of $\mathbf{A^{(i)}}$ and we have used the fact that normal distributions maximize entropy and the inequality $\mathrm{det}(\Sigma_{\mathbf{A^{(i)}}})\leq (\mathrm{Tr}\left(\Sigma_{\mathbf{A^{(i)}}}\right)/\nn)^{\nn}$ to obtain (a). Note that because of the assumption of independence, the trace is bounded from above by $\nn\sigma^2_{\mathrm{S}}\overline{Y}+\nn\sigma^2_{\mathrm{N}}$ where $\overline{Y}:=\max_{i,j}|Y_{i,j}|$. Substituting this into (\ref{eq:5.1}) completes the proof. The special case when $\mathbf{Z}=0$ follows similarly.
\end{proof}

\section{{Proof of Theorem~\ref{thm:4}}}
\label{app:proof_sample}
\begin{proof}
The first part is based on Corollary~\ref{corollary:noiseless}. Under the assumption of the generator matrix $\mathbf{B}$, using Gordon's escape-through-the-mesh theorem, Theorem $4.3$ in~\cite{rudelson2008sparse} implies that for any columns $Y_j$ with $j\in\mathcal{V}_{\mathrm{Small}}$ are correctly recovered using the minimization in (\ref{4.1})-(\ref{4.2}) with probability at least $1-2.5\exp\left(-(4/9){\mu}\log(\nn/{\mu})\right)$, as long as the number of measurements satisfies $\ns\geq 48{\mu}\left(3+2\log(\nn/{\mu})\right)$, and $\nn/{\mu}>2, {\mu}\geq 4$ (if ${\mu}\leq 3$, the multiplicative constant increases but our theorem still holds). Similar results were first proved by Candes, \textit{et al.} in~\cite{candes2005error} (see their Theorem $1.3$). Therefore, applying the union bound, the probability that all the ${\mu}$-sparse columns can be recovered simultaneously is at least $1-2.5\nn\exp\left(-(4/9){\mu}\log(\nn/{\mu})\right)$. On the other hand, conditioned on that all the ${\mu}$-sparse columns are recovered, Corollary~\ref{corollary:noiseless} indicates that $\mathrm{spark}(\mathbf{B})>2K$ is sufficient for the three-stage scheme to succeed. Since each entry in $\mathbf{B}$ is an IID Gaussian random variable with zero mean and variance one, if $\ns\geq 48{\mu}\left(3+2\log(\nn/{\mu})\right)+2K$, with probability one that the spark of $\mathbf{B}$ is greater than $2K$, verifying the statement.

%~\cite{donoho2009counting,rudelson2008sparse}

The converse follows by applying Lemma~\ref{lemma:2} with $\mathbf{Z}=0$. Consider the uniform distribution $\mathcal{U}_{\mathsf{C}(\nn)({\mu},{K})}$ on $\mathsf{C}(\nn)({\mu},{K})$.
Then $\mathbbm{H}\left(\mathcal{U}_{\mathsf{C}(\nn)({\mu},{K})}\right)=\ln\left|\mathsf{C}(\nn)({\mu},{K})\right|$. Let $0\leq\alpha,\beta\leq 1$ be parameters such that ${\mu}<\beta(\nn-\alpha {K})$. To bound the size of $\mathsf{C}(\nn)({\mu},{K})$, we partition $\mathcal{V}$ into $\mathcal{V}_{1}$ and $\mathcal{V}_{2}$ with $|\mathcal{V}_{1}|=\nn-\alpha {K}$ and $|\mathcal{V}_{2}|=\alpha {K}$. First, we assume that the nodes in $\mathcal{V}_{1}$ form a ${\mu}/2$-regular graph. For each node in $\mathcal{V}_{2}$, construct $\beta(\nn-\alpha {K})\in\mathbbm{N}_{+}$ edges and connect them to the other nodes in $\mathcal{V}$ with uniform probability. A graph constructed in this way always belongs to $\mathsf{C}(\nn)({\mu},{K})$, unless the added edges create more than ${K}$ nodes with degrees larger than ${\mu}$. Therefore, as $\nn\rightarrow\infty$,
\begin{align}
\label{eq:change1}
\left|\mathsf{C}(\nn)({\mu},{K})\right|\geq&  \rho\cdot\frac{e^{1/4}\displaystyle\binom{\,N-1}{\phi}^{N}{\displaystyle\binom{\binom{N}{2}}{\phi N/2}}}{\displaystyle\binom{N(N-1)}{\phi N}}
\cdot\binom{\nn-1}{M}^{\alpha {K}}
\end{align}
where $N:=\nn-\alpha {K}$, $M:=\beta(\nn-\alpha {K})$ and $\phi:={\mu}/2$. The first term $\rho$ denotes the fraction of the constructed graphs that are in $\mathsf{C}(\nn)({\mu},{K})$. The second term in (\ref{eq:change1}) counts the total number of $\phi$-regular graphs~\cite{liebenau2017asymptotic}, and the last term is the total number of graphs created by adding new edges for the nodes in $\mathcal{V}_2$. If ${K}=O({\mu})$, there exists a constant $\alpha>0$ small enough such that $\rho=1$. If ${\mu}=o({K})$, for any fixed node in $\mathcal{V}_1$, the probability that its degree is larger than ${\mu}$ is
\begin{align*}
&\sum_{i=\phi+1}^{\alpha {K}}\binom{\alpha {K}}{i}\beta^{i}(1-\beta)^{\alpha {K}-i}\\
\leq&
\sum_{i=\phi+1}^{\alpha {K}}\alpha {K} h\left(\frac{i}{\alpha {K}}\right)\beta^i
\leq (\alpha {K})^2 \beta^{\phi+1}
\end{align*}
where $h({i}/{\alpha {K}})$ is in base $e$.
Take $\beta=\nn^{-3/{\mu}}$ and $\alpha=1/2$. The condition ${\mu}<\nn^{-3/{\mu}}(\nn-{K})$ guarantees that ${\mu}<\beta(\nn-\alpha {K})$. Letting $\digamma(\nn):=1/\nn$ be the assignment function for each node in $\mathcal{V}_1$, we check that
\begin{align*}
(\alpha {K})^2 \beta^{\phi+1}\leq \frac{1}{4\nn} \leq \digamma(\nn)\cdot \left(1-\frac{1}{\digamma(\nn)}\right)^{N}\leq\frac{1}{e\nn}.
\end{align*}
Therefore, applying the Lov\'{a}sz local lemma, the probability that all the nodes in $\mathcal{V}_1$ have degree less than or equal to $\mu$ can be bounded from below by $ \left(1-\digamma(\nn)\right)^N \geq 1/4$ if $\nn\geq 2$, which furthermore is a lower bound on $\rho$. Therefore,
taking the logarithm,
\begin{align}
\nonumber
\mathbbm{H}\left(\mathcal{U}_{\mathsf{C}(\nn)({\mu},{K})}\right)\geq& \frac{(N-1)^2}{2}h(\varepsilon)-O(N\ln{\mu})\\
\label{4.30}
+\frac{{K}}{2}\bigg((\nn-1)&h\left(\frac{M}{\nn-1}\right)-O(\ln\nn)\bigg)-O(1)\\
\label{4.4}
=&  \Omega\left(\nn^2 h(\varepsilon)+\nn^{1-3/{\mu}} {K}\right)
\end{align}
where $\varepsilon:=\phi/(N-1)\leq 1/2$. In (\ref{4.30}), we have used Stirling's approximation and the assumption that ${K}=o(\nn)$.
Continuing from (\ref{4.4}), since $2\nn h(\varepsilon)\geq {\mu}\ln(\nn/{\mu})$, for sufficiently large $\nn$,
\begin{align}
\label{4.5}
\mathbbm{H}\left(\mathcal{U}_{\mathsf{C}(\nn)({\mu},{K})}\right)= \Omega\left(\nn{\mu}\log\frac{\nn}{{\mu}}+\nn^{1-3/{\mu}} {K}\right).
\end{align}
Substituting (\ref{4.5}) into (\ref{4.20}), when $\nn\rightarrow\infty$, it must hold that $$\ns=\Omega\left({\mu}\log({\nn}/{{\mu}})+{K}/\nn^{3/{\mu}}\right)$$ to ensure that $\poe$ is smaller than $1/2$.
\end{proof}

\section{Proof of Theorem~\ref{thm:noisy_sample_complexity}}
\label{app:proof_noisy}
The structure of the proof is the same as Theorem~\ref{thm:4}. The converse follows directly by putting the bounds in (\ref{4.5}) and (\ref{eq:noisy_poe}) together. For proving the achievability, it is sufficient to show that with high probability (in $\nn$), $|Y_{i,j}-X_{i,j}|=o(1)$ for all $i,j\in\mathcal{V}$ where $X_{i,j}$ and $Y_{i,j}$ are the recovered and original $(i,j)$-th entry of the graph matrix. For the Gaussian IID ensemble considered, the $\ell_2$-norm of the inverse matrix $(\mathbf{B}^{\mathcal{K}}_{\overline{\mathcal{S}}})^{-1}$, equivalently, the minimal singular value of $\mathbf{B}^{\mathcal{K}}_{\overline{\mathcal{S}}}$ is strictly positive with probability $o(1)$ (see the proof of Lemma III-9 in~\cite{aeron2010information}). Using the Chernoff bound, with high probability,
\begin{align}
\label{eq:10.1}
    ||\mathbf{B}||^2_2\leq&||\mathbf{B}||_{\mathrm{F}}^2\leq C_1\nn\ns\sigma^2_{\mathrm{S}}, \\
    \label{eq:10.2}
    ||Z_j||^2_2\leq& C_2\nn\sigma^2_{\mathrm{N}}, \text{ for all } j\in\mathcal{V}
\end{align}
for some positive constants $C_1$ and $C_2$.
Noting that if $\nk\leq\mu$, then $\delta_{2\nk}<1$ with high probability, the bound in (\ref{eq:10.1}) and the bound on the $\ell_2$-norm of the inverse matrix $(\mathbf{B}^{\mathcal{K}}_{\overline{\mathcal{S}}})^{-1}$ imply $\eta=O(\nn^2\gamma)$, by applying our Theorem~\ref{thm:3}. Moreover, with Gaussian measurements, for each $\mu$-sparse vector $Y_j$ in $\mathbbm{R}^\nn$, $||X_j-Y_j||_2\leq C_3||Z_j||_2$ for some constant $C_3>0$ (\textit{cf.} Theorem 1 in~\cite{candes2006stable}) where $Y_j$ satisfies $\mathbf{B}Y_j+Z_j=A_j$ and $X_j$ is the optimal solution of (\ref{4.1})-(\ref{4.2}) (with $\mathbbm{F}\equiv\mathbbm{R}$). Therefore, $\Gamma=O(\gamma)$ and $\gamma=O(\sqrt{\nn}\sigma_{\mathrm{N}})$ using (\ref{eq:10.2}). Since $\eta=O(\nn^2\gamma)$, the condition $\sigma_{\mathrm{N}}=o(1/\nn^{5/2})$ guarantees that $\eta=o(1)$, whence $|Y_{i,j}-X_{i,j}|=o(1)$ for all $i,j\in\mathcal{V}$ and the proof is complete.

% use section* for acknowledgment
% \section*{Acknowledgment}

% The authors would like to thank...

% Can use something like this to put references on a page
% by themselves when using endfloat and the captionsoff option.

% trigger a \newpage just before the given reference
% number - used to balance the columns on the last page
% adjust value as needed - may need to be readjusted if
% the document is modified later
%\IEEEtriggeratref{8}
% The "triggered" command can be changed if desired:
%\IEEEtriggercmd{\enlargethispage{-5in}}

% references section

% can use a bibliography generated by BibTeX as a .bbl file
% BibTeX documentation can be easily obtained at:
% http://mirror.ctan.org/biblio/bibtex/contrib/doc/
% The IEEEtran BibTeX style support page is at:
% http://www.michaelshell.org/tex/ieeetran/bibtex/
%\bibliographystyle{IEEEtran}
% argument is your BibTeX string definitions and bibliography database(s)
%\bibliography{IEEEabrv,../bib/paper}
%
% <OR> manually copy in the resultant .bbl file
% set second argument of \begin to the number of references
% (used to reserve space for the reference number labels box)

% Can be used to pull up biographies so that the bottom of the last one
% is flush with the other column.
%\enlargethispage{-5in}

% that's all folks
\end{document}